\documentclass[a4paper,10pt,onecolumn]{IEEEtran}

\usepackage[utf8]{inputenc} 
\usepackage{hyperref}       
\usepackage{url}            
\usepackage{booktabs}       
\usepackage{amsfonts}       
\usepackage{nicefrac}       
\usepackage{xcolor}         
\usepackage{glossaries}
\usepackage{centernot}
\usepackage{aligned-overset}
\usepackage{amsthm}
\usepackage{bbm}
\usepackage{mathrsfs}
\usepackage{makecell}
\usepackage{bbm}
\usepackage{tikz}
\usepackage{pgfplots}
\usepackage{subcaption}

\newtheorem{theorem}{Theorem}
\newtheorem{lemma}{Lemma}

\newtheorem{definition}{Definition}
\newtheorem{remark}{Remark}
\newtheorem{problem}{Problem}

\newtheorem{example}{Example}

\newcommand{\noiselessSignal}{X}
\newcommand{\noisySignal}{\tilde{X}}
\newcommand{\estSignal}{\hat{X}}
\newcommand{\estSignalOpt}{X^*}
\newcommand{\noiselessSignalValue}{x}
\newcommand{\estSignalValue}{\hat{x}}
\newcommand{\noisySignalValue}{\tilde{x}}
\newcommand{\noiselessSignalAlphabet}{\mathcal{X}}
\newcommand{\noiselessSignalSigmaAlgebra}{\mathcal{F}}
\newcommand{\noiselessSignalAlphabetOneShot}{\mathbb{X}}
\newcommand{\noiselessSignalSigmaAlgebraOneShot}{\mathcal{G}}
\newcommand{\noisySignalAlphabet}{\tilde{\mathcal{X}}}
\newcommand{\noisySignalSigmaAlgebra}{\tilde{\mathcal{F}}}
\newcommand{\noisySignalAlphabetOneShot}{\tilde{\mathbb{X}}}
\newcommand{\noisySignalSigmaAlgebraOneShot}{\tilde{\mathcal{G}}}
\newcommand{\estSignalAlphabet}{\hat{\mathcal{X}}}
\newcommand{\estSignalSigmaAlgebra}{\hat{\mathcal{F}}}
\newcommand{\estSignalAlphabetOneShot}{\hat{\mathbb{X}}}
\newcommand{\estSignalSigmaAlgebraOneShot}{\hat{\mathcal{G}}}

\newcommand{\jointDist}{P}
\newcommand{\universalDist}{Q}
\newcommand{\seqLength}{T}
\newcommand{\seqIndex}{t}
\newcommand{\lossFunc}{\ell}
\newcommand{\maxLoss}{\ell_\textrm{max}}
\newcommand{\totalLoss}{L}
\newcommand{\maxTotalLoss}{L_\textrm{max}}
\newcommand{\Expectation}{\mathbb{E}}

\newcommand{\tvdist}[1]{\left\lVert #1 \right\rVert_\mathrm{TV}}
\newcommand{\kldiv}[2]{D\left({#1} || {#2}\right)}
\newcommand{\regret}{R}
\newcommand{\generalRVOne}{A}
\newcommand{\generalRVTwo}{B}
\newcommand{\generalRVOneValue}{a}
\newcommand{\generalRVTwoValue}{b}
\newcommand{\generalPmeasureOne}{\mu}
\newcommand{\generalPmeasureTwo}{\nu}
\newcommand{\rnderiv}[2]{\frac{d#1}{d#2}}
\newcommand{\rnderivInline}[2]{d#1/d#2}
\newcommand{\distributionIndex}{\theta}
\newcommand{\distributionIndexDomain}{\Theta}

\newcommand{\distributionIndexWeight}{w}
\newcommand{\diracDist}[1]{\delta_{#1}}

\newcommand{\indicator}[1]{\mathbbm{1}_{#1}}

\newcommand{\generalReal}{r}
\newcommand{\generalNatural}{n}
\newcommand{\naturals}{\mathbb{N}}

\newcommand{\cardinality}[1]{\left|#1\right|}
\newcommand{\absoluteValue}[1]{\left|#1\right|}
\newcommand{\typicalityParameter}{\varphi}
\newcommand{\typicalSet}{\mathcal{T}}
\newcommand{\countingFunction}{\#}
\newcommand{\complement}[1]{#1^c}
\newcommand{\jointDistMin}[1]{p_{\min, #1}}
\newcommand{\jointDistMinLen}[1]{p_{\min}(#1)}
\newcommand{\tvDistMaximizer}{A}
\newcommand{\problemTuple}{\mathcal{P}}
\newcommand{\reals}{\mathbb{R}}

\newcommand{\quantizer}{q}
\newcommand{\quantizationSetIntermediate}{S}
\newcommand{\quantizationSet}{A}
\newcommand{\quantizerLength}{\ell}
\newcommand{\floor}[1]{\left\lfloor #1 \right\rfloor}
\newcommand{\quantizerBit}{s}
\newcommand{\minset}{B}
\newcommand{\auxSequence}{a}
\newcommand{\distributionIndexIndex}{n}
\newcommand{\binomLikelihood}{\mathcal{L}}
\newcommand{\binomParameter}{p}
\newcommand{\binomTrials}{N}
\newcommand{\binomSuccesses}{n}

\newcommand{\eulersNumber}{e}
\newcommand{\exampleMixFunc}{f}
\newcommand{\regretPerComp}{\bar{R}}
\newcommand{\signalSubset}{A}
\newcommand{\bernoulli}[1]{\mathrm{Bernoulli}\left(#1\right)}
\newcommand{\cbdimension}{d}
\newcommand{\smallvalue}{\delta}
\newcommand{\dimensionIndexOne}{j}
\newcommand{\dimensionIndexTwo}{k}
\newcommand{\euclidnorm}[1]{\left\lVert #1 \right\rVert}
\newcommand{\fisherInformation}{I}
\newcommand{\landauo}{o}
\newcommand{\additiveNoise}{Z}
\newcommand{\additiveNoiseValue}{z}
\newcommand{\noiselessIntensity}{\lambda_X}
\newcommand{\noiseIntensity}{\lambda_Z}
\newcommand{\numNoiseless}{n_X}
\newcommand{\numNoise}{n_Z}
\newcommand{\trueNumNoiseless}{n_{X,0}}
\newcommand{\trueNumNoise}{n_{Z,0}}
\newcommand{\numNoiselessN}[1]{n_{X,#1}}
\newcommand{\numNoiseN}[1]{n_{Z,#1}}
\newcommand{\poissonpmf}[2]{\mathrm{Pois}(#1;#2)}
\newcommand{\entropy}[1]{H(#1)}
\newcommand{\generalIntensity}{\lambda}
\newcommand{\generalRVOneAlphabet}{\mathcal{A}}
\newcommand{\generalSubset}{S}

\DeclareMathOperator*{\argmax}{arg\,max}
\DeclareMathOperator*{\argmin}{arg\,min}

\newacronym{dude}{DUDE}{Discrete Universal Denoiser}
\newacronym{dmc}{DMC}{discrete memoryless channel}
\newacronym{dnn}{DNN}{deep neural network}
\newacronym{awgn}{AWGN}{additive white Gaussian noise}
\newacronym{AI}{AI}{artificial intelligence}

\definecolor{plot1}{HTML}{648fff}
\definecolor{plot2}{HTML}{785ef0}
\definecolor{plot3}{HTML}{dc267f}
\definecolor{plot4}{HTML}{fe6100}
\definecolor{plot5}{HTML}{ffb000}
\definecolor{plot6}{HTML}{000000}

\allowdisplaybreaks

\title{Universal Denoising without Channel Knowledge}
\author{Matthias Frey, Jonathan H. Manton, and Jingge Zhu\\Department of Electrical and Electronic Engineering, The University of Melbourne}

\begin{document}
\maketitle
\begin{abstract}
Inspired by a classical algorithm for online prediction, we propose a novel denoising scheme which is universal for families of probability distributions both in terms of the source that generates the noiseless signal and in terms of the channel that generates the noisy signal. The new denoising scheme does not rely on assumptions of exact channel knowledge or finite signal alphabets. Our analysis provides an upper bound for the performance gap compared with the Bayes envelope. For the special case in which the signal is an i.i.d. sequence and the family is countable, we characterize the consistency condition under which our scheme approaches the Bayes envelope as the length of the sequence tends to infinity. We also show that in general, approaching the Bayes envelope is not possible for a universal denoising scheme when this consistency condition is not satisfied. Furthermore, we show that, as in the online prediction case, the so-called plug-in approach (which relies on a maximum likelihood estimation of the underlying distribution parameter) does not approach the Bayes envelope in general. We also include numerical evaluations of our scheme for denoising of a Poisson signal.
\end{abstract}

\section{Introduction}
Denoising is a ubiquitous problem in signal processing. One of the most famous examples is the Kalman filter~\cite{kalman1960new} which has so many applications (cf., e.g., \cite{urrea2021kalman,auger2013industrial}) that it is a staple in signal processing textbooks. Although it is a computationally very efficient and in many applications highly effective denoising method, one of its central limitations is the need for the denoiser to know the statistics of the signal and noise. For the case in which knowledge about these statistics is limited, \emph{universal} denoising schemes such as the \gls*{dude}~\cite{weissman2005universal} have been proposed. Applications of universal denoisers include image denoising~\cite{ordentlich2003discrete,motta2010idude}, DNA sequence denoising~\cite{laehnemann2016denoising}, and denoising of medical measurements~\cite{huang2014efficient}. Most existing universal denoisers focus on the case of time-series signals (i.e., there is a correlation between the successive outputs of the source which is exploited for denoising), and typical limitations include that they are only applicable to the case of finite signal alphabets, that they require full knowledge of the noise channel, or that their performance does not approach the Bayes envelope.

In this work, we consider the following denoising problem: \emph{given a noisy signal $\noisySignal$, estimate the noiseless signal $\noiselessSignal$ without full knowledge of the underlying joint distribution $\jointDist_{\noiselessSignal, \noisySignal}$.} Specifically, we aim to find denoising strategies (i.e., functions that map $\noisySignal$ to estimates of $\noiselessSignal$) that are \emph{universal} for a family of distributions $(\jointDist_{\noiselessSignal, \noisySignal | \distributionIndex})_{\distributionIndex \in \distributionIndexDomain}$. That is, there is a denoising strategy which has, as long as $\jointDist_{\noiselessSignal, \noisySignal} = \jointDist_{\noiselessSignal, \noisySignal | \distributionIndex_0}$ for some $\distributionIndex_0 \in \distributionIndexDomain$, similar\footnote{See Definition~\ref{def:universal-denoising} for a rigorous definition.} performance as a hypothetical optimal denoiser\footnote{The performance of the optimal denoiser is also called the Bayes envelope.} that operates with full knowledge of $\jointDist_{\noiselessSignal, \noisySignal}$. To this end, we propose a novel universal denoising algorithm for i.i.d. signals which is inspired by a classical algorithm for online prediction problems~\cite{merhav1998universal}. This algorithm is universal in the joint distribution of the noiseless and noisy signals, which means that it relies neither on knowledge of the noise channel nor the statistics of the source of the noiseless signal. Furthermore, we prove performance guarantees that do not require assumptions of finite alphabet size.

\subsection{Prior Work}
\paragraph{Unknown source, known channel, finite alphabets}
To the best of our knowledge, the seminal work in the area of universal denoising is \cite{weissman2005universal} (with its earliest version published in \cite{weissman2002universal}). It introduces the sliding-window \gls*{dude} algorithm for denoising of finite-alphabet signals where the noisy signal is obtained by passing the noiseless signal through a \gls*{dmc} with transition probabilities known to the denoiser. Performance guarantees are given both for the semi-stochastic case where the noiseless signal is an individual sequence (i.e., it does not follow a probability distribution and could be adversarial with knowledge of the denoising strategy), and also for the stochastic case where the noiseless signal follows a probability distribution unknown to the denoiser. It is shown that in the stochastic case \gls*{dude} asymptotically achieves the Bayes envelope and in the semi-stochastic case it has at least as good worst-case performance as any alternative sliding-window denoiser of the same window size. There have been many extensions and follow-up works: for instance, \cite{ordentlich2009concentration} proposes a way of estimating the loss incurred by the denoiser, \cite{su2010universal} proposes a similar non-sliding window denoiser, and \cite{moon2016neural} proposes a variation that incorporates the use of a \gls*{dnn}. The authors of \cite{viswanathan2009lower} propose converse performance bounds for the \gls*{dude} setting. \gls*{dude} applications to denoising of biomedical data are explored in \cite{lee2017dude,fischer2019denoising}. Notably, the papers \cite{weissman2007universal,yan2026universal} make an explicit connection to the universal prediction problem.

\paragraph{Unknown channel and general alphabets}
While most of the literature on universal denoising is universal in the source only (i.e., there is an assumption that the noise comes from a \gls*{dmc} with transition probabilities known to the denoiser) and are restricted to the case of finite alphabets, there are a few works that relax these assumptions. The work \cite{dembo2005universal} assumes knowledge of the noise \gls*{dmc} but allows the alphabet of the noisy signal to be infinite while keeping the assumption that the alphabet of the noiseless signal is finite. The series of works \cite{gemelos2006universal,gemelos2006algorithms,gemelos2006universalPhD} proposes an algorithm for denoising of finite-alphabet signals under channel uncertainty. The assumption made for the scheme to work is that the possible channels are invertible which differs from the consistency condition we use in its place. The authors also show with an example that their assumptions are not sufficient to guarantee that the denoiser asymptotically achieves the Bayes envelope. In \cite{fozunbal2010regret}, both assumptions are relaxed, but the paper is focused on the case of \gls*{awgn} with unknown variance. The more recent paper \cite{notzel2017information} also does not assume knowledge of the noise statistics, but requires finite alphabets and focuses on a specific scenario of image denoising where multiple noisy versions of the same image are available.

\paragraph{i.i.d. signals}
Most existing literature is about denoising of time-series signals, i.e., there is usually some correlation between subsequent outputs of the source that produces the noiseless signal and these correlations are exploited for denoising. There are, however, a few works that focus on denoising of i.i.d. signals such as \cite{fozunbal2010regret} which assumes \gls*{awgn}, \cite{pereira2007denoising} which is not a universal algorithm as it assumes knowledge of both the source and noise statistics, and \cite{huang2014efficient} which assumes that the source distribution is approximately known. Notably, \cite{huang2014efficient} also gives a concrete application example for denoising of i.i.d. signals: medical measurements (e.g., blood pressure or white blood cell count) which are taken from different patients (i.e., there is no reason to believe that two measurements taken subsequently from different patients might be correlated).

\subsection{Contributions and Outline}
The contribution of our paper can be summarized as follows:
\begin{itemize}
  \item For the general form of the denoising problem, which we call the \emph{one-shot denoising problem}, we establish an information-theoretic performance guarantee for a denoising strategy that is mismatched to the true underlying distribution.
  \item For the special case of denoising of i.i.d. sequences, we propose a novel denoising strategy which is inspired by classical works on online prediction. Building on the results for one-shot denoising, we establish a sufficient condition for our strategy to approach the Bayes envelope.
  \item We show that in the case of countable parameter sets $\distributionIndexDomain$, this condition is also necessary.
  \item We show that there is a denoising problem in which the plug-in approach, a simpler alternative approach that uses a maximum likelihood estimate of the underlying distribution, does not approach the Bayes envelope while our proposed approach does.
  \item We conduct numerical evaluations of our proposed approach for the problem of denoising a Poisson signal.
\end{itemize}
The remainder of the paper is organized as follows: in Section~\ref{sec:one-shot}, we formally introduce the one-shot denoising problem and establish the associated performance guarantees. In Section~\ref{sec:iid-denoising}, we formally introduce the problem of denoising of i.i.d. sequences and the notion of universality. Performance guarantees for sufficient conditions are established in Section~\ref{sec:iid-countable} for countable parameter sets and Section~\ref{sec:iid-uncountable} for uncountable parameter sets, respectively. The necessity of the sufficient condition in the countable case is shown in Section~\ref{sec:iid-necessary}. In Section~\ref{sec:plug-in}, we study the plug-in approach, and in Section~\ref{sec:poisson}, we numerically evaluate our approach for the problem of denoising a Poisson signal.

\subsection{Notation}
We use capital letters such as $\generalRVOne$ and $\generalRVTwo$ to denote random variables and small letters such as $\generalRVOneValue$ and $\generalRVTwoValue$ to denote values that they can take. Given probability measures $\generalPmeasureOne$ and $\generalPmeasureTwo$, we use $\Expectation_\generalPmeasureOne$ to denote the expectation with respect to $\generalPmeasureOne$ and when $\generalPmeasureOne$ is absolutely continuous with respect to $\generalPmeasureTwo$, we use $\rnderivInline{\generalPmeasureOne}{\generalPmeasureTwo}$ to denote the Radon-Nikodym derivative. The Kullback-Leibler divergence is denoted $\kldiv{\generalPmeasureOne}{\generalPmeasureTwo} := \Expectation_\generalPmeasureOne \rnderivInline{\generalPmeasureOne}{\generalPmeasureTwo}$ with the usual convention that Kullback-Leibler divergence is infinite whenever the Radon-Nikodym derivative does not exist. If $\generalRVOne$ is a random variable with probability distribution described by $\generalPmeasureOne_\generalRVOne$, we use $\entropy{\generalPmeasureOne_\generalRVOne}$ to denote the Shannon entropy of $\generalRVOne$ when it is distributed according to $\generalPmeasureOne_\generalRVOne$ (note that we do not use the more common notation $\entropy{\generalRVOne}$ because it is ambiguous when the distribution of $\generalRVOne$ is not clear from context). If $\generalPmeasureOne$ is any signed measure on the measurable space $\generalRVOneAlphabet$, we use $\tvdist{\generalPmeasureOne} := \max\{\generalPmeasureOne(\generalSubset):~ \generalSubset \subseteq \generalRVOneAlphabet \text{ is measurable}\}$. If $\generalRVOneValue \in \generalRVOneAlphabet$ where $\generalRVOneAlphabet$ is a measurable space, we use $\diracDist{\generalRVOneValue}$ to denote the Dirac measure with probability mass at $\generalRVOneValue$, i.e., sets have measure $1$ if they contain $\generalRVOneValue$ and they have measure $0$ otherwise. If $\generalSubset$ is a subset of some set $\generalRVOneAlphabet$ which is clear from context, we use $\complement{\generalSubset} := \generalRVOneAlphabet \setminus \generalSubset$ to denote the complement of $\generalSubset$. We also use
\[
\indicator{\generalSubset}:~
\generalRVOneAlphabet \rightarrow \{0,1\},~
\generalRVOneValue \mapsto
\begin{cases}
  1, &\generalRVOneValue \in \generalSubset,\\
  0, &\text{otherwise,}
\end{cases}
\]
to denote the indicator function. Elements of the product space $\generalRVOneAlphabet^\generalNatural$ are denoted $\generalRVOneValue^\generalNatural$ and their $\generalNatural'$-th component is denoted $\generalRVOneValue_{\generalNatural'}$. For a set $\generalRVOneAlphabet$ and $\generalNatural \in \naturals$ (which will be clear from context), we define $\countingFunction: \generalRVOneAlphabet^\generalNatural \times \generalRVOneAlphabet \rightarrow \naturals, (\generalRVOneValue^\generalNatural, \generalRVOneValue) \mapsto \cardinality{\{\seqIndex: \generalRVOneValue_\seqIndex = \generalRVOneValue\}}$ (where $\cardinality{\generalSubset}$ is the number of elements in a finite set $\generalSubset$) to denote a function that counts the numbers of entries of a tuple.

\section{One-Shot Denoising}
\label{sec:one-shot}
In this section, we introduce the problem of removing noise from a single observation $\noisySignal$ to obtain an estimate of the noiseless signal $\noiselessSignal$. Since $\noisySignal$ and $\noiselessSignal$ can be vector-valued random variables, this is a more general problem than denoising of sequences of observations. We also present performance results for this general scenario. While the generality of the problem statement inherently means that the strength of the results is somewhat limited, they nonetheless form the basis for the main results of our paper on denoising of i.i.d. sequences which we present in Section~\ref{sec:iid-denoising}.

\begin{problem}
\label{problem:oneshot}
\emph{(One-shot denoising).}
A \emph{one-shot denoising problem} $\problemTuple = \big( (\noiselessSignalAlphabetOneShot, \noiselessSignalSigmaAlgebraOneShot), (\noisySignalAlphabetOneShot, \noisySignalSigmaAlgebraOneShot), (\estSignalAlphabetOneShot, \estSignalSigmaAlgebraOneShot), \jointDist_{\noiselessSignal, \noisySignal}, \totalLoss \big)$ is characterized by measurable spaces $(\noiselessSignalAlphabetOneShot, \noiselessSignalSigmaAlgebraOneShot)$, the \emph{noiseless signal alphabet}, $(\noisySignalAlphabetOneShot, \noisySignalSigmaAlgebraOneShot)$, the \emph{noisy signal alphabet}, and $(\estSignalAlphabetOneShot, \estSignalSigmaAlgebraOneShot)$ , the \emph{estimated signal alphabet}, as well as a probability distribution $\jointDist_{\noiselessSignal, \noisySignal}$ on $(\noiselessSignalAlphabetOneShot \times \noisySignalAlphabetOneShot, \noiselessSignalSigmaAlgebraOneShot \otimes \noisySignalSigmaAlgebraOneShot)$ and a measurable \emph{loss function} $\totalLoss: \estSignalAlphabetOneShot \times \noiselessSignalAlphabetOneShot \rightarrow \reals$.

It consists of the following steps:
\begin{itemize}
  \item Nature draws the random variables $\noiselessSignal, \noisySignal$ from $\jointDist_{\noiselessSignal, \noisySignal}$.
  \item The denoiser observes only $\noisySignal$ and computes an estimate $\estSignal \in \estSignalAlphabetOneShot$ of the unknown noiseless signal $\noiselessSignal$.
  \item The denoiser incurs the loss $\totalLoss(\estSignal, \noiselessSignal)$.
\end{itemize}
\end{problem}
In this paper, we consider two cases for what the loss function can be, each yielding slightly different results.
\begin{definition}\emph{(Loss functions).}
\label{def:loss}
\begin{enumerate}
\item We say that $\totalLoss$ is \emph{bounded} by $\maxTotalLoss$ if it only takes values in $[0,\maxTotalLoss]$.
\item We say that $\totalLoss$ is the \emph{self-information loss} if $\noiselessSignalAlphabetOneShot$ is countable, $\estSignalAlphabetOneShot$ is the set of p.m.f.s on $\noiselessSignalAlphabetOneShot$, and
\[
\totalLoss(\estSignalValue, \noiselessSignalValue) = - \log \estSignalValue(\noiselessSignalValue).
\]
\end{enumerate}
\end{definition}

Throughout this paper, we use $\jointDist_{\noiselessSignal}$, respectively $\jointDist_{\noisySignal}$ to denote the corresponding marginals of $\jointDist_{\noiselessSignal, \noisySignal}$. Furthermore, we assume that regular conditional probabilities $\jointDist_{\noisySignal | \noiselessSignal}$ and $\jointDist_{\noiselessSignal | \noisySignal}$ exist. A \emph{denoising strategy} is a function that maps the noisy signal $\noisySignal$ to an estimate $\estSignal$. We compare the denoiser's performance with that of a strategy which can utilize full knowledge of $\jointDist_{\noiselessSignal | \noisySignal}$ and optimizes for the lowest possible expected loss. That is, we define an optimal estimate (see Remark~\ref{remark:measurability} for sufficient conditions for existence)
\begin{equation}
\label{eq:optimal-denoising}
\estSignalOpt
\in
\argmin_{\estSignalValue \in \estSignalAlphabetOneShot}
  \Expectation_{\jointDist_{\noiselessSignal | \noisySignal}} \totalLoss(\estSignalValue, \noiselessSignal).
\end{equation}
The quantity
\begin{equation*}
\Expectation_{\jointDist_{\noiselessSignal | \noisySignal}}
  \totalLoss(\estSignalOpt, \noiselessSignal)
=
\min_{\estSignalValue \in \estSignalAlphabetOneShot}
  \Expectation_{\jointDist_{\noiselessSignal | \noisySignal}} \totalLoss(\estSignalValue, \noiselessSignal),
\end{equation*}
which is a function of $\noisySignal$, is called the \emph{Bayes envelope}. We will evaluate the performance of the denoisers considered in this paper in terms of how closely their performance approaches the Bayes envelope. Specifically, we consider a denoising strategy that is induced by a probability distribution $\universalDist_{\noiselessSignal, \noisySignal}$. Under this strategy, the denoiser chooses an estimate that minimizes the loss in the case that $\jointDist_{\noiselessSignal, \noisySignal} = \universalDist_{\noiselessSignal, \noisySignal}$, i.e., the estimate $\estSignal$ satisfies
\begin{equation}
\label{eq:mismatched-denoising}
\estSignal
\in
\argmin_{\estSignalValue \in \estSignalAlphabetOneShot}
  \Expectation_{\universalDist_{\noiselessSignal | \noisySignal}} \totalLoss(\estSignalValue, \noiselessSignal).
\end{equation}

\begin{remark}
\label{remark:measurability}
\emph{(Existence and measurability of estimates).}
In order for the expectations involving $\estSignalOpt$ and $\estSignal$ (which will appear throughout the remainder of the paper) to be well-defined, it is important to ensure that they exist and are random variables. Since $\noisySignal$ is a random variable, it is sufficient to make sure that they are measurable functions of $\noisySignal$. In general, we believe that this is only possible by imposing some additional conditions on the measurable space $(\estSignalAlphabetOneShot, \estSignalSigmaAlgebraOneShot)$. The measurable maximum theorem~\cite[Theorem 18.19]{aliprantis2006infinite} tells us that it is for instance sufficient to assume that $(\estSignalAlphabetOneShot, \estSignalSigmaAlgebraOneShot)$ is metrizable such that all of the following hold:
\begin{itemize}
  \item $(\estSignalAlphabetOneShot, \estSignalSigmaAlgebraOneShot)$ is separable and compact.
  \item For every $\estSignalValue \in \estSignalAlphabetOneShot$, the function $\totalLoss(\estSignalValue, \cdot)$ is measurable.
  \item For every $\noiselessSignalValue \in \noiselessSignalAlphabetOneShot$, the function $\totalLoss(\cdot,\noiselessSignalValue)$ is continuous.
\end{itemize}
An important special case is when $\estSignalAlphabetOneShot$ is finite and $\estSignalSigmaAlgebraOneShot$ is the discrete $\sigma$-algebra. For the remainder of this paper, we assume tacitly that measurable $\estSignalOpt$ and $\estSignal$ exist which almost surely satisfy \eqref{eq:optimal-denoising} and \eqref{eq:mismatched-denoising}.
\end{remark}

Since in general, we have $\jointDist_{\noiselessSignal, \noisySignal} \neq \universalDist_{\noiselessSignal, \noisySignal}$, we can also expect that usually $\Expectation_{\jointDist_{\noiselessSignal | \noisySignal}} \totalLoss(\estSignal, \noiselessSignal) > \Expectation_{\jointDist_{\noiselessSignal | \noisySignal}} \totalLoss(\estSignalOpt, \noiselessSignal)$, i.e., the loss incurred by the predictor is greater than the Bayes envelope. Following the terminology of the online learning literature, we define the mean difference of these two quantities as the \emph{expected regret} $\regret$ of the denoiser:
\begin{equation}
\label{eq:def-regret}
\regret
:=
\Expectation_{\jointDist_{\noiselessSignal, \noisySignal}}\big(
  \totalLoss(\estSignal, \noiselessSignal)
  -
  \totalLoss(\estSignalOpt, \noiselessSignal)
\big).
\end{equation}
For the one-shot denoising problem and denoising strategies induced by a probability distribution $\universalDist_{\noiselessSignal, \noisySignal}$, we have the following result on expected regret. The proof of Theorem~\ref{theorem:oneshot} is relegated to Appendix~\ref{appendix:oneshot-proof}.
\begin{theorem}
\emph{(Sufficient conditions for one-shot denoising).}
\label{theorem:oneshot}
Consider the setting of Problem~\ref{problem:oneshot}. If $\totalLoss$ is bounded by $\maxTotalLoss$ for some $\totalLoss_{\max} \in [0,\infty)$, we have
\begin{equation}
\label{eq:oneshot-bounded-loss}
\regret
\leq
\totalLoss_{\max}
\sqrt{
  \frac{1}{2}\left(
    \kldiv{\jointDist_{\noiselessSignal, \noisySignal}}{\universalDist_{\noiselessSignal, \noisySignal}}
    -
    \kldiv{\jointDist_{\noisySignal}}{\universalDist_{\noisySignal}}
  \right).
}
\end{equation}
If $\totalLoss$ is the self-information loss, we have
\begin{equation}
\label{eq:oneshot-log-loss}
\regret
=
\kldiv{\jointDist_{\noiselessSignal, \noisySignal}}{\universalDist_{\noiselessSignal, \noisySignal}}
-
\kldiv{\jointDist_{\noisySignal}}{\universalDist_{\noisySignal}}.
\end{equation}
\end{theorem}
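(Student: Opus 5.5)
The plan is to condition on $\noisySignal$ and use the chain rule for Kullback-Leibler divergence. Since regular conditional probabilities exist, the chain rule gives
\[
\kldiv{\jointDist_{\noiselessSignal, \noisySignal}}{\universalDist_{\noiselessSignal, \noisySignal}}
-
\kldiv{\jointDist_{\noisySignal}}{\universalDist_{\noisySignal}}
=
\Expectation_{\jointDist_{\noisySignal}}
  \kldiv{\jointDist_{\noiselessSignal | \noisySignal}}{\universalDist_{\noiselessSignal | \noisySignal}}.
\]
I would establish this first. If $\kldiv{\jointDist_{\noisySignal}}{\universalDist_{\noisySignal}} = \infty$, the left-hand side should be read as $+\infty$ or handled by convention, and the bound is trivial. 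When both divergences are finite, I factor the Radon-Nikodym derivative
\[
\rnderivInline{\jointDist_{\noiselessSignal,\noisySignal}}{\universalDist_{\noiselessSignal,\noisySignal}}
=
\left(\rnderivInline{\jointDist_{\noisySignal}}{\universalDist_{\noisySignal}}\right)
\left(\rnderivInline{\jointDist_{\noiselessSignal|\noisySignal}}{\universalDist_{\noiselessSignal|\noisySignal}}\right),
\]
take logarithms and integrate. This measure-theoretic bookkeeping is the main technical obstacle. It requires checking that the conditional derivative exists $\jointDist_{\noisySignal}$-almost surely, and it relies on the standard chain rule for KL divergence on general spaces, which I would cite rather than reprove. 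With the identity in hand, both claims reduce to pointwise statements for a fixed value $\noisySignalValue$ of $\noisySignal$, which are then integrated against $\jointDist_{\noisySignal}$.

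\emph{Bounded loss.} Fix $\noisySignalValue$ and write $\jointDist' = \jointDist_{\noiselessSignal|\noisySignal=\noisySignalValue}$ and $\universalDist' = \universalDist_{\noiselessSignal|\noisySignal=\noisySignalValue}$. Since $\estSignal$ minimizes the $\universalDist'$-expected loss, we have $\Expectation_{\universalDist'}\totalLoss(\estSignal,\noiselessSignal) \le \Expectation_{\universalDist'}\totalLoss(\estSignalOpt,\noiselessSignal)$. Hence
\[
\Expectation_{\jointDist'}\big(\totalLoss(\estSignal,\noiselessSignal)-\totalLoss(\estSignalOpt,\noiselessSignal)\big)
\le
\Expectation_{\jointDist'-\universalDist'}\totalLoss(\estSignal,\noiselessSignal)
-
\Expectation_{\jointDist'-\universalDist'}\totalLoss(\estSignalOpt,\noiselessSignal).
\]
A function with values in $[0,\maxTotalLoss]$ has expectation under the signed measure $\jointDist'-\universalDist'$ lying in $[-\maxTotalLoss\tvdist{\universalDist'-\jointDist'},\ \maxTotalLoss\tvdist{\jointDist'-\universalDist'}]$. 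Both total variation quantities equal the same value $\tau$, because the signed measure has total mass zero.

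A naive difference of two such bounds gives $2\maxTotalLoss\tau$, which is too large by a factor of two. To avoid this, I would instead write the right-hand side as $\Expectation_{\jointDist'-\universalDist'} g$, where $g = \totalLoss(\estSignal,\cdot)-\totalLoss(\estSignalOpt,\cdot)$ takes values in $[-\maxTotalLoss,\maxTotalLoss]$. This alone still gives $2\maxTotalLoss\tau$. The fix is to also use the optimality of $\estSignalOpt$: the regret is nonnegative. Then I take the minimum of two bounds. The $\jointDist'$-regret is at most $\Expectation_{\jointDist'}\totalLoss(\estSignal,\noiselessSignal) - \Expectation_{\universalDist'}\totalLoss(\estSignal,\noiselessSignal) + \Expectation_{\universalDist'}\totalLoss(\estSignalOpt,\noiselessSignal) - \Expectation_{\jointDist'}\totalLoss(\estSignalOpt,\noiselessSignal)$.

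If the stated constant really requires $\maxTotalLoss\tau$ rather than $2\maxTotalLoss\tau$, I expect an argument of the following form. Write $\jointDist'-\universalDist' = \mu_+ - \mu_-$ with $\mu_\pm$ mutually singular and each of mass $\tau$. Then $\Expectation_{\mu_+} g - \Expectation_{\mu_-} g \le \tau(\sup g - \inf g)$, and I would argue that $\sup g - \inf g \le \maxTotalLoss$ is not generally available. So this is the step I am least sure of. If it fails, I would accept the bound $\maxTotalLoss\tau$ via $\Expectation_{\jointDist'}\totalLoss(\estSignal)-\Expectation_{\jointDist'}\totalLoss(\estSignalOpt) \le \tvdist{\cdot}$-type estimates, applied as the paper's normalization suggests.

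Once a pointwise bound of the form $\maxTotalLoss\tau$ is in place, the rest is routine. Pinsker's inequality gives $\tau \le \sqrt{\tfrac12 \kldiv{\jointDist'}{\universalDist'}}$. Integrating over $\noisySignalValue$ and applying Jensen's inequality to the concave square root yields \eqref{eq:oneshot-bounded-loss} via the chain-rule identity.

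\emph{Self-information loss.} Here the minimizer of $\Expectation_{\universalDist'}[-\log \estSignalValue(\noiselessSignal)]$ is $\estSignal = \universalDist'$ by Gibbs' inequality, and likewise $\estSignalOpt = \jointDist'$. The pointwise regret is therefore
\[
\Expectation_{\jointDist'}\log\frac{\jointDist'(\noiselessSignal)}{\universalDist'(\noiselessSignal)} = \kldiv{\jointDist'}{\universalDist'}.
\]
Taking the expectation over $\noisySignal$ and applying the chain rule gives the equality \eqref{eq:oneshot-log-loss}. The only care needed is that the minimizer is unique up to null sets, so the equality holds for any version of the estimates satisfying Remark~\ref{remark:measurability}.
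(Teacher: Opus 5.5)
Your route is the same as the paper's: condition on $\tilde X$, use the optimality of $\hat X$ under $Q'$ to reduce to $\int g\,d(P'-Q')$, bound this by total variation, then apply Pinsker, Jensen and the chain rule, and use Gibbs' inequality for the self-information loss. The self-information part is fine. The genuine gap is the step you flagged yourself. Your argument yields $\int g\,d(P'-Q')\le\tau(\sup g-\inf g)\le 2L_{\max}\tau$. The pointwise bound $L_{\max}\tau$ that you propose to ``accept'' is false, and so is the displayed inequality with constant $L_{\max}$.

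For a counterexample, take a one-point noisy alphabet, $X\in\{0,1\}$, the $0$--$1$ loss ($L_{\max}=1$), $P_X(1)=0.6$ and $Q_X(1)=0.49$. Then $\hat X=0$, $X^*=1$ and $R=0.6-0.4=0.2$. On the other side, $\tau=0.11$, $D(P_{\tilde X}\|Q_{\tilde X})=0$, and $L_{\max}\sqrt{\frac12 D(P_X\|Q_X)}\approx 0.110$ in nats ($\approx 0.13$ in bits). Now take $P_X(1)=\frac12+\epsilon$ and $Q_X(1)=\frac12-\delta$ with $0<\delta\ll\epsilon\to 0$. The ratio of the two sides then tends to $2$. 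So no refinement can recover the constant $L_{\max}$, including your ``minimum of two bounds'' idea based on nonnegativity of the regret.

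If you simply stop at $2L_{\max}\tau$, your argument proves the true statement $R\le 2L_{\max}\sqrt{\frac12(D(P_{X,\tilde X}\|Q_{X,\tilde X})-D(P_{\tilde X}\|Q_{\tilde X}))}$. This is also exactly what the paper's own appendix proof delivers. Its layer-cake comparison picks up $L_{\max}\tau$ from each of the positive and negative parts, and its final line is $2L_{\max}\sqrt{\frac12(\cdots)}$, even though the theorem is stated with $L_{\max}$. So the missing factor is an error in the statement, not a missing idea in your approach. Still, deferring to ``the paper's normalization'' is not a proof step. You should either prove the claimed constant or, as here, show it is unattainable and state the corrected bound $L_{\max}\sqrt{2(\cdots)}$.
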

As we might expect intuitively, Theorem~\ref{theorem:oneshot} states that if $\universalDist_{\noiselessSignal,\noisySignal}$ is not much more dissimilar from $\jointDist_{\noiselessSignal,\noisySignal}$ than $\universalDist_{\noisySignal}$ is from $\jointDist_{\noisySignal}$, the regret will be small. The theorem quantifies this exactly in terms of well-known information-theoretic quantities. Another way to look at this is via the identity
\[
\kldiv{\jointDist_{\noiselessSignal, \noisySignal}}{\universalDist_{\noiselessSignal, \noisySignal}}
-
\kldiv{\jointDist_{\noisySignal}}{\universalDist_{\noisySignal}}
=
\Expectation_{\jointDist_{\noisySignal}}
  \kldiv{\jointDist_{\noiselessSignal | \noisySignal}}{\universalDist_{\noiselessSignal | \noisySignal}}
\]
which holds by Lemma~\ref{lemma:conditional-kl} in Appendix~\ref{appendix:oneshot-proof}. From this point of view, Theorem~\ref{theorem:oneshot} quantifies the average regret in terms of how different the conditional distributions of the noiseless signal are on average. As can be seen in the proof of Theorem~\ref{theorem:oneshot} in Appendix~\ref{appendix:oneshot-proof}, it is a relatively straightforward result, and it also leaves the question open how the denoiser should choose a $\universalDist_{\noiselessSignal,\noisySignal}$ that leads to a small expected regret. Answering these questions may not be possible without considering special characteristics of the denoising problem at hand. In this paper we will answer these questions for a class of denoising problems involving i.i.d. sequences.

\section{Denoising of i.i.d. Sequences}
\label{sec:iid-denoising}
In this section, we specialize the one-shot denoising problem to the case of i.i.d. sequences. The idea behind this is that the i.i.d. nature of the sequence allows us to draw conclusions about the distribution of one component from observing the other components and thereby enables us to realize smaller regrets as the sequence length grows. In this case, we can construct a denoising strategy that works well as long as the true distribution of signals is contained in a family of distributions known a priori to the denoiser. One important (but intuitively obvious) caveat is that this family cannot contain two distributions that are very different in terms of noiseless signals but indistinguishable in terms of noisy signals. This consistency condition is formally stated in \eqref{eq:consistency} and its necessity is illustrated with an example in Section~\ref{sec:iid-necessary}. We consider a bounded additive loss per component as well as the self-information loss.
\begin{problem}
\emph{(i.i.d. denoising problem).}
\label{problem:iid}
An \emph{i.i.d. denoising problem} $\problemTuple = \big( (\noiselessSignalAlphabet, \noiselessSignalSigmaAlgebra), (\noisySignalAlphabet, \noisySignalSigmaAlgebra), (\estSignalAlphabet, \estSignalSigmaAlgebra), \jointDist_{\noiselessSignal, \noisySignal}, \totalLoss \big)$ is characterized by measurable spaces $(\noiselessSignalAlphabet, \noiselessSignalSigmaAlgebra)$, the \emph{noiseless signal alphabet}, $(\noisySignalAlphabet, \noisySignalSigmaAlgebra)$, the \emph{noisy signal alphabet}, and $(\estSignalAlphabet, \estSignalSigmaAlgebra)$, the \emph{estimated signal alphabet}, as well as a probability distribution $\jointDist_{\noiselessSignal, \noisySignal}$ on $(\noiselessSignalAlphabet \times \noisySignalAlphabet, \noiselessSignalSigmaAlgebra \otimes \noisySignalSigmaAlgebra)$ and a measurable \emph{loss function} $\totalLoss: \estSignalAlphabet^\seqLength \times \noiselessSignalAlphabet^\seqLength \rightarrow \reals$.

Its instantiation for \emph{sequence length} $\seqLength$ consists of the following steps:
\begin{itemize}
  \item Nature draws a sequence $(\noiselessSignal_1, \noisySignal_1), \dots, (\noiselessSignal_\seqLength, \noisySignal_\seqLength)$ i.i.d. from $\jointDist_{\noiselessSignal, \noisySignal}$.
  \item The denoiser observes only $\noisySignal^\seqLength$ and computes an estimate $\estSignal^\seqLength \in \estSignalAlphabet^\seqLength$ of the unknown noiseless signal $\noiselessSignal^\seqLength$.
  \item The denoiser incurs the loss $\totalLoss(\estSignal^\seqLength, \noiselessSignal^\seqLength)$.
\end{itemize}
\end{problem}
Given an i.i.d. denoising problem $\problemTuple = \big( (\noiselessSignalAlphabet, \noiselessSignalSigmaAlgebra), (\noisySignalAlphabet, \noisySignalSigmaAlgebra), (\estSignalAlphabet, \estSignalSigmaAlgebra), \jointDist_{\noiselessSignal, \noisySignal}, \totalLoss \big)$, each sequence length $\seqLength \in \naturals$ induces a one-shot denoising problem $\problemTuple_\seqLength := \big( (\noiselessSignalAlphabet^\seqLength, \noiselessSignalSigmaAlgebra^{\otimes \seqLength}), (\noisySignalAlphabet^\seqLength, \noisySignalSigmaAlgebra^{\otimes \seqLength}), (\estSignalAlphabet^\seqLength, \estSignalSigmaAlgebra^{\otimes \seqLength}), \jointDist_{\noiselessSignal, \noisySignal}^\seqLength, \totalLoss \big)$, where $\jointDist_{\noiselessSignal, \noisySignal}^\seqLength$ is the $\seqLength$-fold product of $\jointDist_{\noiselessSignal, \noisySignal}$. Clearly, $\problemTuple$ at sequence length $\seqLength$ and $\problemTuple_\seqLength$ describe the same random experiment. We define the \emph{average expected per-component regret} as
\[
  \regretPerComp := \frac{\regret}{\seqLength},
\]
where $\regret$ is the expected regret incurred by the denoiser in the induced one-shot denoising problem. We say that the loss is \emph{additive} if there is a per-component loss function $\lossFunc: \estSignalAlphabet \times \noiselessSignalAlphabet \rightarrow [0,\infty)$ such that $\totalLoss(\estSignalValue^\seqLength, \noiselessSignalValue^\seqLength) = \sum_{\seqIndex=1}^\seqLength \lossFunc(\estSignalValue_\seqIndex, \noiselessSignalValue_\seqIndex)$ for all $\estSignalValue^\seqLength \in \estSignalAlphabet^\seqLength, \noiselessSignalValue^\seqLength \in \noiselessSignalAlphabet^\seqLength$. Analogously to Definition~\ref{def:loss}, we say that $\lossFunc$ is bounded by $\maxLoss$ if it only takes values in $[0,\maxLoss]$. Clearly, in the case of an additive loss, $\lossFunc$ is bounded by $\maxLoss$ if and only if $\totalLoss$ is bounded by $\seqLength\maxLoss$.

We next construct a denoising strategy which is universal for a (possibly very large) family $
\left(\jointDist_{\noiselessSignal,\noisySignal|\distributionIndex}\right)_{\distributionIndex \in \distributionIndexDomain}$ of probability distributions on $\noiselessSignalAlphabet \times \noisySignalAlphabet$ where $\distributionIndexDomain$ is a measurable space.

\begin{definition}
\label{def:universal-denoising}
\emph{(Universal denoising).}
A denoising strategy is called \emph{universal} if it works well as long as $\jointDist_{\noiselessSignal,\noisySignal} = \jointDist_{\noiselessSignal,\noisySignal|\distributionIndex_0}$ for some $\distributionIndex_0 \in \distributionIndexDomain$. The definition of ``works well'' in this context can differ, but in this work we will consider two possibilities:
\begin{enumerate}
  \item\label{item:universal-denoising-total} $\regret \rightarrow 0$ as $\seqLength \rightarrow \infty$,
  \item\label{item:universal-denoising-perComp} $\regretPerComp \rightarrow 0$ as $\seqLength \rightarrow \infty$.
\end{enumerate}
\end{definition}
Obviously \ref{item:universal-denoising-total} yields a stronger version of Definition~\ref{def:universal-denoising} in the sense that it implies \ref{item:universal-denoising-perComp} (but not necessarily vice versa).

We construct the universal denoising strategy in a way similar as is done for universal prediction~\cite{merhav1998universal}. Namely, we define some probability measure $\distributionIndexWeight$ on $\distributionIndexDomain$ (as can be seen in the results, it does not usually matter too much what $\distributionIndexWeight$ is chosen as long as it is supported on all of $\distributionIndexDomain$) and use
\begin{equation}
\label{eq:universal-dist}
  \universalDist_{\noiselessSignal^\seqLength, \noisySignal^\seqLength}(\signalSubset)
  :=
  \int_\distributionIndexDomain
    \jointDist_{\noiselessSignal, \noisySignal | \distributionIndex}^\seqLength(\signalSubset)
    \distributionIndexWeight(d\distributionIndex)
\end{equation}
with the denoising strategy \eqref{eq:mismatched-denoising}.

\subsection{Sufficient conditions and performance guarantees for countable $\distributionIndexDomain$}
\label{sec:iid-countable}
In this subsection, we establish our main result on denoising of i.i.d. sequences for the case in which the parametric family of possible distributions is countable.

\begin{theorem}
\label{theorem:iid-denoising}
\emph{(Sufficient conditions for i.i.d. denoising with countable parameter space).}
Let $(\jointDist_{\noiselessSignal, \noisySignal | \distributionIndex})_{\distributionIndex \in \distributionIndexDomain}$ be a parametric family of probability distributions on $\noiselessSignalAlphabet \times \noisySignalAlphabet$ where $\distributionIndexDomain$ is countable, and assume that it satisfies
\begin{equation}
\label{eq:consistency}
  \forall \distributionIndex_1 \neq \distributionIndex_2 \in \distributionIndexDomain
  :~
  \jointDist_{\noisySignal | \distributionIndex_1}
  =
  \jointDist_{\noisySignal | \distributionIndex_2}
  \Rightarrow
  \jointDist_{\noiselessSignal,\noisySignal | \distributionIndex_1}
  =
  \jointDist_{\noiselessSignal,\noisySignal | \distributionIndex_2}
  .
\end{equation}
Assume further that $\jointDist_{\noiselessSignal^\seqLength, \noisySignal^\seqLength} = \jointDist_{\noiselessSignal, \noisySignal}^\seqLength = \jointDist_{\noiselessSignal, \noisySignal | \distributionIndex_0}^\seqLength$ for some $\distributionIndex_0$ in the support of a p.m.f. $\distributionIndexWeight$ on $\distributionIndexDomain$, and let $\universalDist_{\noiselessSignal^\seqLength, \noisySignal^\seqLength}(\signalSubset)$ be defined as in \eqref{eq:universal-dist}.
\begin{enumerate}
  \item\label{item:iid-denoising-finite} If $\noisySignalAlphabet$ and $\distributionIndexDomain$ are finite, then for large enough $\seqLength$, we have
  \begin{align*}
    \regretPerComp &\leq \maxLoss \exp\left(-\frac{1}{2}\sqrt{\seqLength}\right) &\text{if the loss is additive with $\lossFunc$ bounded by $\maxLoss$,}\\
    \regret &\leq \exp\left(-\sqrt{\seqLength}\right) &\text{if $\lossFunc$ is the self-information loss.}
  \end{align*}
  \item\label{item:iid-denoising-countable} If $\noisySignalAlphabet$ is a general measurable space and $\distributionIndexDomain$ is finite or countably infinite, we have
  \begin{align*}
    \lim_{\seqLength \rightarrow \infty} \regretPerComp &= 0 &\text{if the loss is additive with $\lossFunc$ bounded by $\maxLoss$,}\\
    \lim_{\seqLength \rightarrow \infty} \regret &= 0 &\text{if $\lossFunc$ is the self-information loss.}
  \end{align*}
\end{enumerate}
\end{theorem}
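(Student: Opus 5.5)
The plan is to reduce everything to Theorem~\ref{theorem:oneshot}, applied to the induced one-shot problem $\problemTuple_\seqLength$ with $\jointDist^\seqLength_{\noiselessSignal,\noisySignal|\distributionIndex_0}$ and the mixture $\universalDist_{\noiselessSignal^\seqLength,\noisySignal^\seqLength}$. Both cases are then governed by the single quantity
\[
\Delta_\seqLength := \kldiv{\jointDist^\seqLength_{\noiselessSignal,\noisySignal}}{\universalDist_{\noiselessSignal^\seqLength,\noisySignal^\seqLength}} - \kldiv{\jointDist^\seqLength_{\noisySignal}}{\universalDist_{\noisySignal^\seqLength}},
\]
since
\begin{itemize}
\item for the self-information loss, $\regret=\Delta_\seqLength$;
\item for an additive loss bounded by $\maxLoss$, $\totalLoss$ is bounded by $\seqLength\maxLoss$, so $\regretPerComp\le \maxLoss\sqrt{\Delta_\seqLength/2}$.
\end{itemize}
So it suffices to show $\Delta_\seqLength\to0$, and $\Delta_\seqLength\le C\rho^\seqLength$ with $\rho<1$ when $\noisySignalAlphabet$ and $\distributionIndexDomain$ are finite. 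The rates in part~\ref{item:iid-denoising-finite} then follow, because $C\rho^\seqLength\le \exp(-\sqrt\seqLength)$ for large $\seqLength$ and $\maxLoss\sqrt{C\rho^\seqLength/2}\le\maxLoss\exp(-\sqrt\seqLength/2)$. In fact finiteness of $\noisySignalAlphabet$ will not really be needed beyond measurability conveniences.

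\textbf{Step 1: reduce $\Delta_\seqLength$ to a posterior weight.} First I merge parameters with identical joint distributions into equivalence classes, summing their weights. By \eqref{eq:consistency}, two classes then always have different noisy marginals $\jointDist_{\noisySignal|\distributionIndex}$. Write $[\distributionIndex_0]$ for the class of $\distributionIndex_0$ and $w_0>0$ for its total weight.

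Next I work with densities with respect to a dominating measure for the countable family (for instance the mixture itself). This lets $\universalDist_{\noiselessSignal^\seqLength|\noisySignal^\seqLength}$ be written as a posterior mixture $\sum_{[\distributionIndex]} w([\distributionIndex]|\noisySignalValue^\seqLength)\,\jointDist^\seqLength_{\noiselessSignal|\noisySignal,\distributionIndex}$.

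By Lemma~\ref{lemma:conditional-kl}, $\Delta_\seqLength=\Expectation\,\kldiv{\jointDist_{\noiselessSignal^\seqLength|\noisySignal^\seqLength}}{\universalDist_{\noiselessSignal^\seqLength|\noisySignal^\seqLength}}$. The mixture dominates $w([\distributionIndex_0]|\noisySignalValue^\seqLength)$ times the true conditional, so the Radon--Nikodym derivative is at most $1/w([\distributionIndex_0]|\noisySignalValue^\seqLength)$. This gives
\[
\Delta_\seqLength\le\Expectation_{\jointDist^\seqLength_{\noisySignal}}\bigl[-\log w([\distributionIndex_0]|\noisySignal^\seqLength)\bigr]=\Expectation\log\Bigl(1+\sum_{[\distributionIndex]\neq[\distributionIndex_0]} \tfrac{w_{[\distributionIndex]}}{w_0}\Lambda_{\distributionIndex,\seqLength}\Bigr),
\]
where $\Lambda_{\distributionIndex,\seqLength}=\prod_{\seqIndex}\frac{d\jointDist_{\noisySignal|\distributionIndex}}{d\jointDist_{\noisySignal|\distributionIndex_0}}(\noisySignal_\seqIndex)$ is the noisy-signal likelihood ratio. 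Singular parts only help here.

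\textbf{Step 2: finite $\distributionIndexDomain$.} I use $\log(1+a)\le\sqrt a$ for $a\ge0$ and $\sqrt{\sum a_i}\le\sum\sqrt{a_i}$. This gives $\Delta_\seqLength\le\sum_{[\distributionIndex]}\sqrt{w_{[\distributionIndex]}/w_0}\;\Expectation\sqrt{\Lambda_{\distributionIndex,\seqLength}}$.

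By independence across components, $\Expectation\sqrt{\Lambda_{\distributionIndex,\seqLength}}=\rho_\distributionIndex^\seqLength$, where $\rho_\distributionIndex$ is the Bhattacharyya coefficient between $\jointDist_{\noisySignal|\distributionIndex}$ and $\jointDist_{\noisySignal|\distributionIndex_0}$. Since these noisy marginals differ, $\rho_\distributionIndex<1$. With finitely many classes, $\rho:=\max\rho_\distributionIndex<1$, which yields $\Delta_\seqLength\le C\rho^\seqLength$ and hence part~\ref{item:iid-denoising-finite}.

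\textbf{Step 3: countable $\distributionIndexDomain$ (the main obstacle).} The series $\sum\sqrt{w_{[\distributionIndex]}}\rho_\distributionIndex^\seqLength$ need not converge, and the $\rho_\distributionIndex$ need not be bounded away from $1$, so Step~2 fails directly. I truncate instead.

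Fix $\epsilon>0$ and choose a finite set $F$ of classes whose complement has total weight $\le\epsilon w_0$. Split the sum inside the logarithm as $a_F+a_{\mathrm{rest}}$ and use $\log(1+a+b)\le\log(1+a)+\log(1+b)$.

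\begin{itemize}
\item The $F$-part is handled exactly as in Step~2 and tends to $0$.
\item For the rest I apply Jensen's inequality and $\Expectation\Lambda_{\distributionIndex,\seqLength}\le1$, which gives $\Expectation\log(1+a_{\mathrm{rest}})\le\log\bigl(1+\sum_{\mathrm{rest}}w_{[\distributionIndex]}/w_0\bigr)\le\log(1+\epsilon)$.
\end{itemize}

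Hence $\limsup_{\seqLength\to\infty}\Delta_\seqLength\le\log(1+\epsilon)$ for every $\epsilon>0$, so $\Delta_\seqLength\to0$, which proves part~\ref{item:iid-denoising-countable}. The delicate points are the measurability and density bookkeeping for general $\noisySignalAlphabet$, and justifying the posterior-mixture representation of $\universalDist_{\noiselessSignal^\seqLength|\noisySignal^\seqLength}$.
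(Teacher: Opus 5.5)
Your proof is correct, and after a shared first step it takes a genuinely different and much simpler route than the paper. Both arguments combine Theorem~\ref{theorem:oneshot} with the bound $\Delta_T \le \log\frac{1}{w_0} - D(P^T_{\tilde X}\|Q_{\tilde X^T})$; your $\mathbb{E}[-\log w([\theta_0]\mid\tilde X^T)]$ is exactly this right-hand side.

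The paper then proceeds by typicality:
\begin{enumerate}
\item It splits the expectation over a letter-typical set $\mathcal{T}$ and bounds $P^T_{\tilde X|\theta}(\mathcal{T})$ by Hoeffding's inequality with $\varphi=T^{-1/4}$. This requires $\tilde{\mathcal X}$ to be finite and gives a bound of order $\exp(-2\sqrt{T})\,\mathrm{poly}(T)$.
\item For general $\tilde{\mathcal X}$ and countable $\Theta$, it builds a $\theta_0$-dependent quantizer with at most $\frac{1}{8}\log_2 T$ bits, applies data processing, and shows that the good parameter sets exhaust $\Theta\setminus\{\theta_0\}$.
\end{enumerate}

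Your route replaces all of this. The Bhattacharyya bound $\Delta_T\le\sum_{[\theta]\neq[\theta_0]}\sqrt{w_{[\theta]}/w_0}\,\rho_\theta^T$ holds for every $T$ and every measurable $\tilde{\mathcal X}$. It therefore strengthens part~1, giving an exponential rate with no finiteness assumption on $\tilde{\mathcal X}$. Truncation plus Jensen then handles countable $\Theta$ in a few lines, with no quantization at all. In exchange, the paper's route gives a familiar typicality and total-variation argument, but not a stronger conclusion.

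The measure-theoretic step you flag can be avoided. Skip the posterior-mixture form of $Q_{X^T|\tilde X^T}$: the inequality $Q_{X^T,\tilde X^T}\ge w_0P^T_{X,\tilde X}$ directly gives $D(P^T_{X,\tilde X}\|Q_{X^T,\tilde X^T})\le\log\frac{1}{w_0}$, as in the paper. The remaining term equals $\mathbb{E}\log\bigl(1+\sum_{[\theta]\neq[\theta_0]}\frac{w_{[\theta]}}{w_0}\Lambda_{\theta,T}\bigr)$, which only needs densities of the noisy marginals with respect to, e.g., $\sum_\theta w(\theta)P_{\tilde X|\theta}$.

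Your exponential rate also absorbs a constant slip in the paper. The appendix proof of Theorem~\ref{theorem:oneshot} actually yields $2L_{\max}\sqrt{\Delta/2}$, not the stated $L_{\max}\sqrt{\Delta/2}$. That would cost a factor $\sqrt{2}$ in a route that only had $\Delta_T\le\exp(-\sqrt{T})$, but it is harmless against your $C\rho^T$.
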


This theorem follows directly from Theorem~\ref{theorem:oneshot} and the following lemma.

\begin{lemma}
\label{lemma:kldiv-iid}
Let $(\jointDist_{\noiselessSignal, \noisySignal | \distributionIndex})_{\distributionIndex \in \distributionIndexDomain}$ be a parametric family of probability distributions on $\noiselessSignalAlphabet \times \noisySignalAlphabet$ and assume that $\distributionIndexDomain$ is countable, $\jointDist_{\noiselessSignal^\seqLength, \noisySignal^\seqLength} = \jointDist_{\noiselessSignal, \noisySignal}^\seqLength = \jointDist_{\distributionIndex_0, \noiselessSignal, \noisySignal}^\seqLength$ for some $\distributionIndex_0 \in \distributionIndexDomain$ with $\distributionIndexWeight(\distributionIndex_0) > 0$, condition \eqref{eq:consistency} is satisfied, and $\universalDist_{\noiselessSignal^\seqLength, \noisySignal^\seqLength}$ is given by \eqref{eq:universal-dist}.

\begin{enumerate}
  \item\label{item:iid-lemma-finite} If $\noisySignalAlphabet$ and $\distributionIndexDomain$ are finite, then for large enough $\seqLength$, we have
  \[
    \kldiv{\jointDist_{\noiselessSignal, \noisySignal}^\seqLength}{\universalDist_{\noiselessSignal^\seqLength, \noisySignal^\seqLength}}
    -
    \kldiv{\jointDist_{\noisySignal}^\seqLength}{\universalDist_{\noisySignal^\seqLength}}
    \leq
    \exp\left(-\sqrt{\seqLength}\right).
  \]
  \item\label{item:iid-lemma-countable} If $\noisySignalAlphabet$ is a general measurable space and $\distributionIndexDomain$ is countable, we have
  \[
  \lim_{\seqLength \rightarrow \infty}\left(
    \kldiv{\jointDist_{\noiselessSignal, \noisySignal}^\seqLength}{\universalDist_{\noiselessSignal^\seqLength, \noisySignal^\seqLength}}
    -
    \kldiv{\jointDist_{\noisySignal}^\seqLength}{\universalDist_{\noisySignal^\seqLength}}
  \right)
  =
  0.
  \]
\end{enumerate}
\end{lemma}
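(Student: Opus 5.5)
We will bound the difference of divergences by the expected negative log posterior weight of the true parameter, computed from the noisy observations alone. The finite and countable cases then both follow from a Bhattacharyya-coefficient estimate.

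\emph{Step 1: merge indistinguishable parameters.} By \eqref{eq:consistency}, any two parameters with the same noisy marginal $\jointDist_{\noisySignal|\distributionIndex}$ have the same joint law. We therefore partition $\distributionIndexDomain$ into classes of parameters with identical joint laws and give each class the sum of its members' weights. This leaves $\universalDist_{\noiselessSignal^\seqLength,\noisySignal^\seqLength}$ in \eqref{eq:universal-dist} unchanged, and it remains true that $\distributionIndexWeight(\distributionIndex_0)>0$. So we may assume without loss of generality that $\distributionIndex\neq\distributionIndex_0$ implies $\jointDist_{\noisySignal|\distributionIndex}\neq\jointDist_{\noisySignal|\distributionIndex_0}$.

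\emph{Step 2: reduce to posterior weights.} Fix a dominating measure, for instance $\universalDist$ itself, and write $p_\distributionIndex$ for the densities of the $\seqLength$-fold products. Define the joint posterior
\[
\pi(\distributionIndex|x^\seqLength,\tilde x^\seqLength)=\frac{\distributionIndexWeight(\distributionIndex)\,p_\distributionIndex(x^\seqLength,\tilde x^\seqLength)}{q(x^\seqLength,\tilde x^\seqLength)},
\]
and the noisy-only posterior $\tilde\pi(\distributionIndex|\tilde x^\seqLength)$ analogously. Then
\[
d\jointDist^\seqLength_{\noiselessSignal,\noisySignal}/d\universalDist=\pi(\distributionIndex_0|\cdot)/\distributionIndexWeight(\distributionIndex_0),
\qquad
d\jointDist^\seqLength_{\noisySignal}/d\universalDist_{\noisySignal^\seqLength}=\tilde\pi(\distributionIndex_0|\cdot)/\distributionIndexWeight(\distributionIndex_0).
\]
Hence the difference of the two divergences equals
\[
\Expectation\bigl[\log\pi(\distributionIndex_0|\noiselessSignal^\seqLength,\noisySignal^\seqLength)-\log\tilde\pi(\distributionIndex_0|\noisySignal^\seqLength)\bigr]
\le \Expectation\bigl[-\log\tilde\pi(\distributionIndex_0|\noisySignal^\seqLength)\bigr],
\]
where the inequality uses $\pi\le1$. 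The right-hand side can be written as $\Expectation\log\bigl(1+\sum_{\distributionIndex\neq\distributionIndex_0}\frac{\distributionIndexWeight(\distributionIndex)}{\distributionIndexWeight(\distributionIndex_0)}\Lambda_\distributionIndex\bigr)$, where $\Lambda_\distributionIndex$ is the likelihood ratio of $\jointDist_{\noisySignal|\distributionIndex}^\seqLength$ to $\jointDist_{\noisySignal|\distributionIndex_0}^\seqLength$ evaluated at $\noisySignal^\seqLength$. By Lemma~\ref{lemma:conditional-kl} the difference is also nonnegative, so it suffices to show that this upper bound tends to zero.

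\emph{Step 3: finite-sum bound.} For any finite set $F\subseteq\distributionIndexDomain\setminus\{\distributionIndex_0\}$, we apply $\log(1+a+b)\le\log(1+a)+\log(1+b)$ to split the sum into the terms in $F$ and the tail.
\begin{itemize}
\item For the $F$-part we use $\log(1+x)\le\sqrt{x}$ for $x\ge0$ and subadditivity of the square root. Since $\Expectation\sqrt{\Lambda_\distributionIndex}=\rho_\distributionIndex^\seqLength$, where $\rho_\distributionIndex<1$ is the Bhattacharyya coefficient of $\jointDist_{\noisySignal|\distributionIndex}$ and $\jointDist_{\noisySignal|\distributionIndex_0}$, the $F$-part is at most $\sum_{\distributionIndex\in F}\sqrt{\distributionIndexWeight(\distributionIndex)/\distributionIndexWeight(\distributionIndex_0)}\,\rho_\distributionIndex^\seqLength$. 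The strict inequality $\rho_\distributionIndex<1$ comes from Step 1.
\item For the tail we use Jensen's inequality and $\Expectation\Lambda_\distributionIndex\le1$, which bound it by $\log\bigl(1+\sum_{\distributionIndex\notin F\cup\{\distributionIndex_0\}}\distributionIndexWeight(\distributionIndex)/\distributionIndexWeight(\distributionIndex_0)\bigr)$, uniformly in $\seqLength$.
\end{itemize}

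\emph{Step 4: conclude.} In case \ref{item:iid-lemma-finite} we take $F$ to be all of $\distributionIndexDomain\setminus\{\distributionIndex_0\}$, so the tail vanishes. The bound becomes $C\rho^\seqLength$ with $\rho=\max_\distributionIndex\rho_\distributionIndex<1$, and $C\rho^\seqLength\le\exp(-\sqrt\seqLength)$ for large $\seqLength$. Finiteness of $\noisySignalAlphabet$ is not even needed for this step. In case \ref{item:iid-lemma-countable}, given $\epsilon>0$, we first choose $F$ so that the tail term is below $\epsilon$, which is possible because $\distributionIndexWeight$ is a p.m.f. For fixed finite $F$ the $F$-part tends to zero as $\seqLength\to\infty$, so the $\limsup$ is at most $\epsilon$.

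The main obstacle is Step 2 in general measurable spaces. One must verify that the Radon--Nikodym derivatives are given by the posterior ratios and that all expectations are well defined and finite. The tail control in Step 3 is the other delicate point, because the naive bound would require $\sum_\distributionIndex\sqrt{\distributionIndexWeight(\distributionIndex)}<\infty$, which need not hold.
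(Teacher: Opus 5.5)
Your proof is correct. After the common opening it takes a genuinely different route from the paper's.

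Your Steps 1--2 coincide with the paper's first moves. It merges parameters with equal noisy marginals in the same way, and your $\mathbb{E}[-\log\tilde\pi(\theta_0|\tilde X^T)]$ is exactly its intermediate bound $\log\frac{1}{w(\theta_0)}-D(P_{\tilde X}^T\|Q_{\tilde X^T})$.

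From there the two arguments diverge.
\begin{itemize}
\item The paper applies $\log(1+x)\le x$ only on a letter-typical set. It bounds the $P^T_{\tilde X|\theta}$-probability of that set by Hoeffding's inequality, via a set attaining the variational distance. It controls the atypical part by a large-deviation bound on empirical frequencies together with $p_{\min,\theta_0}$, both of which need $\tilde{\mathcal X}$ finite. Balancing the two with typicality radius $T^{-1/4}$ is what produces the $\exp(-\sqrt{T})$ rate.
\item For general $\tilde{\mathcal X}$ and countable $\Theta$, the paper then builds a $\theta_0$-dependent quantizer with at most $\frac{1}{8}\log_2 T$ bits. It passes to the quantized problem by data processing and shows that the parameters the quantizer separates exhaust $\Theta\setminus\{\theta_0\}$.
\item You instead apply $\log(1+x)\le\sqrt{x}$ on the whole space, so each term becomes a Hellinger affinity that tensorizes exactly.
\item You dispose of the countable tail with $\log(1+a+b)\le\log(1+a)+\log(1+b)$ and Jensen. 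This gives a bound uniform in $T$ that plays the role of the paper's $w(\Theta_T^c\setminus\{\theta_0\})/w(\theta_0)$ term.
\end{itemize}

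Your route needs no typicality, no quantization and no finiteness of $\tilde{\mathcal X}$. For finite $\Theta$ it gives $C\rho^T$ on arbitrary noisy alphabets. Using $\rho_\theta\le\sqrt{1-\|P_{\tilde X|\theta}-P_{\tilde X|\theta_0}\|_{\mathrm{TV}}^2}$, this is at most $C\exp\bigl(-\frac{T}{2}\min_{\theta\neq\theta_0}\|P_{\tilde X|\theta}-P_{\tilde X|\theta_0}\|_{\mathrm{TV}}^2\bigr)$, which is asymptotically far stronger than $\exp(-\sqrt{T})$. The paper's route relies on the classical typicality/Hoeffding/data-processing toolkit, but for this lemma your argument is both shorter and stronger.

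Two minor points:
\begin{itemize}
\item The measure-theoretic concern you raise about Step 2 is routine. The inequality $w(\theta_0)P^T_{X,\tilde X}\le Q_{X^T,\tilde X^T}$ gives a density bounded by $1/w(\theta_0)$ and makes both divergences finite.
\item Compute $\mathbb{E}\sqrt{\Lambda_\theta}=\rho_\theta^T$ with a product dominating measure such as $\bigl(\frac{1}{2}(P_{\tilde X|\theta}+P_{\tilde X|\theta_0})\bigr)^{\otimes T}$, not with $Q$, which is not a product measure. The likelihood ratio is the same $P^T_{\tilde X}$-almost surely either way.
\end{itemize}
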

The proof of Lemma~\ref{lemma:kldiv-iid} is relegated to Appendix~\ref{appendix:iid-proof}, but we provide a sketch here.

In order to establish Item~\ref{item:iid-lemma-finite}, we use the upper bound
\[
\kldiv{\jointDist_{\noiselessSignal, \noisySignal}^\seqLength}{\universalDist_{\noiselessSignal^\seqLength, \noisySignal^\seqLength}}
\leq
\log\frac{1}{\distributionIndexWeight(\distributionIndex_0)}
\]
which trivially follows from \eqref{eq:universal-dist} and split the expectation that appears in the definition of $-\kldiv{\jointDist_{\noisySignal}^\seqLength}{\universalDist_{\noisySignal^\seqLength}}$ into a typical and an atypical part. We use letter typicality, which is based on the relative frequency of letters from $\noisySignalAlphabet$ appearing in $\noisySignal^\seqLength$ and is also the reason that we need $\noisySignalAlphabet$ to be finite. The atypical part of the expectation can be bounded with standard techniques (essentially, this part of the expectation is small because the probability of drawing an atypical sequence is small). In order to bound the typical part of the expectation, we upper bound the probability of the typical set under $\jointDist_{\noisySignal | \distributionIndex}$ for $\distributionIndex \neq \distributionIndex_0$. While this approach is standard in principle as well, a slight departure from usual methods is required to establish this bound for the probability of the typical set. In the process, we make use of the fact that the finiteness of $\distributionIndexDomain$ implies that the variational distance between $\jointDist_{\noisySignal | \distributionIndex}$ and $\jointDist_{\noisySignal | \distributionIndex_0}$ is uniformly bounded away from $0$ for all $\distributionIndex \neq \distributionIndex_0$.

In order to establish Item~\ref{item:iid-denoising-countable}, we apply a successively finer quantization of $\noisySignalAlphabet$ which depends on $\distributionIndex_0$ and the parametric family $(\jointDist_{\noiselessSignal, \noisySignal | \distributionIndex})_{\distributionIndex \in \distributionIndexDomain}$. It is important to note that this does \emph{not} mean that the denoising strategy requires knowledge of $\distributionIndex_0$ since the quantization is only used in the proof that the regret is small but not in the description of the denoising strategy. The advantage of this parameter-dependent quantization strategy is that we do not require any regularity conditions on $\noisySignalAlphabet$ other than that it is a measurable space.

\begin{remark}
The requirement in Theorem~\ref{theorem:iid-denoising} that the loss is additive with $\lossFunc$ bounded by $\maxLoss$ can be relaxed to only require that $\totalLoss$ is bounded by $\seqLength \maxLoss$. However, additive losses are in our opinion the most relevant case in which we have a bound of this form. We will also see in Lemma~\ref{lemma:simplified-additive} that it is significantly easier to compute \eqref{eq:mismatched-denoising} if the loss is additive.
\end{remark}

\subsection{Sufficient conditions and performance guarantees for uncountable $\distributionIndexDomain$}
\label{sec:iid-uncountable}
For the case of uncountable parametric families of distributions, we focus on the case in which both the parameter set and the signal alphabets are continuous or finite and the parametric family satisfies smoothness and regularity conditions which are common in parametric statistics.

\begin{theorem}
\label{theorem:iid-denoising-continuous}
\emph{(Sufficient conditions for i.i.d. denoising with continuous parameter space).}
Let $\noisySignalAlphabet$ and $\noiselessSignalAlphabet$ be Euclidean or finite spaces, let $(\jointDist_{\noiselessSignal, \noisySignal | \distributionIndex})_{\distributionIndex \in \distributionIndexDomain}$ be a parametric family of probability distributions on $\noiselessSignalAlphabet \times \noisySignalAlphabet$ described by densities or p.m.f.s\footnote{We will use symbols such as $\jointDist_{\noiselessSignal, \noisySignal | \distributionIndex}$ interchangeably for the probability measure and the density or p.m.f. it is described by.} where $\distributionIndexDomain \subseteq \reals^\cbdimension$, and assume that it satisfies the following regularity conditions:
\begin{enumerate}
  \item\label{item:cb-smoothness-density} $\jointDist_{\noiselessSignal, \noisySignal | \distributionIndex}(\noiselessSignalValue, \noisySignalValue)$ and $\jointDist_{\noisySignal | \distributionIndex}(\noisySignalValue)$ are twice continuously differentiable at $\distributionIndex_0$ as functions of $\distributionIndex$ for almost all $\noiselessSignalValue, \noisySignalValue$, and there exists $\smallvalue > 0$ such that for all $\dimensionIndexOne, \dimensionIndexTwo \in \{1, \dots, \cbdimension\}$
  \begin{align*}
    \Expectation_{\jointDist_{\noiselessSignal, \noisySignal | \distributionIndex_0}}
      \sup_{\distributionIndex: \euclidnorm{\distributionIndex-\distributionIndex_0} < \smallvalue}
        \absoluteValue{
          \frac{\partial^2}{\partial\distributionIndex_{(\dimensionIndexOne)} \partial\distributionIndex_{(\dimensionIndexTwo)}}
          \log \jointDist_{\noiselessSignal, \noisySignal | \distributionIndex}(\noiselessSignal, \noisySignal)
        }^2
    &<
    \infty \\
    \Expectation_{\jointDist_{\noiselessSignal, \noisySignal | \distributionIndex_0}}
      \absoluteValue{
        \frac{\partial}{\partial\distributionIndex_{(\dimensionIndexOne)}}
        \log \jointDist_{\noiselessSignal, \noisySignal | \distributionIndex_0}(\noiselessSignal, \noisySignal)
      }^2
    &<
    \infty \\
    \Expectation_{\jointDist_{\noisySignal | \distributionIndex_0}}
      \sup_{\distributionIndex: \euclidnorm{\distributionIndex-\distributionIndex_0} < \smallvalue}
        \absoluteValue{
          \frac{\partial^2}{\partial\distributionIndex_{(\dimensionIndexOne)} \partial\distributionIndex_{(\dimensionIndexTwo)}}
          \log \jointDist_{\noisySignal | \distributionIndex}(\noisySignal)
        }^2
    &<
    \infty \\
    \Expectation_{\jointDist_{\noisySignal | \distributionIndex_0}}
      \absoluteValue{
        \frac{\partial}{\partial\distributionIndex_{(\dimensionIndexOne)}}
        \log \jointDist_{\noisySignal | \distributionIndex_0}(\noisySignal)
      }^2
    &<
    \infty,
  \end{align*}
  where $\distributionIndex_{(\dimensionIndexOne)}$ and $\distributionIndex_{(\dimensionIndexTwo)}$ denote the $\dimensionIndexOne$-th, respectively $\dimensionIndexTwo$-th component of $\distributionIndex$.
  \item\label{item:cb-smoothness-kl} $\kldiv{\jointDist_{\noiselessSignal, \noisySignal | \distributionIndex_0}}{\jointDist_{\noiselessSignal, \noisySignal | \distributionIndex}}$ and $\kldiv{\jointDist_{\noisySignal | \distributionIndex_0}}{\jointDist_{\noisySignal | \distributionIndex}}$ are twice continuously differentiable at $\distributionIndex_0$ as functions of $\distributionIndex$, and they have a positive definite Hessian at $\distributionIndex_0$.
  \item\label{item:cb-fisher-regularity} For all $\dimensionIndexOne, \dimensionIndexTwo \in \{1, \dots, \cbdimension\}$, we have
  \begin{align*}
    \frac{\partial^2}{\partial\distributionIndex_{(\dimensionIndexOne)} \partial\distributionIndex_{(\dimensionIndexTwo)}}
    \int_{\noiselessSignalAlphabet \times \noisySignalAlphabet}
      \jointDist_{\noiselessSignal, \noisySignal | \distributionIndex_0}
        (\noiselessSignalValue, \noisySignalValue)
      d\noiselessSignalValue d\noisySignalValue
    &=
    \int_{\noiselessSignalAlphabet \times \noisySignalAlphabet}
      \frac{\partial^2}{\partial\distributionIndex_{(\dimensionIndexOne)} \partial\distributionIndex_{(\dimensionIndexTwo)}}
      \jointDist_{\noiselessSignal, \noisySignal | \distributionIndex_0}
        (\noiselessSignalValue, \noisySignalValue)
      d\noiselessSignalValue d\noisySignalValue
    \\
    \frac{\partial^2}{\partial\distributionIndex_{(\dimensionIndexOne)} \partial\distributionIndex_{(\dimensionIndexTwo)}}
    \int_{\noisySignalAlphabet}
      \jointDist_{\noisySignal | \distributionIndex_0}
        (\noisySignalValue)
      d\noisySignalValue
    &=
    \int_{\noisySignalAlphabet}
      \frac{\partial^2}{\partial\distributionIndex_{(\dimensionIndexOne)} \partial\distributionIndex_{(\dimensionIndexTwo)}}
      \jointDist_{\noisySignal | \distributionIndex_0}
        (\noisySignalValue)
      d\noisySignalValue.
  \end{align*}
  \item\label{item:cb-positive-density} $\distributionIndexWeight$ is described by a continuous density and $\distributionIndexWeight(\distributionIndex_0) > 0$.
  \item\label{item:cb-soundness} For every sequence $(\distributionIndex_\generalNatural)_{\generalNatural \in \naturals}$ of elements of $\distributionIndexDomain$, we have
  \[
    \jointDist_{\noiselessSignal, \noisySignal | \distributionIndex_\generalNatural}
    \xrightarrow[\generalNatural \rightarrow \infty]{}
    \jointDist_{\noiselessSignal, \noisySignal | \distributionIndex_0}
    \Leftrightarrow
    \distributionIndex_\generalNatural
    \xrightarrow[\generalNatural \rightarrow \infty]{}
    \distributionIndex_0
    \Leftrightarrow
    \jointDist_{\noisySignal | \distributionIndex_\generalNatural}
    \xrightarrow[\generalNatural \rightarrow \infty]{}
    \jointDist_{\noisySignal | \distributionIndex_0},
  \]
  where the convergence of probability measures is in the weak sense.
\end{enumerate}
Then, we have
\begin{align*}
\lim_{\seqLength \rightarrow \infty}
\regretPerComp
&\leq
\frac{\maxLoss}{2}
\sqrt{
  \log \frac{\det \fisherInformation_{\noiselessSignal, \noisySignal}(\distributionIndex_0)}
            {\det \fisherInformation_{\noisySignal}(\distributionIndex_0)}
}
&\text{if the loss is additive with $\lossFunc$ bounded by $\maxLoss$,}
\\
\regretPerComp
&=
\frac{1}{2\seqLength} \log \frac{\det \fisherInformation_{\noiselessSignal, \noisySignal}(\distributionIndex_0)}
                                {\det \fisherInformation_{\noisySignal}(\distributionIndex_0)}
+
\landauo\left(\frac{1}{\seqLength}\right)
&\text{if $\lossFunc$ is the self-information loss,}
\end{align*}
where $\fisherInformation_{\noiselessSignal, \noisySignal}(\distributionIndex_0)$ and $\fisherInformation_{\noisySignal}(\distributionIndex_0)$ denote the Fisher information matrix of the families $(\jointDist_{\noiselessSignal, \noisySignal | \distributionIndex})_{\distributionIndex \in \distributionIndexDomain}$ and $(\jointDist_{\noisySignal | \distributionIndex})_{\distributionIndex \in \distributionIndexDomain}$, respectively, at $\distributionIndex_0$, and $\landauo$ is the Landau symbol denoting asymptotic behavior of terms as $\seqLength \rightarrow \infty$.
\end{theorem}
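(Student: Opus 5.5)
The plan is to reduce the statement to Theorem~\ref{theorem:oneshot}, applied to the induced one-shot problem $\problemTuple_\seqLength$, and then evaluate the two Kullback--Leibler divergences with the classical Clarke--Barron asymptotics for Bayes mixtures. With $\universalDist$ as in \eqref{eq:universal-dist}, its marginal on $\noisySignal^\seqLength$ is the mixture $\int \jointDist^\seqLength_{\noisySignal|\distributionIndex}\,\distributionIndexWeight(d\distributionIndex)$. Hence both $\kldiv{\jointDist^\seqLength_{\noiselessSignal,\noisySignal}}{\universalDist_{\noiselessSignal^\seqLength,\noisySignal^\seqLength}}$ and $\kldiv{\jointDist^\seqLength_{\noisySignal}}{\universalDist_{\noisySignal^\seqLength}}$ are redundancies of Bayes mixtures over the same prior $\distributionIndexWeight$, for the families $(\jointDist_{\noiselessSignal,\noisySignal|\distributionIndex})$ and $(\jointDist_{\noisySignal|\distributionIndex})$ respectively.

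The key step is to show, for each family, that
\[
\kldiv{\jointDist^\seqLength_{\cdot|\distributionIndex_0}}{\textstyle\int \jointDist^\seqLength_{\cdot|\distributionIndex}\distributionIndexWeight(d\distributionIndex)}
= \frac{\cbdimension}{2}\log\frac{\seqLength}{2\pi\eulersNumber} + \frac12\log\det\fisherInformation(\distributionIndex_0) + \log\frac{1}{\distributionIndexWeight(\distributionIndex_0)} + \landauo(1).
\]
Conditions \ref{item:cb-smoothness-density}--\ref{item:cb-soundness} are designed to be exactly the hypotheses of the Clarke--Barron theorem. They supply local smoothness with square-integrable dominating second derivatives and finite score variance, positive definite Hessian of the KL divergence (which equals the Fisher information, via the interchange condition \ref{item:cb-fisher-regularity}), a continuous positive prior density at $\distributionIndex_0$, and identifiability/soundness (condition \ref{item:cb-soundness}) guaranteeing that posterior mass concentrates near $\distributionIndex_0$.

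I would either invoke that theorem directly for each family, or re-derive it along the standard route. That route lower bounds the mixture by restricting the integral to a shrinking ball, and uses a Laplace approximation of $\log\jointDist^\seqLength_{\distributionIndex}/\jointDist^\seqLength_{\distributionIndex_0}$ by a quadratic in $\sqrt{\seqLength}(\distributionIndex-\distributionIndex_0)$ with the score term. The main obstacle is the matching upper bound, which needs uniform control of the mixture contribution from $\distributionIndex$ far from $\distributionIndex_0$. I would handle it using the existence of uniformly consistent tests derived from condition \ref{item:cb-soundness} (weak convergence equivalences), applied separately to the joint and the noisy-marginal families. The subtlety is that the soundness condition for the marginal family is what prevents distant parameters with nearly identical $\jointDist_{\noisySignal|\distributionIndex}$ from inflating the marginal mixture. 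This plays the role of the consistency condition \eqref{eq:consistency}.

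Subtracting the two expansions, the $\frac{\cbdimension}{2}\log\frac{\seqLength}{2\pi\eulersNumber}$ and $\log(1/\distributionIndexWeight(\distributionIndex_0))$ terms cancel, since both families share dimension $\cbdimension$ and prior. This yields
\[
\kldiv{\jointDist^\seqLength_{\noiselessSignal,\noisySignal}}{\universalDist_{\noiselessSignal^\seqLength,\noisySignal^\seqLength}}-\kldiv{\jointDist^\seqLength_{\noisySignal}}{\universalDist_{\noisySignal^\seqLength}} = \frac12\log\frac{\det\fisherInformation_{\noiselessSignal,\noisySignal}(\distributionIndex_0)}{\det\fisherInformation_{\noisySignal}(\distributionIndex_0)}+\landauo(1).
\]
For the self-information loss, \eqref{eq:oneshot-log-loss} gives $\regret$ equal to this quantity, and dividing by $\seqLength$ gives the claimed expansion of $\regretPerComp$. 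For the additive bounded loss, $\totalLoss$ is bounded by $\seqLength\maxLoss$, so \eqref{eq:oneshot-bounded-loss} yields $\regretPerComp\le\maxLoss\sqrt{\frac12\left(\frac12\log\frac{\det\fisherInformation_{\noiselessSignal,\noisySignal}}{\det\fisherInformation_{\noisySignal}}+\landauo(1)\right)}$. Taking the limit gives the stated bound with constant $\maxLoss/2$; in fact this bound on $\regretPerComp$ tends to $0$ because of the extra $1/\seqLength$ factor. Finally, I would check that the logarithm is nonnegative, which holds because $\fisherInformation_{\noiselessSignal,\noisySignal}\succeq\fisherInformation_{\noisySignal}$ by the data-processing inequality for Fisher information.
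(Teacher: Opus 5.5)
Your route is the paper's. You apply Theorem~\ref{theorem:oneshot} to $\problemTuple_\seqLength$, use the Clarke--Barron expansion for both the joint and the noisy-marginal mixtures, and let the common terms cancel. The paper cites~\cite{clarke2002information} directly after matching conditions~\ref{item:cb-smoothness-density}--\ref{item:cb-soundness} to its hypotheses. Your version of the expansion, which includes $\log\frac{1}{\distributionIndexWeight(\distributionIndex_0)}$, is the standard one, and that term cancels as well.

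The one real error is your closing remark. There is no extra $1/\seqLength$ factor in the bounded-loss case. The only $1/\seqLength$ available is the one that cancels $\totalLoss_{\max}=\seqLength\maxLoss$. The KL difference for the $\seqLength$-block is $O(1)$, not $O(1/\seqLength)$, so what remains under the square root is the constant $\frac12\log(\det\fisherInformation_{\noiselessSignal,\noisySignal}(\distributionIndex_0)/\det\fisherInformation_{\noisySignal}(\distributionIndex_0))$ plus $\landauo(1)$. Your bound therefore converges to $\frac{\maxLoss}{2}\sqrt{\log(\cdots)}$, which is strictly positive whenever the determinants differ. Only for the self-information loss, where $\regret$ itself is $O(1)$, does dividing by $\seqLength$ make $\regretPerComp$ vanish. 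This is exactly why the paper says that for bounded losses the theorem gives an asymptotic bound but no universality guarantee.

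A caveat that applies equally to the paper's proof: the constant $\maxLoss/2$ comes from \eqref{eq:oneshot-bounded-loss} as stated. However, the argument in Appendix~\ref{appendix:oneshot-proof} actually ends with $2\totalLoss_{\max}\sqrt{\frac12(\cdots)}$, and the factor $2$ cannot be dropped. Take a trivial noisy alphabet, binary $\noiselessSignal$ with classification loss, $\jointDist_{\noiselessSignal}=\diracDist{0}$, and $\universalDist_{\noiselessSignal}(1)$ slightly above $\frac12$. Then the regret is $1$, while the right-hand side of \eqref{eq:oneshot-bounded-loss} is close to $\sqrt{\frac12\log 2}<1$. With the constant that is actually proved, your computation yields $\maxLoss\sqrt{\log(\det\fisherInformation_{\noiselessSignal,\noisySignal}(\distributionIndex_0)/\det\fisherInformation_{\noisySignal}(\distributionIndex_0))}$ rather than $\frac{\maxLoss}{2}\sqrt{\cdots}$.
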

The proof is a straightforward application of Theorem~\ref{theorem:oneshot} and the results in \cite{clarke2002information} and is given in Appendix~\ref{appendix:iid-proof-continuous}. Conditions \ref{item:cb-smoothness-density} and \ref{item:cb-smoothness-kl} are smoothness assumptions for the parametric families and their Kullback-Leibler divergences. Condition \ref{item:cb-fisher-regularity} is a regularity condition for the Fisher information which assures that the Fisher information can be written in terms of the Hessian of the log likelihood function. Conditions \ref{item:cb-positive-density} and \ref{item:cb-soundness}, on the other hand, can be seen as continuous analogs of the conditions that $\distributionIndexWeight$ is supported on $\distributionIndex_0$ and that \eqref{eq:consistency} holds. One important example given in \cite{clarke2002information} (after Theorem 2.2) is that these assumptions are satisfied for the case of finite-dimensional exponential families. The regret bounds of Theorem~\ref{theorem:iid-denoising-continuous} assure universality of the denoising strategy only for the case of self-information loss in the sense of Definition~\ref{def:universal-denoising}-\ref{item:universal-denoising-perComp}. For the case of bounded loss, while the theorem does not assure a vanishing regret, it does give an upper bound for the asymptotic average regret per component which can still be useful in case the Fisher information can be uniformly bounded for the entire parametric family of distributions.

One remarkable property of Theorem~\ref{theorem:iid-denoising-continuous} is that the result on self-information loss is given exactly (up to higher-order terms) rather than just as an upper bound. Multiplying both sides of the equation with $\seqLength$, we get
\[
\regret
=
\frac{1}{2} \log \frac{\det \fisherInformation_{\noiselessSignal, \noisySignal}(\distributionIndex_0)}
                      {\det \fisherInformation_{\noisySignal}(\distributionIndex_0)}
+
\landauo\left(1\right),
\]
which shows that the total regret does not vanish as $\seqLength \rightarrow \infty$. This is in contrast with the result of Theorem~\ref{theorem:iid-denoising} for the case of countable $\distributionIndexDomain$ and means that (at least as long as the fairly standard regularity conditions of Theorem~\ref{theorem:iid-denoising-continuous} are satisfied), there seems to be a fundamental difference between the cases of countable and continuous $\distributionIndexDomain$ when it comes to the behavior of the denoising strategy defined in \eqref{eq:mismatched-denoising} and \eqref{eq:universal-dist}.

\subsection{Necessity of condition \eqref{eq:consistency}}
\label{sec:iid-necessary}
In this subsection, we give an example of a parametric family which does not satisfy \eqref{eq:consistency} and for which it is not possible to find a universal denoising strategy. This shows that condition \eqref{eq:consistency} is necessary for universal denoising regardless of the denoising strategy employed.
\begin{example}
\label{example:consistency-necessary}
Let $\noiselessSignalAlphabet = \noisySignalAlphabet := \{0,1\}$ be binary alphabets, and let $\distributionIndexDomain := \{1,2\}$. Furthermore, let $\jointDist_{\noiselessSignal,\noisySignal | \distributionIndex} := \jointDist_{\noisySignal | \distributionIndex} \jointDist_{\noiselessSignal | \noisySignal, \distributionIndex}$ where
\begin{align*}
  \jointDist_{\noisySignal | 1} = \jointDist_{\noisySignal | 2} &:= \bernoulli{\frac{1}{2}} \\
  \jointDist_{\noiselessSignal | \noisySignal, 1} &:= \diracDist{\noisySignal} \\
  \jointDist_{\noiselessSignal | \noisySignal, 2} &:= \diracDist{1-\noisySignal}.
\end{align*}
As a loss function, we consider both the classification loss, i.e., $\estSignalAlphabet = \{0,1\}$ and $\lossFunc(\estSignalValue,\noiselessSignalValue) = \indicator{\estSignalValue \neq \noiselessSignalValue}$ (which is an additive loss bounded by $1$), and the self-information loss.
\end{example}
Clearly, with knowledge of $\distributionIndex$, it is possible to achieve zero loss in both cases. Specifically, consider
\[
\estSignalOpt_\seqIndex :=
\begin{cases}
\noisySignal_\seqIndex, &\text{$\lossFunc$ is classification loss, $\distributionIndex=1$}\\
1-\noisySignal_\seqIndex, &\text{$\lossFunc$ is classification loss, $\distributionIndex=2$}\\
\diracDist{\noisySignal_\seqIndex}, &\text{$\lossFunc$ is self-information loss, $\distributionIndex=1$}\\
\diracDist{1-\noisySignal_\seqIndex}, &\text{$\lossFunc$ is self-information loss, $\distributionIndex=2$.}
\end{cases}
\]
This means that the Bayes envelope is $0$ and the denoiser's regret is equal to its loss. Assume that the denoising strategy that yields $\estSignal^\seqLength$ as a function of $\noisySignal^\seqLength$ is universal in the weaker sense of Definition~\ref{def:universal-denoising}-\ref{item:universal-denoising-perComp}. It clearly must satisfy
\[
\Expectation_{\jointDist_{\noiselessSignal^\seqLength, \noisySignal^\seqLength | \distributionIndex}} \sum_{\seqIndex=1}^\infty \lossFunc(\estSignal_\seqIndex,\noiselessSignal_\seqIndex) < \infty
\]
for both values of $\distributionIndex$, which implies in particular $\lim_{\seqIndex \rightarrow \infty} \Expectation_{\jointDist_{\noiselessSignal^\seqLength, \noisySignal^\seqLength | \distributionIndex}} \lossFunc(\estSignal_\seqIndex,\noiselessSignal_\seqIndex) = 0$. Note also that the definition of $\jointDist_{\noiselessSignal|\noisySignal,\distributionIndex}$ in Example~\ref{example:consistency-necessary} implies that $\jointDist_{\noiselessSignal,\noisySignal | 1} = \jointDist_{1-\noiselessSignal,\noisySignal | 2}$. Consequently, we have
\begin{equation}
\label{eq:consistency-necessary-joint-conclusion}
\lim_{\seqIndex \rightarrow \infty}
  \Expectation_{\jointDist_{\noiselessSignal^\seqLength, \noisySignal^\seqLength | 2}}
    \lossFunc(\estSignal_\seqIndex,\noiselessSignal_\seqIndex)
=
0
=
\lim_{\seqIndex \rightarrow \infty}
  \Expectation_{\jointDist_{\noiselessSignal^\seqLength, \noisySignal^\seqLength | 1}}
    \lossFunc(\estSignal_\seqIndex,\noiselessSignal_\seqIndex)
=
\lim_{\seqIndex \rightarrow \infty}
  \Expectation_{\jointDist_{\noiselessSignal^\seqLength, \noisySignal^\seqLength | 2}}
    \lossFunc(\estSignal_\seqIndex,1-\noiselessSignal_\seqIndex)
\end{equation}

\paragraph{Classification loss}
In this case we can write
\[
\Expectation_{\jointDist_{\noiselessSignal^\seqLength, \noisySignal^\seqLength | \distributionIndex}}
  \lossFunc(\estSignal_\seqIndex,\noiselessSignal_\seqIndex)
=
\jointDist_{\noiselessSignal^\seqLength, \noisySignal^\seqLength | \distributionIndex}\left(
  \estSignal_\seqIndex \neq \noiselessSignal_\seqIndex
\right).
\]
Substituting this in \eqref{eq:consistency-necessary-joint-conclusion}, we obtain
\[
\lim_{\seqIndex \rightarrow \infty}
  \jointDist_{\noiselessSignal^\seqLength, \noisySignal^\seqLength | 2}\left(
    \estSignal_\seqIndex \neq \noiselessSignal_\seqIndex
  \right)
=
0
=
\lim_{\seqIndex \rightarrow \infty}
  \jointDist_{\noiselessSignal^\seqLength, \noisySignal^\seqLength | 2}\left(
    \estSignal_\seqIndex = \noiselessSignal_\seqIndex
  \right).
\]
However,
$
\lim_{\seqIndex \rightarrow \infty}
  \jointDist_{\noiselessSignal^\seqLength, \noisySignal^\seqLength | 2}\left(
    \estSignal_\seqIndex \neq \noiselessSignal_\seqIndex
  \right)
=
0
$
implies
\[
\lim_{\seqIndex \rightarrow \infty}
  \jointDist_{\noiselessSignal^\seqLength, \noisySignal^\seqLength | 2}\left(
    \estSignal_\seqIndex = \noiselessSignal_\seqIndex
  \right)
=
\lim_{\seqIndex \rightarrow \infty}\left(
  1
  -
  \jointDist_{\noiselessSignal^\seqLength, \noisySignal^\seqLength | 2}\left(
    \estSignal_\seqIndex \neq \noiselessSignal_\seqIndex
  \right)
\right)
=
1,
\]
which contradicts the original assumption that the denoising strategy is universal.

\paragraph{Self-information loss}
In this case we can write
\[
\Expectation_{\jointDist_{\noiselessSignal^\seqLength, \noisySignal^\seqLength | \distributionIndex}}
  \lossFunc(\estSignal_\seqIndex,\noiselessSignal_\seqIndex)
=
-
\Expectation_{\jointDist_{\noiselessSignal^\seqLength, \noisySignal^\seqLength | \distributionIndex}}
  \log \estSignal_\seqIndex(\noiselessSignal_\seqIndex).
\]
Substituting this in \eqref{eq:consistency-necessary-joint-conclusion}, we obtain
\[
\lim_{\seqIndex \rightarrow \infty}
  \Expectation_{\jointDist_{\noiselessSignal^\seqLength, \noisySignal^\seqLength | 2}}
    \log \estSignal_\seqIndex(\noiselessSignal_\seqIndex)
=
0
=
\lim_{\seqIndex \rightarrow \infty}
  \Expectation_{\jointDist_{\noiselessSignal^\seqLength, \noisySignal^\seqLength | 2}}
    \log \estSignal_\seqIndex(1-\noiselessSignal_\seqIndex).
\]
Since $0$ is the maximum possible value of $\log \estSignal_\seqIndex(1-\noiselessSignal_\seqIndex)$, this implies that we $\jointDist_{\noiselessSignal^\seqLength, \noisySignal^\seqLength | 2}$-almost surely have
\[
\lim_{\seqIndex \rightarrow \infty}
  \log \estSignal_\seqIndex(\noiselessSignal_\seqIndex)
=
0
=
\lim_{\seqIndex \rightarrow \infty}
  \log \estSignal_\seqIndex(1-\noiselessSignal_\seqIndex).
\]
Due to the continuity of the logarithm, we can equivalently write
\[
\lim_{\seqIndex \rightarrow \infty}
  \estSignal_\seqIndex(\noiselessSignal_\seqIndex)
=
1
=
\lim_{\seqIndex \rightarrow \infty}
  \estSignal_\seqIndex(1-\noiselessSignal_\seqIndex).
\]
But since $\estSignal_\seqIndex(1-\noiselessSignal_\seqIndex) = 1 - \estSignal_\seqIndex(\noiselessSignal_\seqIndex)$, we have again arrived at a contradiction of the original assumption that the denoising strategy is universal.

\section{The Plug-in Approach}
\label{sec:plug-in}
It is not necessarily intuitively clear why the denoising strategy given by \eqref{eq:mismatched-denoising} and \eqref{eq:universal-dist} is required. A simpler and more straightforward strategy could be to determine a maximum likelihood estimate $\distributionIndex_\mathrm{ML}$ of $\distributionIndex_0$ from the observation $\noisySignal^\seqLength$ and then estimate the $\estSignal^\seqLength$ which has the lowest expected loss under $\jointDist_{\noiselessSignal | \noisySignal, \distributionIndex_\mathrm{ML}}^\seqLength$. This so-called \emph{plug-in approach} is known to be in general suboptimal in the context of universal prediction~\cite{merhav1998universal}, and in this section we show that this is indeed also the case for universal denoising. Intuitively, there might be $\distributionIndex \in \distributionIndexDomain$ for which $\jointDist_{\noisySignal | \distributionIndex}$ and $\jointDist_{\noisySignal | \distributionIndex_0}$ are very similar while $\jointDist_{\noiselessSignal | \noisySignal, \distributionIndex}$ and $\jointDist_{\noiselessSignal | \noisySignal, \distributionIndex_0}$ differ dramatically. While the plug-in approach always has to commit to one hypothesis on what $\distributionIndex_0$ is, the mixture approach given by \eqref{eq:mismatched-denoising} and \eqref{eq:universal-dist} is more flexible in that it always at least partially considers the true $\jointDist_{\noiselessSignal | \noisySignal, \distributionIndex_0}$. We make this intuition more rigorous in the following example.

\begin{example}
\label{example:plugin-bad}
Let $\noiselessSignalAlphabet := \noisySignalAlphabet := \estSignalAlphabet := \{0,1\}$, consider the classification loss function $\lossFunc: (\estSignalValue,\noiselessSignalValue) \mapsto \indicator{\estSignalValue\neq\noiselessSignalValue}$, let $\distributionIndexDomain := \{0, 1, 2, \dots\}$ with $\distributionIndex_0 := 0$, and define $\jointDist_{\noiselessSignal, \noisySignal | \distributionIndex}$ via
\begin{align*}
\jointDist_{\noiselessSignal | 0} &:= \frac{1}{2} \diracDist{0} + \frac{1}{2} \diracDist{1},&
\jointDist_{\noisySignal|\noiselessSignal, 0} &:= \diracDist{\noiselessSignal}
\\
\text{for } \distributionIndex \in \distributionIndexDomain \setminus \{0\}:~
\jointDist_{\noiselessSignal | \distributionIndex}
&:=
\left( \frac{1}{2} + \frac{1}{2+\distributionIndex}(-1)^{\distributionIndex}\right) \diracDist{0}
+
\left( \frac{1}{2} - \frac{1}{2+\distributionIndex}(-1)^{\distributionIndex}\right) \diracDist{1},&
\jointDist_{\noisySignal|\noiselessSignal, \distributionIndex} &:= \diracDist{1-\noiselessSignal}
\end{align*}
\end{example}

\paragraph{Plug-in approach} The learner's strategy we analyze is as follows: based on the noisy observations $\noisySignal_1, \dots, \noisySignal^\seqLength$, choose the maximum likelihood parameter
\[
\distributionIndex_\mathrm{ML}
:=
\argmax_{\distributionIndex \in \distributionIndexDomain}
  \jointDist_{\noisySignal | \distributionIndex}(\noisySignal_1)
  \cdots
  \jointDist_{\noisySignal | \distributionIndex}(\noisySignal_\seqLength)
\]
and use denoising strategy \eqref{eq:mismatched-denoising} with $\universalDist_{\noiselessSignal^\seqLength, \noisySignal^\seqLength} := \jointDist_{\noiselessSignal, \noisySignal | \distributionIndex_\mathrm{ML}}^\seqLength$. Clearly, the total regret will be $0$ if $\distributionIndex_\mathrm{ML} = 0$ and $\seqLength$ otherwise.

\begin{lemma}
\label{lemma:binomial-ml}
$\distributionIndex_\mathrm{ML} = 0$ iff $\countingFunction(\noisySignal^\seqLength,0) = \frac{\seqLength}{2}$.
\end{lemma}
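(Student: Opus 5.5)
The plan is to reduce the maximum likelihood problem to a one-dimensional maximization over the Bernoulli parameters of the noisy marginals. Then the claim follows from unimodality of the binomial likelihood together with the fact that these parameters accumulate at $\frac{1}{2}$ from both sides.

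First I compute the noisy marginals in Example~\ref{example:plugin-bad}. Write $\binomParameter_\distributionIndex := \jointDist_{\noisySignal|\distributionIndex}(0)$.
\begin{itemize}
\item For $\distributionIndex = 0$ we have $\noisySignal = \noiselessSignal$, so $\binomParameter_0 = \frac{1}{2}$.
\item For $\distributionIndex \geq 1$ we have $\noisySignal = 1-\noiselessSignal$, so $\binomParameter_\distributionIndex = \jointDist_{\noiselessSignal|\distributionIndex}(1) = \frac{1}{2} - \frac{(-1)^\distributionIndex}{2+\distributionIndex}$.
\end{itemize}
Hence $\binomParameter_\distributionIndex \neq \frac{1}{2}$ for every $\distributionIndex \geq 1$. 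Moreover, $\binomParameter_\distributionIndex \downarrow \frac{1}{2}$ along odd $\distributionIndex$ and $\binomParameter_\distributionIndex \uparrow \frac{1}{2}$ along even $\distributionIndex \geq 2$.

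Let $\binomSuccesses := \countingFunction(\noisySignal^\seqLength,0)$. The likelihood of $\distributionIndex$ is $\binomLikelihood(\binomParameter_\distributionIndex)$, where
\[
\binomLikelihood(\binomParameter) := \binomParameter^{\binomSuccesses}(1-\binomParameter)^{\seqLength-\binomSuccesses}.
\]
For $0<\binomSuccesses<\seqLength$, $\log\binomLikelihood$ is strictly concave on $(0,1)$ with unique maximizer $\binomSuccesses/\seqLength$. So $\binomLikelihood$ is strictly increasing on $(0,\binomSuccesses/\seqLength]$ and strictly decreasing on $[\binomSuccesses/\seqLength,1)$. In the edge cases $\binomSuccesses=\seqLength$ or $\binomSuccesses=0$, $\binomLikelihood$ is strictly monotone on $[0,1]$, which is the same statement with maximizer $1$ or $0$.

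For the direction ``if'', suppose $\binomSuccesses = \seqLength/2$. The unique maximizer of $\binomLikelihood$ over $[0,1]$ is then $\frac{1}{2} = \binomParameter_0$, and $\binomParameter_\distributionIndex \neq \frac{1}{2}$ for all $\distributionIndex\ge 1$. Hence $\binomLikelihood(\binomParameter_\distributionIndex) < \binomLikelihood(\binomParameter_0)$ for every $\distributionIndex \geq 1$, and $\distributionIndex_\mathrm{ML}=0$ uniquely.

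For the direction ``only if'', suppose $\binomSuccesses \neq \seqLength/2$. This always holds when $\seqLength$ is odd.
\begin{itemize}
\item If $\binomSuccesses/\seqLength > \frac{1}{2}$, I choose an odd $\distributionIndex$ large enough that $\frac{1}{2} < \binomParameter_\distributionIndex \le \binomSuccesses/\seqLength$. Since $\binomLikelihood$ is strictly increasing on $[\frac{1}{2}, \binomSuccesses/\seqLength]$, this gives $\binomLikelihood(\binomParameter_\distributionIndex) > \binomLikelihood(\frac{1}{2})$.
\item If $\binomSuccesses/\seqLength<\frac{1}{2}$, I argue symmetrically with large even $\distributionIndex$.
\end{itemize}
In either case $0$ is not a maximizer. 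Existence of the maximizer should also be addressed, so that $\distributionIndex_\mathrm{ML}$ is well-defined. The values $\binomLikelihood(\binomParameter_\distributionIndex)$ converge to $\binomLikelihood(\frac{1}{2})$, which is strictly below $\binomLikelihood(\binomParameter_{\distributionIndex'})$ for the $\distributionIndex'$ found above. Hence only finitely many $\distributionIndex$ compete for the supremum, and the supremum is attained at some $\distributionIndex \ne 0$.

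The only mildly delicate point is this existence issue together with the edge cases $\binomSuccesses \in\{0,\seqLength\}$. Neither poses a real obstacle once the strict monotonicity of $\binomLikelihood$ on each side of $\binomSuccesses/\seqLength$ is recorded.
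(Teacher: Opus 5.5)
Your proof is correct and follows the same route as the paper. Both arguments use strict monotonicity of the binomial likelihood on either side of $\countingFunction(\noisySignal^\seqLength,0)/\seqLength$, together with the fact that the parameters for $\distributionIndex \geq 1$ accumulate at $\frac{1}{2}$ from both sides. Your version is somewhat more careful than the paper's in three ways:
\begin{itemize}
\item You use the correct noisy marginal $\jointDist_{\noisySignal|\distributionIndex}(0) = \frac{1}{2} - \frac{(-1)^\distributionIndex}{2+\distributionIndex}$. The paper writes the noiseless parameter instead, which is harmless because that set also accumulates at $\frac{1}{2}$ from both sides.
\item You make the edge cases $\countingFunction(\noisySignal^\seqLength,0) \in \{0,\seqLength\}$ explicit.
\item You verify that the maximizer $\distributionIndex_\mathrm{ML}$ actually exists when $\countingFunction(\noisySignal^\seqLength,0) \neq \frac{\seqLength}{2}$.
\end{itemize}
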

\begin{proof}
Let
\[
\binomLikelihood(\binomParameter; \binomTrials, \binomSuccesses)
:=
\log\left(
  \binom{\binomTrials}{\binomSuccesses}
  \binomParameter^{\binomSuccesses}
  (1-\binomParameter)^{\binomTrials-\binomSuccesses}
\right)
=
\log \binom{\binomTrials}{\binomSuccesses}
+
\binomSuccesses \log \binomParameter
+
(\binomTrials-\binomSuccesses) \log (1-\binomParameter)
\]
denote the binomial log-likelihood of parameter $\binomParameter$ given $\binomTrials$ trials and $\binomSuccesses$ successes. We compute its derivative as
\[
\binomLikelihood'(\binomParameter; \binomTrials, \binomSuccesses)
:=
\frac{\partial}
     {\partial \binomParameter}
\binomLikelihood(\binomParameter; \binomTrials, \binomSuccesses)
=
\frac{\binomSuccesses}{\binomParameter}
-
\frac{\binomTrials-\binomSuccesses}{1-\binomParameter}.
\]
Clearly, $\binomLikelihood(\binomParameter; \binomTrials, \binomSuccesses)$ is a continuous function of $\binomParameter$, and it can be checked that
\begin{align*}
\binomLikelihood'(\binomParameter; \binomTrials, \binomSuccesses) &> 0 \text{ (i.e., $\binomLikelihood$ is increasing) }& \text{ for } \binomParameter &< \binomSuccesses/\binomTrials
\\
\binomLikelihood'(\binomParameter; \binomTrials, \binomSuccesses) &= 0 \text{ (i.e., $\binomLikelihood$ takes its maximum value) }& \text{ for } \binomParameter &= \binomSuccesses/\binomTrials
\\
\binomLikelihood'(\binomParameter; \binomTrials, \binomSuccesses) &< 0 \text{ (i.e., $\binomLikelihood$ is decreasing) }& \text{ for } \binomParameter &> \binomSuccesses/\binomTrials.
\end{align*}
From this, it is immediately clear that if $\countingFunction(\noisySignal^\seqLength,0) = \frac{\seqLength}{2}$, $\binomLikelihood(\binomParameter; \seqLength, \seqLength/2)$ takes its maximum value at $\binomParameter = 1/2$ and therefore $\distributionIndex_\mathrm{ML} = 0$. If, on the other hand, $\countingFunction(\noisySignal^\seqLength,0) \neq \frac{\seqLength}{2}$, then $\binomLikelihood(\binomParameter; \seqLength, \countingFunction(\noisySignal^\seqLength,0))$ takes its maximum value at $\binomParameter=\binomParameter_{\max}:=\countingFunction(\noisySignal^\seqLength,0)/\seqLength \neq 1/2$. Clearly, there is $\distributionIndex \in \distributionIndexDomain \setminus \{0\}$ such that
\[
\frac{1}{2} + \frac{1}{2+\distributionIndex}(-1)^{\distributionIndex} \in \left(\frac{1}{2}, \binomParameter_{\max}\right)
\text{ or }
\frac{1}{2} + \frac{1}{2+\distributionIndex}(-1)^{\distributionIndex} \in \left(\binomParameter_{\max}, \frac{1}{2}\right)
\]
(only one of these intervals will be nonempty). From the established monotonicity behavior of $\binomLikelihood$, it is clear that $\binomLikelihood(1/2; \seqLength, \countingFunction(\noisySignal^\seqLength,0)) < \binomLikelihood(1/2; \seqLength, \countingFunction(\noisySignal^\seqLength,\distributionIndex))$ and hence $\distributionIndex_\mathrm{ML} \neq 0$.
\end{proof}
Lemma~\ref{lemma:binomial-ml} establishes that $\regret = \seqLength$ whenever $\seqLength$ is odd. For even $\seqLength$, we can compute
\begin{align}
\nonumber
\regret
&=
\jointDist_{\noiselessSignal, \noisySignal | \distributionIndex_0}\left(\distributionIndex_\mathrm{ML} \neq 0\right) \cdot \seqLength
\\
\nonumber
\overset{(a)}&{=}
\jointDist_{\noiselessSignal, \noisySignal | \distributionIndex_0}\left(\countingFunction(\noisySignal^\seqLength, 0) \neq \frac{1}{2}\right) \cdot \seqLength
\\
\nonumber
\overset{(b)}&{=}
\left(
  1
  -
  \binom{\seqLength}{\frac{\seqLength}{2}}
  2^{-\seqLength}
\right)
\cdot
\seqLength
\\
\nonumber
&=
\left(
  1
  -
  \frac{\seqLength!}{\left(\frac{\seqLength}{2}\right)!^2}
2^{-\seqLength}
\right)
\cdot
\seqLength
\\
\nonumber
\overset{(c)}&{>}
\left(
  1
  -
  \frac{\sqrt{2\pi} \seqLength^{\seqLength+\frac{1}{2}} \exp(-\seqLength+1)}
       {
         \left(
           \sqrt{2\pi}
           \left(\frac{\seqLength}{2}\right)^{\frac{\seqLength}{2}+\frac{1}{2}}
           \exp\left(-\frac{\seqLength}{2}\right)
         \right)^2
       }
2^{-\seqLength}
\right)
\cdot
\seqLength
\\
\nonumber
&=
\left(
  1
  -
  \frac{\eulersNumber}{\sqrt{\pi}}
  \cdot
  \frac{\left(
        \frac{\seqLength}{\frac{\seqLength}{2}}
        \right)^{\seqLength+\frac{1}{2}}
      }
      {\sqrt{\seqLength}}
2^{-\seqLength}
\right)
\cdot
\seqLength
\\
\label{eq:plugin-lower-bound}
&=
\seqLength
-
\eulersNumber
\sqrt{
  \frac{2\seqLength}{\pi}
},
\end{align}
where in step (a) we have used Lemma~\ref{lemma:binomial-ml}, in step (b) we have substituted the binomial p.m.f., and in step (c) we have used
\[
\sqrt{2\pi}
\generalNatural^{\generalNatural+\frac{1}{2}}
\exp(-\generalNatural)
<
\generalNatural!
<
\sqrt{2\pi}
\generalNatural^{\generalNatural+\frac{1}{2}}
\exp(-\generalNatural+1)
\]
which is a slightly weaker form of the first two equations in~\cite{robbins1955remark}. Hence, in both cases the total regret will grow linearly with $\seqLength$.

\paragraph{Mixture approach}
We analyze the learner's strategy given by \eqref{eq:mismatched-denoising} and \eqref{eq:universal-dist}. Defining the function
\[
\exampleMixFunc(\distributionIndex,\noisySignalValue^\seqLength)
:=
\left(\frac{1}{2} + \frac{1}{2+\distributionIndex}(-1)^{\distributionIndex}\right)^{\countingFunction(\noisySignalValue^\seqLength,0)}
\left(\frac{1}{2} - \frac{1}{2+\distributionIndex}(-1)^{\distributionIndex}\right)^{\countingFunction(\noisySignalValue^\seqLength,1)},
\]
we have
\begin{align*}
\universalDist_{\noiselessSignal^\seqLength, \noisySignal^\seqLength}(\noiselessSignalValue^\seqLength,\noisySignalValue^\seqLength)
&=
\distributionIndexWeight(\distributionIndex_0)
2^{-\seqLength}
\indicator{\noiselessSignalValue^\seqLength=\noisySignalValue^\seqLength}
+
\sum_{\distributionIndex=1}^\infty
  \distributionIndexWeight(\distributionIndex)
  \exampleMixFunc(\distributionIndex,\noisySignalValue^\seqLength)
  \indicator{\noiselessSignalValue^\seqLength=(1-\noisySignalValue_1,\dots,1-\noisySignalValue_\seqLength)},
\\
\universalDist_{\noisySignal^\seqLength}(\noisySignalValue^\seqLength)
&=
\sum_{\noiselessSignalValue^\seqLength \in \{0,1\}^\seqLength}
\universalDist_{\noiselessSignal^\seqLength, \noisySignal^\seqLength}(\noiselessSignalValue^\seqLength,\noisySignalValue^\seqLength)
\\
&=
\distributionIndexWeight(\distributionIndex_0)
2^{-\seqLength}
+
\sum_{\distributionIndex=1}^\infty
  \distributionIndexWeight(\distributionIndex)
  \exampleMixFunc(\distributionIndex,\noisySignalValue^\seqLength)
\\
\universalDist_{\noiselessSignal^\seqLength|\noisySignal^\seqLength}(\noiselessSignalValue^\seqLength|\noisySignalValue^\seqLength)
&=
\frac{\universalDist_{\noiselessSignal^\seqLength, \noisySignal^\seqLength}(\noiselessSignalValue^\seqLength,\noisySignalValue^\seqLength)}
     {\universalDist_{\noisySignal^\seqLength}(\noisySignalValue^\seqLength)}
\\
&=
\begin{multlined}[t]
\indicator{\noiselessSignalValue^\seqLength=\noisySignalValue^\seqLength}
\left(
  1
  +
  2^\seqLength
  \sum_{\distributionIndex=1}^\infty
    \frac{\distributionIndexWeight(\distributionIndex)}{\distributionIndexWeight(\distributionIndex_0)}
    \exampleMixFunc(\distributionIndex,\noisySignalValue^\seqLength)
\right)^{-1}
\\
+
\indicator{\noiselessSignalValue^\seqLength=(1-\noisySignalValue_1,\dots,1-\noisySignalValue_\seqLength)}
\left(
  1
  +
  \left(
    2^\seqLength
    \sum_{\distributionIndex=1}^\infty
      \frac{\distributionIndexWeight(\distributionIndex)}{\distributionIndexWeight(\distributionIndex_0)}
      \exampleMixFunc(\distributionIndex,\noisySignalValue^\seqLength)
  \right)^{-1}
\right)^{-1}.
\end{multlined}
\end{align*}

\begin{figure}
\ref*{plugin-legend}

\begin{subfigure}[t]{.45\textwidth}
\begin{tikzpicture}
\begin{axis}[
  ymin=0,
  ymax=1,
  xlabel={$\seqLength$},
  ylabel={expectation of (\ref*{eq:posterior-nonflip-probability})},
  cycle list name=color list,
]
  \addplot table[x=seq_len, y=nonflip_0, col sep=comma] {plugin-example.csv};
  \addplot table[x=seq_len, y=nonflip_1, col sep=comma] {plugin-example.csv};
  \addplot table[x=seq_len, y=nonflip_2, col sep=comma] {plugin-example.csv};
  \addplot table[x=seq_len, y=nonflip_3, col sep=comma] {plugin-example.csv};
  \addplot table[x=seq_len, y=nonflip_4, col sep=comma] {plugin-example.csv};
  \addplot table[x=seq_len, y=nonflip_5, col sep=comma] {plugin-example.csv};
  \addplot table[x=seq_len, y=nonflip_6, col sep=comma] {plugin-example.csv};
  \addplot table[x=seq_len, y=nonflip_7, col sep=comma] {plugin-example.csv};
  \addplot table[x=seq_len, y=nonflip_8, col sep=comma] {plugin-example.csv};
  \addplot table[x=seq_len, y=nonflip_9, col sep=comma] {plugin-example.csv};
  \addplot table[x=seq_len, y=nonflip_10, col sep=comma] {plugin-example.csv};
\end{axis}
\end{tikzpicture}
\caption{Expected non-flipping probability of $\universalDist_{\noiselessSignal^\seqLength|\noisySignal^\seqLength}$ in the mixture approach.}
\label{fig:plugin-nonflip}
\end{subfigure}
\hspace{.05\textwidth}
\begin{subfigure}[t]{.45\textwidth}
\begin{tikzpicture}
\begin{axis}[
  ymin=0,
  ymax=1,
  xlabel={$\seqLength$},
  ylabel={$\regretPerComp$},
  legend to name={plugin-legend},
  legend columns=6,
  cycle list name=color list
]
  \addplot table[x=seq_len, y=regret_0, col sep=comma] {plugin-example.csv};
  \addplot table[x=seq_len, y=regret_1, col sep=comma] {plugin-example.csv};
  \addplot table[x=seq_len, y=regret_2, col sep=comma] {plugin-example.csv};
  \addplot table[x=seq_len, y=regret_3, col sep=comma] {plugin-example.csv};
  \addplot table[x=seq_len, y=regret_4, col sep=comma] {plugin-example.csv};
  \addplot table[x=seq_len, y=regret_5, col sep=comma] {plugin-example.csv};
  \addplot table[x=seq_len, y=regret_6, col sep=comma] {plugin-example.csv};
  \addplot table[x=seq_len, y=regret_7, col sep=comma] {plugin-example.csv};
  \addplot table[x=seq_len, y=regret_8, col sep=comma] {plugin-example.csv};
  \addplot table[x=seq_len, y=regret_9, col sep=comma] {plugin-example.csv};
  \addplot table[x=seq_len, y=regret_10, col sep=comma] {plugin-example.csv};
  \addplot[domain=0:1e4,samples=200,style=densely dashed,thick] {1-e*sqrt(2/(pi*x))};
  \legend{$\distributionIndexWeight(\distributionIndex_0) = 10^{-1}$,
          $\distributionIndexWeight(\distributionIndex_0) = 10^{-2}$,
          $\distributionIndexWeight(\distributionIndex_0) = 10^{-3}$,
          $\distributionIndexWeight(\distributionIndex_0) = 10^{-4}$,
          $\distributionIndexWeight(\distributionIndex_0) = 10^{-5}$,
          $\distributionIndexWeight(\distributionIndex_0) = 10^{-6}$,
          $\distributionIndexWeight(\distributionIndex_0) = 10^{-7}$,
          $\distributionIndexWeight(\distributionIndex_0) = 10^{-8}$,
          $\distributionIndexWeight(\distributionIndex_0) = 10^{-9}$,
          $\distributionIndexWeight(\distributionIndex_0) = 10^{-10}$,
          $\distributionIndexWeight(\distributionIndex_0) = 10^{-11}$,
          plug-in (even $\seqLength$)}
\end{axis}
\end{tikzpicture}
\caption{Expected regret of the learner per component.}
\label{fig:plugin-regret}
\end{subfigure}
\caption{Numerical comparison of the performance of the plug-in approach with the mixture approach.}
\label{fig:plugin-plots}
\end{figure}

This means that the channel from $\noiselessSignal^\seqLength$ to $\noisySignal^\seqLength$ under $\universalDist_{\noiselessSignal^\seqLength, \noisySignal^\seqLength}(\noiselessSignalValue^\seqLength,\noisySignalValue^\seqLength)$ returns its input with probability
\begin{equation}
\label{eq:posterior-nonflip-probability}
\left(
  1
  +
  2^\seqLength
  \sum_{\distributionIndex=1}^\infty
    \frac{\distributionIndexWeight(\distributionIndex)}{\distributionIndexWeight(\distributionIndex_0)}
    \exampleMixFunc(\distributionIndex,\noisySignalValue^\seqLength)
\right)^{-1}
\end{equation}
and otherwise flips all of its input bits. Under the classification loss, the learner will return the estimate $\estSignal^\seqLength$ that is most likely to equal $\noiselessSignal^\seqLength$ given the observation $\noisySignal^\seqLength$ under $\universalDist_{\noiselessSignal^\seqLength, \noisySignal^\seqLength}(\noiselessSignalValue^\seqLength,\noisySignalValue^\seqLength)$. This means that the regret is $0$ whenever the quantity \eqref{eq:posterior-nonflip-probability} is greater than $1/2$ and the regret is $\seqLength$ whenever \eqref{eq:posterior-nonflip-probability} is less than $1/2$.


It is clear from Theorem~\ref{theorem:iid-denoising} that the regret incurred with the mixture approach vanishes as $\seqLength \rightarrow \infty$. In order to get a better idea of what this looks like numerically and what the conditional mixture distribution $\universalDist_{\noiselessSignal^\seqLength|\noisySignal^\seqLength}$ looks like on average, we show the expectation of the probability \eqref{eq:posterior-nonflip-probability} along with the expected regret the denoiser incurs with the mixture approach compared to the plug-in lower bound \eqref{eq:plugin-lower-bound} in Fig.~\ref{fig:plugin-plots}. Since countably infinite $\distributionIndexDomain$ do not admit uniform $\distributionIndexWeight$, the convergence speed crucially depends on $\distributionIndexWeight(\distributionIndex_0)$. We show plots for various values of $\distributionIndexWeight(\distributionIndex_0)$ (which for realistic scenarios can be expected to be very small), where the probability masses of the other parameters are chosen as
\[
  \distributionIndexWeight(\distributionIndex)
  :=
  2^{-\distributionIndex} (1-\distributionIndexWeight(\distributionIndex_0)).
\]
We can see that even for very small values of $\distributionIndexWeight(\distributionIndex_0)$, the expected regret vanishes in the mixture approach for reasonable sequence lengths whereas it converges to its maximum value $1$ quite quickly with the plug-in approach.

\section{Numerical Example: Denoising of Poisson Signals}
\label{sec:poisson}
Although this work is primarily focused on the theoretical results of Theorems~\ref{theorem:oneshot}, \ref{theorem:iid-denoising}, and \ref{theorem:iid-denoising-continuous}, we show in this section that it is in principle possible to implement the denoising strategy we have analyzed. The main practical difficulty that arises is solving the optimization problem \eqref{eq:mismatched-denoising} with the probability distribution defined in \eqref{eq:universal-dist}. Whether or not this is tractable and how difficult and computationally expensive it is depends heavily on the underlying parametric family $(\jointDist_{\noiselessSignal, \noisySignal | \distributionIndex})_{\distributionIndex \in \distributionIndexDomain}$. Here, we show how it can be done in the case of an example of a denoising problem in which the signals follow a Poisson distribution.

\subsection{Preliminaries}
In this subsection, we establish a lemma that holds with some generality but will later help us in implementing our denoising strategy for the numerical example given in Section~\ref{section:poisson}. The lemma shows that the loss can be estimated with respect to the per-component marginals which makes the computation of \eqref{eq:mismatched-denoising} computationally more tractable.

\begin{lemma}
\label{lemma:simplified-additive}
\emph{(Simplified optimization for additive losses)}
In the case of denoising of (not necessarily i.i.d.) sequences, if the loss is additive, that is
\[
\totalLoss(\estSignalValue^\seqLength, \noiselessSignalValue^\seqLength)
=
\sum_{\seqIndex=1}^\seqLength
  \lossFunc(\estSignalValue_\seqIndex, \noiselessSignalValue_\seqIndex),
\]
and $\noiselessSignalAlphabet$ is countable, then $\estSignal^\seqLength$ solves the optimization \eqref{eq:mismatched-denoising} if and only if
\[
\forall \seqIndex\in\{1,\dots,\seqLength\}:~
\estSignalValue_\seqIndex
\in
\argmax
\Expectation_{\universalDist_{\noiselessSignal_\seqIndex | \noisySignal^\seqLength}}
  \lossFunc(\estSignalValue_\seqIndex,\noiselessSignal_\seqIndex).
\]
\end{lemma}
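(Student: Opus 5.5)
The argument is a decoupling argument. For a fixed observation $\noisySignalValue^\seqLength$, the objective in \eqref{eq:mismatched-denoising} becomes a sum of $\seqLength$ terms, and the $\seqIndex$-th term depends on the candidate estimate only through $\estSignalValue_\seqIndex$. Since the objective in \eqref{eq:mismatched-denoising} is an expected loss to be \emph{minimized}, I read the $\argmax$ in the statement as $\argmin$. This matches the mismatched strategy \eqref{eq:mismatched-denoising}, and I prove the statement in that form.

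The first step is to rewrite the conditional expected loss. Fix $\noisySignalValue^\seqLength$ and $\estSignalValue^\seqLength \in \estSignalAlphabet^\seqLength$. Because $\noiselessSignalAlphabet$ is countable, $\universalDist_{\noiselessSignal^\seqLength|\noisySignal^\seqLength}(\cdot|\noisySignalValue^\seqLength)$ is a p.m.f. on the countable set $\noiselessSignalAlphabet^\seqLength$. Its $\seqIndex$-th marginal $\universalDist_{\noiselessSignal_\seqIndex|\noisySignal^\seqLength}$ is obtained by summing out the other coordinates. Since $\lossFunc \ge 0$, Tonelli's theorem lets us exchange the two sums even when some terms are $+\infty$:
\[
\Expectation_{\universalDist_{\noiselessSignal^\seqLength|\noisySignal^\seqLength}} \totalLoss(\estSignalValue^\seqLength,\noiselessSignal^\seqLength)
= \sum_{\noiselessSignalValue^\seqLength} \universalDist_{\noiselessSignal^\seqLength|\noisySignal^\seqLength}(\noiselessSignalValue^\seqLength|\noisySignalValue^\seqLength)\sum_{\seqIndex=1}^\seqLength \lossFunc(\estSignalValue_\seqIndex,\noiselessSignalValue_\seqIndex)
= \sum_{\seqIndex=1}^\seqLength g_\seqIndex(\estSignalValue_\seqIndex),
\]
where $g_\seqIndex(\estSignalValue) := \Expectation_{\universalDist_{\noiselessSignal_\seqIndex|\noisySignal^\seqLength}} \lossFunc(\estSignalValue,\noiselessSignal_\seqIndex) \in [0,\infty]$. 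Countability is used here so that the marginals are defined elementarily and the interchange is a plain series rearrangement. No i.i.d. structure of $\universalDist$ is needed, which is consistent with the "not necessarily i.i.d." phrasing.

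The second step is the elementary fact that minimizing $\sum_\seqIndex g_\seqIndex(\estSignalValue_\seqIndex)$ over the product $\estSignalAlphabet^\seqLength$ reduces to minimizing each $g_\seqIndex$ separately. For the "if" direction, suppose each $\estSignalValue_\seqIndex$ minimizes $g_\seqIndex$. Then for any $y^\seqLength$ we have $g_\seqIndex(\estSignalValue_\seqIndex) \le g_\seqIndex(y_\seqIndex)$ for every $\seqIndex$, and summing these inequalities shows that $\estSignalValue^\seqLength$ is optimal. For the "only if" direction, suppose $\estSignalValue^\seqLength$ is optimal but some $\estSignalValue_s$ does not minimize $g_s$. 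Pick $y$ with $g_s(y) < g_s(\estSignalValue_s)$ and replace the $s$-th coordinate by $y$. The objective then drops strictly, provided the remaining terms $\sum_{\seqIndex\neq s} g_\seqIndex(\estSignalValue_\seqIndex)$ are finite. This contradicts optimality.

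The main obstacle is the degenerate case of infinite values in the "only if" direction. If some other coordinate has $g_\seqIndex \equiv +\infty$ on all of $\estSignalAlphabet$, every $\estSignalValue^\seqLength$ attains the minimal value $+\infty$, so the equivalence fails as literally stated. I would handle this by noting that it cannot occur if the optimal value is finite, because then each optimal coordinate term is finite. Concretely, let $\estSignalValue^\seqLength$ be optimal with finite objective value. Then every $g_\seqIndex(\estSignalValue_\seqIndex) < \infty$, and the exchange argument above applies. The remaining case is the trivial one where the infimum is $+\infty$; there I would add the mild assumption, for instance finite $\inf g_\seqIndex$ for each $\seqIndex$ (which holds automatically for bounded $\lossFunc$), or simply remark that in this case the statement holds with the convention that every estimate is optimal. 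I would also remark that the existence of minimizers is the tacit assumption of Remark~\ref{remark:measurability}.
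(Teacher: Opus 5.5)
Your proposal is correct and follows the same route as the paper: exchange the sum over $\noiselessSignalAlphabet^\seqLength$ with the sum over $\seqIndex$ to write the conditional expected total loss as $\sum_{\seqIndex}\Expectation_{\universalDist_{\noiselessSignal_\seqIndex | \noisySignal^\seqLength}} \lossFunc(\estSignalValue_\seqIndex,\noiselessSignal_\seqIndex)$, then note that each summand depends only on its own coordinate. Your extra care goes beyond the paper's one-line conclusion: you read the $\argmax$ as $\argmin$, justify the interchange via Tonelli, and observe that the ``only if'' direction needs a finite optimal value (it genuinely fails if some $\Expectation_{\universalDist_{\noiselessSignal_\seqIndex | \noisySignal^\seqLength}} \lossFunc(\cdot,\noiselessSignal_\seqIndex) \equiv +\infty$). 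These are sound refinements, and they cost nothing in the bounded-loss and Poisson settings where the lemma is actually applied.
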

\begin{proof}
We calculate
\begin{align*}
\Expectation_{\universalDist_{\noiselessSignal^\seqLength | \noisySignal^\seqLength}} \totalLoss(\estSignalValue^\seqLength, \noiselessSignal^\seqLength)
&=
\sum_{\noiselessSignalValue^\seqLength \in \noiselessSignalAlphabet^\seqLength}\left(
  \universalDist_{\noiselessSignal^\seqLength | \noisySignal^\seqLength}(\noiselessSignalValue^\seqLength | \noisySignalValue^\seqLength)
  \sum_{\seqIndex=1}^\seqLength
    \lossFunc(\estSignalValue_\seqIndex, \noiselessSignalValue_\seqIndex)
\right)
\\
&=
\sum_{\seqIndex=1}^\seqLength\left(
  \sum_{\noiselessSignalValue_1 \in \noiselessSignalAlphabet}
  \cdots
  \sum_{\noiselessSignalValue_\seqLength \in \noiselessSignalAlphabet}
    \universalDist_{\noiselessSignal^\seqLength | \noisySignal^\seqLength}(\noiselessSignalValue^\seqLength | \noisySignalValue^\seqLength)
    \lossFunc(\estSignalValue_\seqIndex, \noiselessSignalValue_\seqIndex)
\right)
\\
&=
\sum_{\seqIndex=1}^\seqLength\left(
  \sum_{\noiselessSignalValue_\seqIndex \in \noiselessSignalAlphabet}\left(
    \lossFunc(\estSignalValue_\seqIndex, \noiselessSignalValue_\seqIndex)
    \sum_{\noiselessSignalValue_1 \in \noiselessSignalAlphabet}
    \cdots
    \sum_{\noiselessSignalValue_{\seqIndex-1} \in \noiselessSignalAlphabet}
    \sum_{\noiselessSignalValue_{\seqIndex+1} \in \noiselessSignalAlphabet}
    \cdots
    \sum_{\noiselessSignalValue_\seqLength \in \noiselessSignalAlphabet}
      \universalDist_{\noiselessSignal^\seqLength | \noisySignal^\seqLength}(\noiselessSignalValue^\seqLength | \noisySignalValue^\seqLength)
  \right)
\right)
\\
&=
\sum_{\seqIndex=1}^\seqLength\left(
  \sum_{\noiselessSignalValue_\seqIndex \in \noiselessSignalAlphabet}\left(
    \lossFunc(\estSignalValue_\seqIndex, \noiselessSignalValue_\seqIndex)
    \universalDist_{\noiselessSignal_\seqIndex | \noisySignal^\seqLength}(\noiselessSignalValue_\seqIndex | \noisySignalValue^\seqLength)
  \right)
\right)
\\
&=
\sum_{\seqIndex=1}^\seqLength
  \Expectation_{\universalDist_{\noiselessSignal_\seqIndex | \noisySignal^\seqLength}}
    \lossFunc(\estSignalValue_\seqIndex, \noiselessSignalValue_\seqIndex).
\end{align*}
Since every summand depends on a different entry of $\estSignal^\seqLength$, optimizing the entire sum is equivalent to optimizing every summand.
\end{proof}

\subsection{Denoising of Poisson Signals}
\label{section:poisson}
We consider a denoising problem where the noiseless signal is the number of events generated by an unknown number $\numNoise$ of Poisson sources of known mean $\noiselessIntensity$. The noisy signal is corrupted by additive noise which is the number of events generated by an unknown number $\numNoise$ of Poisson sources of known mean $\noiseIntensity$.
\begin{example}
\label{example:poisson}
\emph{(Denoising of Poisson Signals).}
Consider an instance of Problem~\ref{problem:iid} where $\noiselessSignalAlphabet = \noisySignalAlphabet = \naturals \cup \{0\}$, and $\jointDist_{\noiselessSignal, \noisySignal | \numNoiseless, \numNoise}$ is described as follows:
\begin{itemize}
  \item $\noiselessSignal$ follows a Poisson distribution with mean $\numNoiseless\noiselessIntensity$,
  \item $\additiveNoise$ follows a Poisson distribution with mean $\numNoise\noiseIntensity$,
  \item $\noisySignal = \noiselessSignal + \additiveNoise$.
\end{itemize}
We assume that $\noiselessIntensity, \noiseIntensity \in (0,\infty)$ are known to the denoiser while $\numNoiseless$ and $\numNoise$ are unknown. That is, $\distributionIndexDomain = \naturals^2$, and the parametric family is $(\jointDist_{\noiselessSignal, \noisySignal | \numNoiseless, \numNoise})_{\numNoiseless, \numNoise \in \naturals}$. The true distribution is $\jointDist_{\noiselessSignal, \noisySignal} = \jointDist_{\noiselessSignal, \noisySignal | \trueNumNoiseless, \trueNumNoise}$.
\end{example}
Note that if $\generalNatural$ i.i.d. random variables follow a Poisson distribution with mean $\generalIntensity$, their sum is Poisson distributed with mean $\generalNatural\generalIntensity$. Hence both the noiseless signal and the noise in Example~\ref{example:poisson} can be seen as signals generated by an unknown number of sources with known mean.

\begin{lemma}
The parametric family in Example~\ref{example:poisson} satisfies \eqref{eq:consistency} iff $\noiselessIntensity/\noiseIntensity$ is irrational.
\end{lemma}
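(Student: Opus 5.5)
The proof reduces \eqref{eq:consistency} to a question about when two parameter pairs produce the same Poisson mean for $\noisySignal$. Since $\noiselessSignal$ and $\additiveNoise$ are independent Poisson variables, $\noisySignal$ is Poisson with mean $\numNoiseless\noiselessIntensity + \numNoise\noiseIntensity$. A Poisson law is determined by its mean, so for $(\numNoiselessN{1},\numNoiseN{1})$ and $(\numNoiselessN{2},\numNoiseN{2})$ we have
\[
\jointDist_{\noisySignal | \numNoiselessN{1},\numNoiseN{1}} = \jointDist_{\noisySignal | \numNoiselessN{2},\numNoiseN{2}}
\iff
\numNoiselessN{1}\noiselessIntensity + \numNoiseN{1}\noiseIntensity = \numNoiselessN{2}\noiselessIntensity + \numNoiseN{2}\noiseIntensity .
\]
The joint law $\jointDist_{\noiselessSignal,\noisySignal|\numNoiseless,\numNoise}$ determines the marginal of $\noiselessSignal$, which is Poisson with mean $\numNoiseless\noiselessIntensity$, and $\noiselessIntensity > 0$. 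So distinct parameter pairs always give distinct joint laws: if $\numNoiselessN{1} \neq \numNoiselessN{2}$ the $\noiselessSignal$-marginals differ, and if $\numNoiselessN{1} = \numNoiselessN{2}$ then $\numNoiseN{1} \neq \numNoiseN{2}$ and the $\noisySignal$-marginals differ. Hence \eqref{eq:consistency} holds if and only if no two distinct pairs in $\naturals^2$ give the same noisy mean, that is, if and only if
\[
(\numNoiselessN{1}-\numNoiselessN{2})\noiselessIntensity = (\numNoiseN{2}-\numNoiseN{1})\noiseIntensity
\]
has no solution with $(\numNoiselessN{1},\numNoiseN{1}) \neq (\numNoiselessN{2},\numNoiseN{2})$.

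\emph{Irrational implies consistency.} Suppose a solution with distinct pairs exists. If $\numNoiselessN{1} = \numNoiselessN{2}$, then $\noiseIntensity > 0$ forces $\numNoiseN{1} = \numNoiseN{2}$, a contradiction. Otherwise $\noiselessIntensity/\noiseIntensity = (\numNoiseN{2}-\numNoiseN{1})/(\numNoiselessN{1}-\numNoiselessN{2})$ is rational, contradicting the assumption.

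\emph{Rational implies failure.} Write $\noiselessIntensity/\noiseIntensity = p/q$ with $p,q \in \naturals$; both are positive because the intensities are positive. Take $(\numNoiselessN{1},\numNoiseN{1}) = (1, p+1)$ and $(\numNoiselessN{2},\numNoiseN{2}) = (1+q, 1)$, both in $\naturals^2$. They give the same noisy mean, since $\noiselessIntensity + (p+1)\noiseIntensity = (1+q)\noiselessIntensity + \noiseIntensity$ is equivalent to $p\noiseIntensity = q\noiselessIntensity$. Their $\noiselessSignal$-marginals are Poisson with means $\noiselessIntensity$ and $(1+q)\noiselessIntensity$, which differ, so the joint laws differ and \eqref{eq:consistency} is violated.

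There is no real obstacle in this argument. The only points that need care are that the construction stays inside $\naturals$, with no zero counts, and that both intensities are positive.
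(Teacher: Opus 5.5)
Your proposal is correct and follows essentially the same route as the paper. Both arguments reduce \eqref{eq:consistency} to distinctness of the noisy Poisson means, split on whether the $\noiselessSignal$-counts agree for the irrational direction, and use the same witness pairs $(1,p+1)$ and $(1+q,1)$ for the rational direction. Your marginal-based justification that distinct parameters give distinct joint laws spells out what the paper leaves as ``clearly''.
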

\begin{proof}
Two Poisson distributions are identical iff they have the same mean, so clearly $\jointDist_{\noiselessSignal,\noisySignal | \distributionIndex_1} \neq \jointDist_{\noiselessSignal,\noisySignal | \distributionIndex_2}$ for all $\distributionIndex_1 \neq \distributionIndex_2 \in \distributionIndexDomain$. $\noisySignal$ is the sum of two independent Poisson variables, so it follows a Poisson distribution with mean $\numNoise\noiseIntensity + \numNoiseless\noiselessIntensity$. Assume first that there are $\numNoiselessN{1}, \numNoiselessN{2}, \numNoiseN{1}, \numNoiseN{2} \in \naturals$ such that $(\numNoiselessN{1}, \numNoiseN{1}) \neq (\numNoiselessN{2}, \numNoiseN{2})$ and
$
\numNoiseN{1}\noiseIntensity + \numNoiselessN{1}\noiselessIntensity
=
\numNoiseN{2}\noiseIntensity + \numNoiselessN{2}\noiselessIntensity.
$
If $\numNoiselessN{1}=\numNoiselessN{2}$, we can divide both sides by $\noiseIntensity$ and obtain $\numNoiseN{1}=\numNoiseN{2}$, which is a contradiction. Otherwise, we have
\[
\numNoiseN{1}\noiseIntensity + \numNoiselessN{1}\noiselessIntensity
=
\numNoiseN{2}\noiseIntensity + \numNoiselessN{2}\noiselessIntensity
\Leftrightarrow
\frac{\noiselessIntensity}{\noiseIntensity}
=
\frac{\numNoiseN{1}-\numNoiseN{2}}{\numNoiselessN{2}-\numNoiselessN{1}},
\]
so clearly $\noiselessIntensity/\noiseIntensity$ is rational. On the other hand, if $\noiselessIntensity/\noiseIntensity$ is rational, write $\noiselessIntensity/\noiseIntensity = \generalNatural_1/\generalNatural_2$. So for instance choosing $\numNoiseN{1} := \generalNatural_1 + 1, \numNoiseN{2} := 1, \numNoiselessN{1} = 1, \numNoiselessN{2} = \generalNatural_2 + 1$ yields
$
\numNoiseN{1}\noiseIntensity + \numNoiselessN{1}\noiselessIntensity
=
\numNoiseN{2}\noiseIntensity + \numNoiselessN{2}\noiselessIntensity
$,
hence \eqref{eq:consistency} is not satisfied.
\end{proof}

\begin{remark}
In practice, for every irrational number $\generalReal$ and every $\seqLength$ there will be a rational number $\hat\generalReal$ that approximates $\generalReal$ so closely that for $\seqLength$ observations of $\noiselessSignal$ and $\noisySignal$, it does not matter if $\noiselessIntensity/\noiseIntensity$ is $\generalReal$ or $\hat\generalReal$. Nonetheless, if the floating point accuracy is increased as $\seqLength$ grows and $\noiselessIntensity/\noiseIntensity$ remains irrational and fixed (but evaluated more and more accurately), we can still expect that the limit statements of Theorem~\ref{theorem:iid-denoising}-\ref{item:iid-denoising-countable} hold. For numerical evaluations up to a fixed $\seqLength$ and with fixed floating point precision, however, it is important that we do not have $\noiselessIntensity/\noiseIntensity \approx \generalNatural_1/\generalNatural_2$ where $\generalNatural_1$ and $\generalNatural_2$ are too small numbers. That is, some irrational choices of $\noiselessIntensity/\noiseIntensity$ could be ``almost rational'' in the sense that there is some $\distributionIndex \neq \distributionIndex_0$ with non-negligible $\distributionIndexWeight(\distributionIndex_0)$ which satisfies $\jointDist_{\noiselessSignal,\noisySignal | \distributionIndex_0} \approx \jointDist_{\noiselessSignal,\noisySignal | \distributionIndex}$. As can be seen in the numerical results reported in this section, our choice of $\noiselessIntensity/\noiseIntensity = \sqrt{3/2}$ is not ``almost rational'' in this sense.
\end{remark}

In order to compute \eqref{eq:mismatched-denoising}, we need to be able to compute $\universalDist_{\noiselessSignal^\seqLength | \noisySignal^\seqLength = \noisySignalValue^\seqLength}$ for any given $\noisySignalValue^\seqLength$. Due to Lemma~\ref{lemma:simplified-additive}, in the case of additive losses, it is sufficient to compute $\universalDist_{\noiselessSignal_\seqIndex | \noisySignal^\seqLength = \noisySignalValue^\seqLength}$. In order to be able to evaluate the denoiser's regret, we also need to compute \eqref{eq:optimal-denoising}. Since $\noiselessSignal^\seqLength$ and $\noisySignal^\seqLength$ are i.i.d. for instances of Problem~\ref{problem:iid} (which Example~\ref{example:poisson} satisfies), we only need to be able to compute $\jointDist_{\noiselessSignal | \noisySignal, \distributionIndex_0}$ regardless of additivity of the loss.

\begin{lemma}
\label{lemma:poisson-likelihoods}
\emph{(Likelihoods in the Poisson example).}
For Example~\ref{example:poisson}, we have
\begin{align}
\label{eq:poisson-likelihoods-true}
\jointDist_{\noiselessSignal | \noisySignal, \numNoiseless, \numNoise}(\noiselessSignalValue|\noisySignalValue)
&=
\indicator{\noisySignalValue \geq \noiselessSignalValue}
\frac{\poissonpmf{\noiselessSignalValue}{\numNoiseless\noiselessIntensity}
      \poissonpmf{\noisySignalValue-\noiselessSignalValue}{\numNoise\noiseIntensity}}
     {\poissonpmf{\noisySignalValue}{\numNoiseless\noiselessIntensity+\numNoise\noiseIntensity}}
\\
\label{eq:poisson-likelihoods-mixture}
\universalDist_{\noiselessSignal^\seqLength | \noisySignal^\seqLength}(\noiselessSignalValue^\seqLength | \noisySignalValue^\seqLength)
&=
\indicator{\forall \seqIndex \in \{1, \dots, \seqLength\}:~ \noisySignalValue_\seqIndex \geq \noiselessSignalValue_\seqIndex}
\frac{
  \sum_{\numNoiseless=1}^\infty
  \sum_{\numNoise=1}^\infty
  \distributionIndexWeight(\numNoiseless,\numNoise)
  \prod_{\seqIndex=1}^\seqLength
    \poissonpmf{\noiselessSignalValue_\seqIndex}{\numNoiseless\noiselessIntensity}
    \poissonpmf{\noisySignalValue_\seqIndex-\noiselessSignalValue_\seqIndex}{\numNoise\noiseIntensity}
}{
  \sum_{\numNoiseless=1}^\infty
  \sum_{\numNoise=1}^\infty
  \distributionIndexWeight(\numNoiseless,\numNoise)
  \prod_{\seqIndex=1}^\seqLength
    \poissonpmf{\noisySignalValue_\seqIndex}{\numNoiseless\noiselessIntensity+\numNoise\noiseIntensity}
}
\\
\label{eq:poisson-likelihoods-mixture-marginal}
\universalDist_{\noiselessSignal_\seqIndex | \noisySignal^\seqLength}(\noiselessSignalValue_\seqIndex |  \noisySignalValue^\seqLength)
&=
\indicator{\noisySignalValue_\seqIndex \geq \noiselessSignalValue_\seqIndex}
\frac{
  \sum_{\numNoiseless=1}^\infty
  \sum_{\numNoise=1}^\infty
    \distributionIndexWeight(\numNoiseless,\numNoise)
    \poissonpmf{\noiselessSignalValue_\seqIndex}{\numNoiseless\noiselessIntensity}
    \poissonpmf{\noisySignalValue_\seqIndex-\noiselessSignalValue_\seqIndex}{\numNoise\noiseIntensity}
    \prod_{\substack{\seqIndex'=1\\ \seqIndex' \neq \seqIndex}}^\seqLength
      \poissonpmf{\noisySignalValue_{\seqIndex'}}{\numNoiseless\noiselessIntensity+\numNoise\noiseIntensity}
}{
  \sum_{\numNoiseless=1}^\infty
  \sum_{\numNoise=1}^\infty
  \distributionIndexWeight(\numNoiseless,\numNoise)
  \prod_{\seqIndex=1}^\seqLength
    \poissonpmf{\noisySignalValue_\seqIndex}{\numNoiseless\noiselessIntensity+\numNoise\noiseIntensity}
},
\end{align}
where $\poissonpmf{\generalNatural}{\generalIntensity}$ denotes the Poisson p.m.f., i.e., the probability that a Poisson distributed random variable with mean $\generalIntensity$ equals $\generalNatural$.
\end{lemma}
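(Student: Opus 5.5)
All three identities follow from Bayes' rule applied to the explicit joint p.m.f.s. The only structural facts needed are that $\noisySignal = \noiselessSignal + \additiveNoise$ with $\noiselessSignal$ and $\additiveNoise$ independent Poisson, and that sums of independent Poisson variables are Poisson.

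For \eqref{eq:poisson-likelihoods-true}, I would first compute the joint p.m.f. under parameter $(\numNoiseless,\numNoise)$. The event $\{\noiselessSignal=\noiselessSignalValue,\noisySignal=\noisySignalValue\}$ coincides with $\{\noiselessSignal=\noiselessSignalValue,\additiveNoise=\noisySignalValue-\noiselessSignalValue\}$. This is empty if $\noisySignalValue<\noiselessSignalValue$, since $\additiveNoise\geq 0$. Otherwise, by independence,
\[
\jointDist_{\noiselessSignal,\noisySignal|\numNoiseless,\numNoise}(\noiselessSignalValue,\noisySignalValue)=\indicator{\noisySignalValue\geq\noiselessSignalValue}\poissonpmf{\noiselessSignalValue}{\numNoiseless\noiselessIntensity}\poissonpmf{\noisySignalValue-\noiselessSignalValue}{\numNoise\noiseIntensity}.
\]
The marginal $\jointDist_{\noisySignal|\numNoiseless,\numNoise}(\noisySignalValue)=\poissonpmf{\noisySignalValue}{\numNoiseless\noiselessIntensity+\numNoise\noiseIntensity}$ follows from the Poisson additivity fact recalled just before the lemma. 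This marginal is strictly positive because the means are positive, so dividing gives \eqref{eq:poisson-likelihoods-true}.

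For \eqref{eq:poisson-likelihoods-mixture}, I would use \eqref{eq:universal-dist} with $\distributionIndexDomain=\naturals^2$. The mixture p.m.f. of $(\noiselessSignal^\seqLength,\noisySignal^\seqLength)$ is the $\distributionIndexWeight$-weighted sum over $(\numNoiseless,\numNoise)$ of the $\seqLength$-fold products of the joint p.m.f. above. Pulling the product of indicators out of the sum gives the indicator that $\noisySignalValue_\seqIndex\geq\noiselessSignalValue_\seqIndex$ for all $\seqIndex$. Since $\noiselessSignalAlphabet$ is countable, I would marginalize over $\noiselessSignalValue^\seqLength$ to obtain $\universalDist_{\noisySignal^\seqLength}$. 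Interchanging the sum over $\noiselessSignalValue^\seqLength$ with the sum over parameters is justified by Tonelli, as all terms are nonnegative. Each inner sum then factorizes over $\seqIndex$ and reduces to $\prod_\seqIndex\poissonpmf{\noisySignalValue_\seqIndex}{\numNoiseless\noiselessIntensity+\numNoise\noiseIntensity}$, which is the denominator. The denominator is positive whenever some $\distributionIndexWeight(\numNoiseless,\numNoise)>0$, so the ratio is well defined.

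For \eqref{eq:poisson-likelihoods-mixture-marginal}, I would sum \eqref{eq:poisson-likelihoods-mixture} over all $\noiselessSignalValue_{\seqIndex'}$ with $\seqIndex'\neq\seqIndex$. The denominator does not depend on $\noiselessSignalValue^\seqLength$, so only the numerator changes. After swapping sums again by Tonelli, each factor with $\seqIndex'\neq\seqIndex$ becomes
\[
\sum_{\noiselessSignalValue_{\seqIndex'}=0}^{\noisySignalValue_{\seqIndex'}}\poissonpmf{\noiselessSignalValue_{\seqIndex'}}{\numNoiseless\noiselessIntensity}\poissonpmf{\noisySignalValue_{\seqIndex'}-\noiselessSignalValue_{\seqIndex'}}{\numNoise\noiseIntensity}.
\]
This is the convolution identity $=\poissonpmf{\noisySignalValue_{\seqIndex'}}{\numNoiseless\noiselessIntensity+\numNoise\noiseIntensity}$, the same fact used earlier. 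The $\seqIndex$-th factor and its indicator remain, which yields the stated formula. The only step requiring care is the justification of the sum interchanges, and nonnegativity handles it; the rest is routine bookkeeping.
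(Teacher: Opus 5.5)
Your proof is correct and follows essentially the same route as the paper: compute the joint and noisy-marginal p.m.f.s for each parameter (using Poisson additivity), form the mixture via \eqref{eq:universal-dist}, divide, and marginalize out the components $\seqIndex'\neq\seqIndex$. The only cosmetic difference is in \eqref{eq:poisson-likelihoods-mixture-marginal}: the paper writes $\jointDist_{\noiselessSignal_\seqIndex,\noisySignal^\seqLength|\numNoiseless,\numNoise}$ as the noisy marginal times \eqref{eq:poisson-likelihoods-true}, whereas you apply the convolution identity to each factor directly, and your Tonelli justification of the sum interchanges makes explicit a step the paper leaves implicit.
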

\begin{proof}
From the description of Example~\ref{example:poisson}, it is clear that
\begin{align*}
\jointDist_{\noiselessSignal|\numNoiseless, \numNoise}(\noiselessSignalValue)
&=
\poissonpmf{\noiselessSignalValue}{\numNoiseless\noiselessIntensity}
\\
\jointDist_{\additiveNoise|\numNoiseless, \numNoise}(\additiveNoiseValue)
&=
\poissonpmf{\additiveNoiseValue}{\numNoise\noiseIntensity}.
\end{align*}
We can infer from this
\begin{align}
\nonumber
\jointDist_{\noisySignal|\noiselessSignal, \numNoiseless, \numNoise}(\noisySignalValue|\noiselessSignalValue)
&=
\indicator{\noisySignalValue \geq \noiselessSignalValue}
\poissonpmf{\noisySignalValue-\noiselessSignalValue}{\numNoise\noiseIntensity}
\\
\label{eq:poisson-likelihoods-true-joint}
\jointDist_{\noiselessSignal, \noisySignal | \numNoiseless, \numNoise}(\noiselessSignalValue, \noisySignalValue)
&=
\jointDist_{\noiselessSignal|\numNoiseless, \numNoise}(\noiselessSignalValue)
\jointDist_{\noisySignal|\noiselessSignal, \numNoiseless, \numNoise}(\noisySignalValue|\noiselessSignalValue)
=
\indicator{\noisySignalValue \geq \noiselessSignalValue}
\poissonpmf{\noiselessSignalValue}{\numNoiseless\noiselessIntensity}
\poissonpmf{\noisySignalValue-\noiselessSignalValue}{\numNoise\noiseIntensity}.
\end{align}
Due to the known additivity property of Poisson random variables, we also have
\begin{equation}
\label{eq:poisson-likelihoods-true-noisy}
\jointDist_{\noisySignal|\numNoiseless, \numNoise}(\noisySignalValue)
=
\poissonpmf{\noisySignalValue}{\numNoiseless\noiselessIntensity+\numNoise\noiseIntensity}.
\end{equation}
\eqref{eq:poisson-likelihoods-true} then follows directly from \eqref{eq:poisson-likelihoods-true-joint} and \eqref{eq:poisson-likelihoods-true-noisy}. For the mixture distribution, we use \eqref{eq:poisson-likelihoods-true-joint} to compute
\begin{align}
\nonumber
\universalDist_{\noiselessSignal^\seqLength,\noisySignal^\seqLength}(\noiselessSignalValue^\seqLength, \noisySignalValue^\seqLength)
&=
\sum_{\numNoiseless=1}^\infty
\sum_{\numNoise=1}^\infty
\distributionIndexWeight(\numNoiseless,\numNoise)
\prod_{\seqIndex=1}^\seqLength
  \jointDist_{\noiselessSignal, \noisySignal | \numNoiseless, \numNoise}(\noiselessSignalValue_\seqIndex, \noisySignalValue_\seqIndex)
\\
\nonumber
&=
\sum_{\numNoiseless=1}^\infty
\sum_{\numNoise=1}^\infty
\distributionIndexWeight(\numNoiseless,\numNoise)
\prod_{\seqIndex=1}^\seqLength
  \poissonpmf{\noiselessSignalValue_\seqIndex}{\numNoiseless\noiselessIntensity}
  \poissonpmf{\noisySignalValue_\seqIndex-\noiselessSignalValue_\seqIndex}{\numNoise\noiseIntensity}
  \indicator{\noisySignalValue_\seqIndex \geq \noiselessSignalValue_\seqIndex}
\\
\label{eq:poisson-likelihoods-mixture-joint}
&=
\indicator{\forall \seqIndex \in \{1, \dots, \seqLength\}:~ \noisySignalValue_\seqIndex \geq \noiselessSignalValue_\seqIndex}
\sum_{\numNoiseless=1}^\infty
\sum_{\numNoise=1}^\infty
\distributionIndexWeight(\numNoiseless,\numNoise)
\prod_{\seqIndex=1}^\seqLength
  \poissonpmf{\noiselessSignalValue_\seqIndex}{\numNoiseless\noiselessIntensity}
  \poissonpmf{\noisySignalValue_\seqIndex-\noiselessSignalValue_\seqIndex}{\numNoise\noiseIntensity},
\end{align}
and similarly we use \eqref{eq:poisson-likelihoods-true-noisy} to compute
\begin{align}
\nonumber
\universalDist_{\noisySignal^\seqLength}(\noisySignalValue^\seqLength)
&=
\sum_{\numNoiseless=1}^\infty
\sum_{\numNoise=1}^\infty
\distributionIndexWeight(\numNoiseless,\numNoise)
\prod_{\seqIndex=1}^\seqLength
  \jointDist_{\noisySignal | \numNoiseless, \numNoise}(\noisySignalValue_\seqIndex)
\\
\label{eq:poisson-likelihoods-mixture-noisy}
&=
\sum_{\numNoiseless=1}^\infty
\sum_{\numNoise=1}^\infty
\distributionIndexWeight(\numNoiseless,\numNoise)
\prod_{\seqIndex=1}^\seqLength
  \poissonpmf{\noisySignalValue_\seqIndex}{\numNoiseless\noiselessIntensity+\numNoise\noiseIntensity}.
\end{align}
\eqref{eq:poisson-likelihoods-mixture} then follows directly from \eqref{eq:poisson-likelihoods-mixture-joint} and \eqref{eq:poisson-likelihoods-mixture-noisy}. Finally, we use \eqref{eq:poisson-likelihoods-true} and \eqref{eq:poisson-likelihoods-true-noisy} to obtain
\begin{align*}
\universalDist_{\noiselessSignal_\seqIndex,\noisySignal^\seqLength}(\noiselessSignalValue_\seqIndex, \noisySignalValue^\seqLength)
&=
\sum_{\noiselessSignalValue_1 \in \noiselessSignalAlphabet}
\cdots
\sum_{\noiselessSignalValue_{\seqIndex-1} \in \noiselessSignalAlphabet}
\sum_{\noiselessSignalValue_{\seqIndex+1} \in \noiselessSignalAlphabet}
\cdots
\sum_{\noiselessSignalValue_\seqLength \in \noiselessSignalAlphabet}
  \universalDist_{\noiselessSignal^\seqLength,\noisySignal^\seqLength}(\noiselessSignalValue^\seqLength, \noisySignalValue^\seqLength)
\\
&=
\sum_{\noiselessSignalValue_1 \in \noiselessSignalAlphabet}
\cdots
\sum_{\noiselessSignalValue_{\seqIndex-1} \in \noiselessSignalAlphabet}
\sum_{\noiselessSignalValue_{\seqIndex+1} \in \noiselessSignalAlphabet}
\cdots
\sum_{\noiselessSignalValue_\seqLength \in \noiselessSignalAlphabet}
\sum_{\numNoiseless=1}^\infty
\sum_{\numNoise=1}^\infty
  \distributionIndexWeight(\numNoiseless,\numNoise)
  \jointDist_{\noiselessSignal, \noisySignal | \numNoiseless, \numNoise}^\seqLength(\noiselessSignalValue^\seqLength, \noisySignalValue^\seqLength)
\\
&=
\sum_{\numNoiseless=1}^\infty
\sum_{\numNoise=1}^\infty
  \distributionIndexWeight(\numNoiseless,\numNoise)
  \sum_{\noiselessSignalValue_1 \in \noiselessSignalAlphabet}
  \cdots
  \sum_{\noiselessSignalValue_{\seqIndex-1} \in \noiselessSignalAlphabet}
  \sum_{\noiselessSignalValue_{\seqIndex+1} \in \noiselessSignalAlphabet}
  \cdots
  \sum_{\noiselessSignalValue_\seqLength \in \noiselessSignalAlphabet}
  \jointDist_{\noiselessSignal, \noisySignal | \numNoiseless, \numNoise}^\seqLength(\noiselessSignalValue^\seqLength, \noisySignalValue^\seqLength)
\\
&=
\sum_{\numNoiseless=1}^\infty
\sum_{\numNoise=1}^\infty
  \distributionIndexWeight(\numNoiseless,\numNoise)
  \jointDist_{\noiselessSignal_\seqIndex, \noisySignal^\seqLength | \numNoiseless, \numNoise}(\noiselessSignalValue_\seqIndex, \noisySignalValue^\seqLength)
\\
&=
\sum_{\numNoiseless=1}^\infty
\sum_{\numNoise=1}^\infty
  \distributionIndexWeight(\numNoiseless,\numNoise)
  \jointDist_{\noisySignal^\seqLength | \numNoiseless, \numNoise}^\seqLength(\noisySignalValue^\seqLength)
  \jointDist_{\noiselessSignal_\seqIndex | \noisySignal_\seqIndex, \numNoiseless, \numNoise}(\noiselessSignalValue_\seqIndex, \noisySignalValue_\seqIndex)
\\
&=
\indicator{\noisySignalValue_\seqIndex \geq \noiselessSignalValue_\seqIndex}
\sum_{\numNoiseless=1}^\infty
\sum_{\numNoise=1}^\infty
  \distributionIndexWeight(\numNoiseless,\numNoise)
  \left(
    \prod_{\seqIndex'=1}^\seqLength
      \poissonpmf{\noisySignalValue_{\seqIndex'}}{\numNoiseless\noiselessIntensity+\numNoise\noiseIntensity}
  \right)
  \frac{\poissonpmf{\noiselessSignalValue_\seqIndex}{\numNoiseless\noiselessIntensity}
        \poissonpmf{\noisySignalValue_\seqIndex-\noiselessSignalValue_\seqIndex}{\numNoise\noiseIntensity}}
       {\poissonpmf{\noisySignalValue_\seqIndex}{\numNoiseless\noiselessIntensity+\numNoise\noiseIntensity}}
\\
&=
\indicator{\noisySignalValue_\seqIndex \geq \noiselessSignalValue_\seqIndex}
\sum_{\numNoiseless=1}^\infty
\sum_{\numNoise=1}^\infty
  \distributionIndexWeight(\numNoiseless,\numNoise)
  \poissonpmf{\noiselessSignalValue_\seqIndex}{\numNoiseless\noiselessIntensity}
  \poissonpmf{\noisySignalValue_\seqIndex-\noiselessSignalValue_\seqIndex}{\numNoise\noiseIntensity}
  \prod_{\substack{\seqIndex'=1\\ \seqIndex' \neq \seqIndex}}^\seqLength
    \poissonpmf{\noisySignalValue_{\seqIndex'}}{\numNoiseless\noiselessIntensity+\numNoise\noiseIntensity},
\end{align*}
which together with \eqref{eq:poisson-likelihoods-mixture-noisy} yields \eqref{eq:poisson-likelihoods-mixture-marginal}.
\end{proof}

\begin{figure}
\begin{subfigure}[t]{.45\textwidth}
\begin{tikzpicture}
\begin{axis}[
  ymin=0,
  ymax=1,
  xmin=1,
  xmax=1001,
  xlabel={$\seqLength$},
  ylabel={per-component regret},
  legend pos=north east,
]
  \addlegendimage{color=plot1, thick};
  \addlegendentry{$\numNoiseless=4, \numNoise=1$};
  \addlegendimage{color=plot2, thick};
  \addlegendentry{$\numNoiseless=1, \numNoise=4$};
  \addlegendimage{color=plot4, thick};
  \addlegendentry{$\numNoiseless=2, \numNoise=5$};
  \addlegendimage{color=plot5, thick};
  \addlegendentry{$\numNoiseless=5, \numNoise=2$};
  \addlegendimage{color=black, thick};
  \addlegendentry{truncated};
  \addlegendimage{color=black, thick, dashed};
  \addlegendentry{not truncated};
  \addplot[color=plot1, thick] table[x=seq_len, y=avg_regret_per_comp, col sep=comma] {poisson_results_distancetrunc5_4_1.csv};
  \addplot[color=plot1, thick, dashed] table[x=seq_len, y=avg_regret_per_comp, col sep=comma] {poisson_results_distance_4_1.csv};
  \addplot[color=plot2, thick] table[x=seq_len, y=avg_regret_per_comp, col sep=comma] {poisson_results_distancetrunc5_1_4.csv};
  \addplot[color=plot2, thick, dashed] table[x=seq_len, y=avg_regret_per_comp, col sep=comma] {poisson_results_distance_1_4.csv};
  \addplot[color=plot4, thick] table[x=seq_len, y=avg_regret_per_comp, col sep=comma] {poisson_results_distancetrunc5_2_5.csv};
  \addplot[color=plot4, thick, dashed] table[x=seq_len, y=avg_regret_per_comp, col sep=comma] {poisson_results_distance_2_5.csv};
  \addplot[color=plot5, thick] table[x=seq_len, y=avg_regret_per_comp, col sep=comma] {poisson_results_distancetrunc5_5_2.csv};
  \addplot[color=plot5, thick, dashed] table[x=seq_len, y=avg_regret_per_comp, col sep=comma] {poisson_results_distance_5_2.csv};
\end{axis}
\end{tikzpicture}
\caption{Per-component regret for truncated and untruncated distance loss.}
\end{subfigure}
\hspace{.05\textwidth}
\begin{subfigure}[t]{.45\textwidth}
\begin{tikzpicture}
\begin{axis}[
  ymin=0,
  ymax=2.1,
  xmin=1,
  xmax=1001,
  xlabel={$\seqLength$},
  ylabel={total regret},
  legend pos=north east,
]
  \addplot[color=plot1, thick] table[x=seq_len, y=avg_regret, col sep=comma] {poisson_results_selfinformation_4_1.csv};
  \addplot[color=plot2, thick] table[x=seq_len, y=avg_regret, col sep=comma] {poisson_results_selfinformation_1_4.csv};
  \addplot[color=plot4, thick] table[x=seq_len, y=avg_regret, col sep=comma] {poisson_results_selfinformation_2_5.csv};
  \addplot[color=plot5, thick] table[x=seq_len, y=avg_regret, col sep=comma] {poisson_results_selfinformation_5_2.csv};
  \legend{{$\numNoiseless=4, \numNoise=1$},
          {$\numNoiseless=1, \numNoise=4$},
          {$\numNoiseless=2, \numNoise=5$},
          {$\numNoiseless=5, \numNoise=2$},};
\end{axis}
\end{tikzpicture}
\caption{Total regret for self-information loss.}
\end{subfigure}
\caption{Regrets incurred by the denoiser in Example~\ref{example:poisson} for various values of $\numNoiseless$ and $\numNoise$, each data point averaged over 3,000 runs.}
\label{fig:poisson}
\end{figure}

We have implemented the denoising strategy given by \eqref{eq:mismatched-denoising} and \eqref{eq:universal-dist} with
\[
  \distributionIndexWeight(\numNoiseless, \numNoise)
  :=
  2^{-\numNoiseless-\numNoise},
\]
as well as the optimal denoiser \eqref{eq:optimal-denoising}. For the intensities, we use $\noiselessIntensity = 1$ and $\noiseIntensity = \sqrt{2/3}$. We have used \eqref{eq:poisson-likelihoods-mixture-joint} along with Lemma~\ref{lemma:optimal-self-information} to compute the denoising estimate for the case of self-information loss and \eqref{eq:poisson-likelihoods-true} along with Lemma~\ref{lemma:optimal-self-information} to compute the corresponding optimal denoising estimate. It should be noted that while computing $\estSignal(\noiselessSignalValue^\seqIndex)$ is feasible for any given value of $\noiselessSignalValue^\seqIndex$, due to the number of possibilities for $\noiselessSignalValue^\seqIndex$, it is not tractable to compute the entirety of $\estSignal$ even in the case of moderate $\seqLength$. However, the loss for a concrete realization of $\noiselessSignal^\seqIndex$ can be computed in a simulation. For the case of additive losses, we use \eqref{eq:poisson-likelihoods-mixture-joint} along with Lemma~\ref{lemma:simplified-additive} to compute the denoising estimate (note that it is guaranteed in Example~\ref{example:poisson} that we have $\noiselessSignal_\seqIndex \leq \noisySignal_\seqIndex$ so it is tractable to iterate over all the possible choices in \eqref{eq:mismatched-denoising}) for the \emph{distance loss}
\[
  \lossFunc_\mathrm{dist}(\estSignalValue, \noiselessSignalValue)
  :=
  \absoluteValue{\estSignalValue - \noiselessSignalValue}
\]
as well as for the \emph{truncated distance loss}
\[
  \lossFunc_{\mathrm{dist}, 5}(\estSignalValue, \noiselessSignalValue)
  :=
  \min\Big(\lossFunc_\mathrm{dist}(\estSignalValue, \noiselessSignalValue), 5\Big).
\]
Note that we have a theoretical guarantee on the regret only for the case of the truncated distance loss. Similarly, we use \eqref{eq:poisson-likelihoods-true} and Lemma~\ref{lemma:simplified-additive} to compute the optimal denoising estimate and the regret. The documented Python code used to generate the plots shown in this section is available as an electronic supplement with this paper. In Fig.~\ref{fig:poisson}, we show the per-component regret for the distance loss and the total regret for the self-information loss case. As Theorem~\ref{theorem:iid-denoising} predicts, in the case of truncated distance loss, the per-component regret becomes small as $\seqLength$ increases and in the case of self-information loss, the total regret becomes small as $\seqLength$ increases. We can additionally see that the (untruncated) distance loss effectively behaves quite similar to its truncated version although we do not have a theoretical guarantee for this case. As can be expected from the $\distributionIndexWeight$ we use, there is a rough trend that higher values of $\numNoiseless$ and $\numNoise$ result in slower convergence of the regret.

\section{Acknowledgments}
This research was supported in part by the Australian Research Council under Project DP230101493, and by The University of Melbourne’s Research Computing Services and the Petascale Campus. The authors would also like to thank Samuel H. Wilks for many useful discussions on the topic.

\Gls*{AI} has been used to generate code related to the command line user interface of the numerical simulations discussed in Section~\ref{sec:poisson} and to brainstorm ideas for which probability distributions could be used in Example~\ref{example:poisson}. No technical results and no text that appears in the paper have been generated with the involvement of \gls*{AI}.

\appendix
\subsection{Proof of Theorem~\ref{theorem:oneshot}}
For the sake of completeness, we first prove the following elementary lemma which we will use in our proof of the theorem.
\label{appendix:oneshot-proof}
\begin{lemma}
\label{lemma:conditional-kl}
Let $\generalRVOne$ and $\generalRVTwo$ be two random variables, let $\generalPmeasureOne_\generalRVTwo$ and $\generalPmeasureTwo_\generalRVTwo$ be two probability measures for $\generalRVTwo$, let $\generalPmeasureOne_{\generalRVOne|\generalRVTwo}$ and $\generalPmeasureTwo_{\generalRVOne|\generalRVTwo}$ be two stochastic kernels describing transitions from $\generalRVTwo$ to $\generalRVOne$, let $\generalPmeasureOne_{\generalRVOne,\generalRVTwo}$ be the joint probability measure for $\generalRVOne$ and $\generalRVTwo$ induced by $\generalPmeasureOne_\generalRVTwo$ and $\generalPmeasureOne_{\generalRVOne|\generalRVTwo}$, and let and $\generalPmeasureTwo_{\generalRVOne,\generalRVTwo}$ be the joint probability measure induced by $\generalPmeasureTwo_\generalRVTwo$ and $\generalPmeasureTwo_{\generalRVOne|\generalRVTwo}$. Then
\[
\Expectation_{\generalPmeasureOne_\generalRVTwo}
  \kldiv{\generalPmeasureOne_{\generalRVOne|\generalRVTwo}(\cdot|\generalRVTwo)}{\generalPmeasureTwo_{\generalRVOne|\generalRVTwo}(\cdot|\generalRVTwo)}
=
\kldiv{\generalPmeasureOne_{\generalRVOne,\generalRVTwo}}{\generalPmeasureTwo_{\generalRVOne,\generalRVTwo}}
-
\kldiv{\generalPmeasureOne_{\generalRVTwo}}{\generalPmeasureTwo_{\generalRVTwo}}.
\]
\end{lemma}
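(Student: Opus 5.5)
The plan is to prove the chain rule for Kullback--Leibler divergence directly from Radon--Nikodym derivatives. We split into two cases according to whether $\generalPmeasureOne_{\generalRVOne,\generalRVTwo} \ll \generalPmeasureTwo_{\generalRVOne,\generalRVTwo}$.

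\emph{Absolutely continuous case.} Since $\generalPmeasureOne_{\generalRVOne,\generalRVTwo} \ll \generalPmeasureTwo_{\generalRVOne,\generalRVTwo}$, taking marginals gives $\generalPmeasureOne_\generalRVTwo \ll \generalPmeasureTwo_\generalRVTwo$. Let $f := \rnderivInline{\generalPmeasureOne_{\generalRVOne,\generalRVTwo}}{\generalPmeasureTwo_{\generalRVOne,\generalRVTwo}}$ and $g := \rnderivInline{\generalPmeasureOne_\generalRVTwo}{\generalPmeasureTwo_\generalRVTwo}$. On the set $\{g(\generalRVTwoValue) > 0\}$, define $h(\generalRVOneValue,\generalRVTwoValue) := f(\generalRVOneValue,\generalRVTwoValue)/g(\generalRVTwoValue)$.

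\begin{enumerate}
\item Verify, via disintegration (Fubini for kernels) against rectangles $S_1 \times S_2$, that for $\generalPmeasureOne_\generalRVTwo$-almost every $\generalRVTwoValue$:
\begin{itemize}
\item $\generalPmeasureOne_{\generalRVOne|\generalRVTwo}(\cdot|\generalRVTwoValue) \ll \generalPmeasureTwo_{\generalRVOne|\generalRVTwo}(\cdot|\generalRVTwoValue)$, and
\item the corresponding density is $h(\cdot,\generalRVTwoValue)$.
\end{itemize}
Integrating $\indicator{S_1}(\generalRVOneValue) h \, d\generalPmeasureTwo_{\generalRVOne|\generalRVTwo} \, g \, d\generalPmeasureTwo_\generalRVTwo$ reproduces $\generalPmeasureOne_{\generalRVOne,\generalRVTwo}(S_1\times S_2)$. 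A $\pi$-$\lambda$ argument, together with the almost-sure uniqueness of the disintegration, then identifies the kernels.
\item Write $\log f = \log h + \log g$ holding $\generalPmeasureOne_{\generalRVOne,\generalRVTwo}$-almost surely.
\item Take $\generalPmeasureOne_{\generalRVOne,\generalRVTwo}$-expectations. Disintegrating the first term gives $\Expectation_{\generalPmeasureOne_\generalRVTwo}\kldiv{\generalPmeasureOne_{\generalRVOne|\generalRVTwo}}{\generalPmeasureTwo_{\generalRVOne|\generalRVTwo}}$, and the second term is $\kldiv{\generalPmeasureOne_\generalRVTwo}{\generalPmeasureTwo_\generalRVTwo}$.
\end{enumerate}

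\emph{Main obstacle: integrability.} The step I expect to be the main obstacle is justifying the splitting of expectations, since $\log h$ and $\log g$ can have non-integrable negative parts. I would handle this using the standard fact that the negative part of $\log(d\mu/d\nu)$ is $\mu$-integrable, since $\Expectation_\mu (\log(d\mu/d\nu))^- \le 1/e$. The argument is:
\begin{itemize}
\item Apply this fact to $f$, to $g$, and conditionally to $h$ (for the inner conditional expectation, bounded by $1/e$ uniformly in $\generalRVTwoValue$).
\item All three expectations are then well defined in $(-\infty,\infty]$, with nonnegative values.
\item The identity holds in the extended sense. In particular, if $\kldiv{\generalPmeasureOne_\generalRVTwo}{\generalPmeasureTwo_\generalRVTwo}=\infty$, then both sides are interpreted with the joint divergence infinite. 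This is consistent with the way the paper uses the identity, namely whenever the joint divergence is finite.
\end{itemize}

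\emph{Non-absolutely continuous case.} Here the right-hand side joint term is infinite. I would show that the left-hand side is also infinite. Take a set $S$ with $\generalPmeasureOne_{\generalRVOne,\generalRVTwo}(S)>0=\generalPmeasureTwo_{\generalRVOne,\generalRVTwo}(S)$. There are two sub-cases:
\begin{itemize}
\item If the marginals already fail absolute continuity, the right-hand side reads $\infty-\infty$, which is excluded or interpreted by convention.
\item Otherwise, the sections $S_\generalRVTwoValue$ have positive $\generalPmeasureOne_{\generalRVOne|\generalRVTwo}$-mass but zero $\generalPmeasureTwo_{\generalRVOne|\generalRVTwo}$-mass on a set of $\generalRVTwoValue$ of positive $\generalPmeasureOne_\generalRVTwo$-measure. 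On that set the conditional divergence is $\infty$, so the left-hand side is $\infty$.
\end{itemize}
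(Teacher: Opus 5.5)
Your proposal is correct and is essentially the paper's argument. Both rest on the factorization $d\mu_{A,B}/d\nu_{A,B} = (d\mu_{A|B}/d\nu_{A|B})\cdot(d\mu_B/d\nu_B)$ and take $\mu_{A,B}$-expectations of its logarithm. The paper cites this product rule from Rao. You instead derive it as $h=f/g$, which also yields a jointly measurable conditional density. You additionally handle integrability via the $1/e$ bound on negative parts, and the non-absolutely-continuous case; the paper leaves both implicit.

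The one caveat, which applies equally to the paper, is your $\pi$-$\lambda$ step identifying $h(\cdot,b)\,\nu_{A|B}(\cdot|b)$ with $\mu_{A|B}(\cdot|b)$ for $\mu_B$-a.e.\ $b$. It needs the $\sigma$-algebra on the range of $A$ to be countably generated (e.g., standard Borel), and without such an assumption the stated identity can fail. For example, take $[0,1]$ with the countable--cocountable $\sigma$-algebra and $\mu_B=\nu_B$ uniform. The kernels $\mu_{A|B}(\cdot|b)=\delta_b$ and $\nu_{A|B}(\cdot|b)$ equal to the countable--cocountable measure induce the same joint law, so the right-hand side is $0$ while the left-hand side is $\infty$.
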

\begin{proof}
We have
\begin{align*}
\Expectation_{\generalPmeasureOne_\generalRVTwo}
  \kldiv{\generalPmeasureOne_{\generalRVOne|\generalRVTwo}(\cdot|\generalRVTwo)}{\generalPmeasureTwo_{\generalRVOne|\generalRVTwo}(\cdot|\generalRVTwo)}
\overset{(a)}&{=}
\Expectation_{\generalPmeasureOne_{\generalRVTwo}}
\Expectation_{\generalPmeasureOne_{\generalRVOne|\generalRVTwo}}
  \log\rnderiv{\generalPmeasureOne_{\generalRVOne|\generalRVTwo}(\cdot|\generalRVTwo)}
              {\generalPmeasureTwo_{\generalRVOne|\generalRVTwo}(\cdot|\generalRVTwo)}
      (\generalRVOne)
\\
\overset{(b)}&{=}
\Expectation_{\generalPmeasureOne_{\generalRVOne,\generalRVTwo}}
  \log\left(
    \rnderiv{\generalPmeasureOne_{\generalRVOne|\generalRVTwo}(\cdot|\generalRVTwo)}
            {\generalPmeasureTwo_{\generalRVOne|\generalRVTwo}(\cdot|\generalRVTwo)}
    (\generalRVOne)
    \rnderiv{\generalPmeasureOne_{\generalRVTwo}}
            {\generalPmeasureTwo_{\generalRVTwo}}
    (\generalRVTwo)
  \right)
-
\Expectation_{\generalPmeasureOne_{\generalRVTwo}}
  \log\rnderiv{\generalPmeasureOne_{\generalRVTwo}}
              {\generalPmeasureTwo_{\generalRVTwo}}
      (\generalRVTwo)
\\
\overset{(c)}&{=}
\Expectation_{\generalPmeasureOne_{\generalRVOne,\generalRVTwo}}
  \log
    \rnderiv{\generalPmeasureOne_{\generalRVOne,\generalRVTwo}}
            {\generalPmeasureTwo_{\generalRVOne,\generalRVTwo}}
    (\generalRVOne,\generalRVTwo)
-
\Expectation_{\generalPmeasureOne_{\generalRVTwo}}
  \log\rnderiv{\generalPmeasureOne_{\generalRVTwo}}
              {\generalPmeasureTwo_{\generalRVTwo}}
      (\generalRVTwo)
\\
\overset{(a)}&{=}
\kldiv{\generalPmeasureOne_{\generalRVOne,\generalRVTwo}}{\generalPmeasureTwo_{\generalRVOne,\generalRVTwo}}
-
\kldiv{\generalPmeasureOne_{\generalRVTwo}}{\generalPmeasureTwo_{\generalRVTwo}},
\end{align*}
where in both steps labeled (a) we have used the definition of Kullback-Leibler divergence, in step (b) we have added zero, and in step (c) we have used~\cite[Section 6.3, Proposition 1]{rao2004measure}.
\end{proof}
\begin{proof}[Proof of Theorem~\ref{theorem:oneshot}]
To show \eqref{eq:oneshot-bounded-loss}, we first bound
\begin{align}
\nonumber
&\hphantom{{}={}}
\Expectation_{\jointDist_{\noiselessSignal|\noisySignal}}\left(
  \totalLoss(\estSignal,\noiselessSignal)
  -
  \totalLoss(\estSignalOpt,\noiselessSignal)
\right)
\\
\nonumber
&=
\Expectation_{\jointDist_{\noiselessSignal|\noisySignal}}\left(
  \totalLoss(\estSignal,\noiselessSignal)
  -
  \totalLoss(\estSignalOpt,\noiselessSignal)
\right)^+
-
\Expectation_{\jointDist_{\noiselessSignal|\noisySignal}}\left(
  \totalLoss(\estSignal,\noiselessSignal)
  -
  \totalLoss(\estSignalOpt,\noiselessSignal)
\right)^-
\\
\nonumber
\overset{(a)}&{=}
\int_0^{\totalLoss_{\max}}
  \jointDist_{\noiselessSignal|\noisySignal}\left(
    \totalLoss(\estSignal,\noiselessSignal)
    -
    \totalLoss(\estSignalOpt,\noiselessSignal)
    \geq
    \generalReal
  \right)
d\generalReal
-
\int_0^{\totalLoss_{\max}}
  \jointDist_{\noiselessSignal|\noisySignal}\left(
    \totalLoss(\estSignal,\noiselessSignal)
    -
    \totalLoss(\estSignalOpt,\noiselessSignal)
    \leq
    -\generalReal
  \right)
d\generalReal
\\
\nonumber
\overset{(b)}&{\leq}
\begin{multlined}[t]
\int_0^{\totalLoss_{\max}}\left(
  \universalDist_{\noiselessSignal|\noisySignal}\left(
    \totalLoss(\estSignal,\noiselessSignal)
    -
    \totalLoss(\estSignalOpt,\noiselessSignal)
    \geq
    \generalReal
  \right)
  +
  \tvdist{\jointDist-\universalDist}
\right)d\generalReal
\\-
\int_0^{\totalLoss_{\max}}\left(
  \universalDist_{\noiselessSignal|\noisySignal}\left(
    \totalLoss(\estSignal,\noiselessSignal)
    -
    \totalLoss(\estSignalOpt,\noiselessSignal)
    \leq
    -\generalReal
  \right)
  -
  \tvdist{\jointDist-\universalDist}
\right)d\generalReal
\end{multlined}
\\
\nonumber
&=
\Expectation_{\universalDist_{\noiselessSignal|\noisySignal}}\left(
  \totalLoss(\estSignal,\noiselessSignal)
  -
  \totalLoss(\estSignalOpt,\noiselessSignal)
\right)^+
-
\Expectation_{\universalDist_{\noiselessSignal|\noisySignal}}\left(
  \totalLoss(\estSignal,\noiselessSignal)
  -
  \totalLoss(\estSignalOpt,\noiselessSignal)
\right)^-
+
2\tvdist{\jointDist-\universalDist}
\\
\label{eq:expectation-tvdist}
&=
\Expectation_{\universalDist_{\noiselessSignal|\noisySignal}}\left(
  \totalLoss(\estSignal,\noiselessSignal)
  -
  \totalLoss(\estSignalOpt,\noiselessSignal)
\right)
+
2\tvdist{\jointDist-\universalDist},
\end{align}
where (a) is due to the assumption that $\totalLoss$ only takes values in $[0,\totalLoss_{\max}]$ and (b) holds due to the definition of variational distance. Using this, we can bound

\begin{align*}
\regret
&=
\Expectation_{\jointDist_{\noisySignal}}
\Expectation_{\jointDist_{\noiselessSignal|\noisySignal}}\left(
  \totalLoss(\estSignal,\noiselessSignal)
  -
  \totalLoss(\estSignalOpt,\noiselessSignal)
\right)
\\
\overset{\eqref{eq:expectation-tvdist}}&{\leq}
\Expectation_{\jointDist_{\noisySignal}}\left(
  \Expectation_{\universalDist_{\noiselessSignal|\noisySignal}}\left(
    \totalLoss(\estSignal,\noiselessSignal)
    -
    \totalLoss(\estSignalOpt,\noiselessSignal)
  \right)
  +
  2 \totalLoss_{\max}
  \tvdist{
    \jointDist_{\noiselessSignal|\noisySignal}
    -
    \universalDist_{\noiselessSignal|\noisySignal}
  }
\right)
\\
\overset{\eqref{eq:mismatched-denoising}}&{\leq}
2 \totalLoss_{\max}
\Expectation_{\jointDist_{\noisySignal}}
  \tvdist{
    \jointDist_{\noiselessSignal|\noisySignal}
    -
    \universalDist_{\noiselessSignal|\noisySignal}
  }
\\
\overset{(a)}&{\leq}
2 \totalLoss_{\max}
\Expectation_{\jointDist_{\noisySignal}}\sqrt{
  \frac{1}{2}
  \kldiv{
    \jointDist_{\noiselessSignal|\noisySignal}
  }{
    \universalDist_{\noiselessSignal|\noisySignal}
  }
}
\\
\overset{(b)}&{\leq}
2 \totalLoss_{\max}
\sqrt{
  \frac{1}{2}
  \Expectation_{\jointDist_{\noisySignal}}
  \kldiv{
    \jointDist_{\noiselessSignal|\noisySignal}
  }{
    \universalDist_{\noiselessSignal|\noisySignal}
  }
}
\\
\overset{(c)}&{=}
2 \totalLoss_{\max}
\sqrt{
  \frac{1}{2}\left(
    \kldiv{\jointDist_{\noiselessSignal, \noisySignal}}{\universalDist_{\noiselessSignal, \noisySignal}}
    -
    \kldiv{\jointDist_{\noisySignal}}{\universalDist_{\noisySignal}}
  \right)
}
\end{align*}
where step (a) follows by Pinsker's inequality, in step (b) we have used Jensen's inequality and the concavity of the square root function, and step (c) follows by Lemma~\ref{lemma:conditional-kl}. This concludes the proof of \eqref{eq:oneshot-bounded-loss}.

To show \eqref{eq:oneshot-log-loss}, we first need to determine a $\estSignalOpt$ which satisfies \eqref{eq:optimal-denoising} and a $\estSignal$ that satisfies \eqref{eq:mismatched-denoising}. To this end, we prove the following lemma.

\begin{lemma}
\label{lemma:optimal-self-information}
\emph{(Optimal estimate for the self-information loss).}
If $\totalLoss$ is the self-information loss, then $\estSignalValue = \universalDist_{\noiselessSignal | \noisySignal = \noisySignalValue}$ is the unique minimum of
\[
  \Expectation_{\universalDist_{\noiselessSignal | \noisySignal = \noisySignalValue}} \totalLoss(\estSignalValue, \noiselessSignal)
  =
  -\Expectation_{\universalDist_{\noiselessSignal | \noisySignal = \noisySignalValue}} \log \estSignalValue(\noiselessSignal).
\]
\end{lemma}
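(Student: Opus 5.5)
The plan is the standard Gibbs-inequality argument. Fix $\noisySignalValue$ and abbreviate $q := \universalDist_{\noiselessSignal | \noisySignal = \noisySignalValue}$, which is a p.m.f. on the countable set $\noiselessSignalAlphabetOneShot$ and hence a legitimate element of $\estSignalAlphabetOneShot$. For an arbitrary p.m.f. $\estSignalValue \in \estSignalAlphabetOneShot$, I will decompose the expected loss as
\[
-\Expectation_{q} \log \estSignalValue(\noiselessSignal)
=
-\Expectation_{q} \log q(\noiselessSignal)
+
\Expectation_{q} \log \frac{q(\noiselessSignal)}{\estSignalValue(\noiselessSignal)}
=
\entropy{q} + \kldiv{q}{\estSignalValue}.
\]
The first term does not depend on $\estSignalValue$. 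The second term is nonnegative and vanishes if and only if $\estSignalValue = q$. So $q$ is a minimizer, and it is the unique one.

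The only real obstacle is making this decomposition rigorous when $\entropy{q} = \infty$, which can happen for countably infinite alphabets; then the split into two terms is meaningless. To handle it I would proceed in two cases.

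\emph{Case $\entropy{q} < \infty$.} The identity above holds termwise, using the convention $0 \log 0 = 0$ and restricting sums to the support of $q$. If $\estSignalValue(\noiselessSignalValue) = 0$ for some $\noiselessSignalValue$ with $q(\noiselessSignalValue) > 0$, then both sides equal $+\infty$, and $\estSignalValue$ is clearly not a minimizer. Otherwise, Gibbs' inequality (equivalently, Jensen's inequality applied to the strictly concave logarithm) gives $\kldiv{q}{\estSignalValue} \geq 0$, with equality iff $\estSignalValue = q$ on the support of $q$. Since both are p.m.f.s, equality on the support forces $\estSignalValue = q$ everywhere.

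\emph{Case $\entropy{q} = \infty$.} Here the objective equals $+\infty$ for every $\estSignalValue$: applying the same Jensen argument to the partial sums of $\sum_{\noiselessSignalValue} q(\noiselessSignalValue) \log \frac{1}{\estSignalValue(\noiselessSignalValue)}$ shows this sum is at least $\entropy{q} = \infty$. In this case uniqueness fails trivially. I would therefore note the standing finiteness assumption, under which the expected self-information loss is finite. This is implicit in the regret \eqref{eq:def-regret} being well defined, so the degenerate case can be set aside.

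Finally, applying the lemma with $\universalDist$ replaced by $\jointDist$ identifies $\estSignalOpt = \jointDist_{\noiselessSignal|\noisySignal}$ and $\estSignal = \universalDist_{\noiselessSignal|\noisySignal}$. The regret is then $\Expectation_{\jointDist_{\noisySignal}} \kldiv{\jointDist_{\noiselessSignal|\noisySignal}}{\universalDist_{\noiselessSignal|\noisySignal}}$, and Lemma~\ref{lemma:conditional-kl} turns this into \eqref{eq:oneshot-log-loss}.
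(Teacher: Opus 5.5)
Your proof is correct and is the paper's own argument: the decomposition $-\Expectation_{q}\log\estSignalValue(\noiselessSignal) = \entropy{q} + \kldiv{q}{\estSignalValue}$, with the entropy term independent of $\estSignalValue$, followed by Gibbs' inequality forcing $\estSignalValue = q$.

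Your separate treatment of $\entropy{q} = \infty$ is a genuine refinement that the paper skips. In that case every p.m.f.\ incurs loss $+\infty$, so uniqueness really does require $\entropy{q} < \infty$. However, your claim that this finiteness is implicit in the regret being well defined is not right. The regret involves expectations under $\jointDist$, and a finite regret is compatible with $\entropy{\universalDist_{\noiselessSignal|\noisySignal=\noisySignalValue}} = \infty$. It is cleaner to state $\entropy{q} < \infty$ as an explicit hypothesis for the uniqueness part.
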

\begin{proof}
The proof proceeds similarly as in the online prediction case~\cite{merhav1998universal}. We have
\begin{align*}
-\Expectation_{\universalDist_{\noiselessSignal | \noisySignal = \noisySignalValue}} \log \estSignalValue(\noiselessSignal)
\overset{(a)}&{=}
-
\Expectation_{\universalDist_{\noiselessSignal | \noisySignal = \noisySignalValue}} \log \universalDist_{\noiselessSignal | \noisySignal = \noisySignalValue}(\noiselessSignal)
+
\Expectation_{\universalDist_{\noiselessSignal | \noisySignal = \noisySignalValue}} \log \frac{\universalDist_{\noiselessSignal | \noisySignal = \noisySignalValue}(\noiselessSignal)}{\estSignalValue(\noiselessSignal)}
\\
&=
\entropy{\universalDist_{\noiselessSignal | \noisySignal = \noisySignalValue}}
+
\kldiv{\universalDist_{\noiselessSignal | \noisySignal = \noisySignalValue}}{\estSignalValue},
\end{align*}
where step (a) is an addition of zero and $\entropy{\cdot}$ is Shannon entropy (note that in matching the notation of Kullback-Leibler divergence but different from common notation we indicate the probability distribution instead of the random variable in the argument). The entropy term is independent of $\estSignalValue$ and the Kullback-Leibler divergence takes its minimum value $0$ if and only if $\estSignalValue = \universalDist_{\noiselessSignal | \noisySignal = \noisySignalValue}$.
\end{proof}

So clearly, we have that $\estSignal = \universalDist_{\noiselessSignal|\noisySignal}(\cdot|\noisySignal)$ satisfies \eqref{eq:mismatched-denoising}. Since $\jointDist$ can be substituted for $\universalDist$ in Lemma~\ref{lemma:optimal-self-information}, we can also conclude that $\estSignalOpt = \jointDist_{\noiselessSignal | \noisySignal}(\cdot|\noisySignal)$ satisfies \eqref{eq:optimal-denoising}.

Substituting both of these into definition~\eqref{eq:def-regret}, we obtain
\begin{align*}
\regret
&=
\Expectation_{\jointDist_{\noiselessSignal,\noisySignal}}\left(
  - \log \universalDist_{\noiselessSignal | \noisySignal}(\noiselessSignal|\noisySignal)
  + \log \jointDist_{\noiselessSignal | \noisySignal}(\noiselessSignal|\noisySignal)
\right)
\\
&=
\Expectation_{\jointDist_{\noisySignal}}
\Expectation_{\jointDist_{\noiselessSignal|\noisySignal}}
  \log\frac{\jointDist_{\noiselessSignal | \noisySignal}(\noiselessSignal|\noisySignal)}
           {\universalDist_{\noiselessSignal | \noisySignal}(\noiselessSignal|\noisySignal)}
\\
\overset{(a)}&{=}
\Expectation_{\jointDist_{\noisySignal}}
  \kldiv{\jointDist_{\noiselessSignal | \noisySignal}}{\universalDist_{\noiselessSignal | \noisySignal}}
\\
\overset{(b)}&{=}
\kldiv{\jointDist_{\noiselessSignal, \noisySignal}}{\universalDist_{\noiselessSignal, \noisySignal}}
-
\kldiv{\jointDist_{\noisySignal}}{\universalDist_{\noisySignal}},
\end{align*}
where (a) follows by the definition of Kullback-Leibler divergence and (b) by Lemma~\ref{lemma:conditional-kl}. This concludes the proof of \eqref{eq:oneshot-log-loss}.
\end{proof}

\subsection{Proof of Lemma~\ref{lemma:kldiv-iid}}
\label{appendix:iid-proof}
First, we argue that condition \eqref{eq:consistency} allows us without loss of generality to assume that the slightly stronger condition
\begin{equation}
\label{eq:consistency-stronger}
  \forall \distributionIndex_1 \neq \distributionIndex_2 \in \distributionIndexDomain
  :~
  \jointDist_{\noisySignal | \distributionIndex_1}
  \neq
  \jointDist_{\noisySignal | \distributionIndex_2}
\end{equation}
holds. If \eqref{eq:consistency} holds but \eqref{eq:consistency-stronger} does not, we define an equivalence relation $\approx$ via
\[
  \distributionIndex_1
  \approx
  \distributionIndex_2
  :\Leftrightarrow
  \jointDist_{\noisySignal | \distributionIndex_1} = \jointDist_{\noisySignal | \distributionIndex_2}
\]
and let
\[
\distributionIndexDomain'
:=
\distributionIndexDomain/\approx
=
\left\{
  \left\{ \distributionIndex_2:~ \jointDist_{\noisySignal | \distributionIndex_1} = \jointDist_{\noisySignal | \distributionIndex_2} \right\}:~
  \distributionIndex_1 \in \distributionIndexDomain
\right\}
\]
be the quotient of $\distributionIndexDomain$ with respect to $\approx$. For $\distributionIndex'$, let
\[
\jointDist_{\noiselessSignal, \noisySignal | \distributionIndex'}
:=
\jointDist_{\noiselessSignal, \noisySignal | \distributionIndex},
\]
where $\distributionIndex$ is an element of the equivalence class $\distributionIndex'$ (it does not matter which element we choose from each class due to condition~\eqref{eq:consistency}). We define a p.m.f. $\distributionIndexWeight'$ on $\distributionIndexDomain'$ as
\[
\distributionIndexWeight'(\distributionIndex')
:=
\sum_{\distributionIndex \in \distributionIndex'}
  \distributionIndexWeight(\distributionIndex).
\]
Clearly, $\distributionIndexDomain'$ satisfies condition~\eqref{eq:consistency-stronger} and
\[
\sum_{\distributionIndex \in \distributionIndexDomain}
  \jointDist_{\noiselessSignal, \noisySignal | \distributionIndex}^\seqLength
  \distributionIndexWeight(\distributionIndex)
=
\sum_{\distributionIndex' \in \distributionIndexDomain'}
\sum_{\distributionIndex \in \distributionIndex'}
  \jointDist_{\noiselessSignal, \noisySignal | \distributionIndex}^\seqLength
  \distributionIndexWeight(\distributionIndex)
=
\sum_{\distributionIndex' \in \distributionIndexDomain}
  \jointDist_{\noiselessSignal, \noisySignal | \distributionIndex'}^\seqLength
  \distributionIndexWeight'(\distributionIndex').
\]
Consequently, defining $\universalDist_{\noiselessSignal^\seqLength, \noisySignal^\seqLength}$ as in \eqref{eq:universal-dist} based on $(\jointDist_{\noiselessSignal, \noisySignal | \distributionIndex})_{\distributionIndex \in \distributionIndexDomain}$ is equivalent to using the family $(\jointDist_{\noiselessSignal, \noisySignal | \distributionIndex'})_{\distributionIndex' \in \distributionIndexDomain'}$ instead. Therefore, in the following we assume without loss of generality that \eqref{eq:consistency-stronger} holds.

The following lemma contains the typicality argument used to establish Theorem~\ref{theorem:iid-denoising}.
\begin{lemma}
\label{lemma:kldiv-iid-detail}
If $\cardinality{\noisySignalAlphabet} < \infty$, then for every $\typicalityParameter \in (0,\infty)$, we have
\begin{multline*}
\log\frac{1}{\distributionIndexWeight(\distributionIndex_0)}
-
\kldiv{\jointDist_{\noisySignal}^\seqLength}{\universalDist_{\noisySignal^\seqLength}}
\leq
\frac{1}{\distributionIndexWeight(\distributionIndex_0)}
\left(
  \exp\left(
    -
    2
    \seqLength^{\frac{1}{2}}
  \right)
  +
  \distributionIndexWeight(\complement{\distributionIndexDomain_\seqLength} \setminus \{\distributionIndex_0\})
\right)
\\
+
2
\cardinality{\noisySignalAlphabet}
\left(
  \log\frac{1}{\distributionIndexWeight(\distributionIndex_0)}
  +
  \seqLength
  \log\frac{1}{\jointDistMin{\distributionIndex_0}}
\right)
\exp(-2\seqLength\typicalityParameter^2),
\end{multline*}
where
\begin{equation}
\label{eq:vanishing-parameter-set}
\distributionIndexDomain_\seqLength
:=
\left\{
  \distributionIndex \in \distributionIndexDomain
  :~
  \tvdist{
    \jointDist_{\noisySignal}
    -
    \jointDist_{\noisySignal | \distributionIndex}
  }
  -
  \cardinality{\noisySignalAlphabet}
  \typicalityParameter
  \geq
  \seqLength^{-\frac{1}{4}}
\right\}
\end{equation}
and
\begin{equation}
\label{eq:min-prob}
\jointDistMin{\distributionIndex}
:=
\min_{\substack{\noisySignalValue \in \noisySignalAlphabet \\ \jointDist_{\noisySignal | \distributionIndex}(\noisySignalValue) > 0}} \jointDist_{\noisySignal | \distributionIndex}(\noisySignalValue).
\end{equation}
\end{lemma}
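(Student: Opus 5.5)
The plan is to rewrite the left-hand side as a single expectation and split it into a typical and an atypical part. Since $\universalDist_{\noisySignal^\seqLength} = \distributionIndexWeight(\distributionIndex_0)\jointDist_{\noisySignal}^\seqLength + \sum_{\distributionIndex \neq \distributionIndex_0} \distributionIndexWeight(\distributionIndex)\jointDist_{\noisySignal|\distributionIndex}^\seqLength$ and $\noisySignalAlphabet$ is finite, we can write
\[
\log\frac{1}{\distributionIndexWeight(\distributionIndex_0)} - \kldiv{\jointDist_{\noisySignal}^\seqLength}{\universalDist_{\noisySignal^\seqLength}}
=
\Expectation_{\jointDist_{\noisySignal}^\seqLength} \log\left(1 + \sum_{\distributionIndex\neq\distributionIndex_0} \frac{\distributionIndexWeight(\distributionIndex)}{\distributionIndexWeight(\distributionIndex_0)} \frac{\jointDist_{\noisySignal|\distributionIndex}^\seqLength(\noisySignal^\seqLength)}{\jointDist_{\noisySignal}^\seqLength(\noisySignal^\seqLength)}\right).
\]
Define the letter-typical set $\typicalSet := \{\noisySignalValue^\seqLength : \absoluteValue{\countingFunction(\noisySignalValue^\seqLength,\noisySignalValue)/\seqLength - \jointDist_{\noisySignal}(\noisySignalValue)} \leq \typicalityParameter \ \forall \noisySignalValue \in \noisySignalAlphabet\}$, and split the expectation over $\typicalSet$ and $\complement{\typicalSet}$.

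\emph{Atypical part.} Hoeffding's inequality applied to each letter, combined with a union bound, gives $\jointDist_{\noisySignal}^\seqLength(\complement{\typicalSet}) \leq 2\cardinality{\noisySignalAlphabet}\exp(-2\seqLength\typicalityParameter^2)$. On the support of $\jointDist_{\noisySignal}^\seqLength$ the integrand equals $\log\bigl(\universalDist_{\noisySignal^\seqLength}/(\distributionIndexWeight(\distributionIndex_0)\jointDist_{\noisySignal}^\seqLength)\bigr)$. This is at most $\log\frac{1}{\distributionIndexWeight(\distributionIndex_0)} + \seqLength\log\frac{1}{\jointDistMin{\distributionIndex_0}}$, because $\universalDist_{\noisySignal^\seqLength} \leq 1$ and $\jointDist_{\noisySignal}^\seqLength(\noisySignalValue^\seqLength) \geq \jointDistMin{\distributionIndex_0}^\seqLength$. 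The integrand is also nonnegative. Multiplying these two bounds yields the second term of the claimed inequality.

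\emph{Typical part.} Use $\log(1+z) \leq z$. The $\jointDist_{\noisySignal}^\seqLength$ factors cancel, and the typical part is bounded by $\frac{1}{\distributionIndexWeight(\distributionIndex_0)}\sum_{\distributionIndex\neq\distributionIndex_0}\distributionIndexWeight(\distributionIndex)\jointDist_{\noisySignal|\distributionIndex}^\seqLength(\typicalSet)$. Split the sum according to $\distributionIndexDomain_\seqLength$.
\begin{itemize}
\item For $\distributionIndex \in \complement{\distributionIndexDomain_\seqLength}\setminus\{\distributionIndex_0\}$, bound $\jointDist_{\noisySignal|\distributionIndex}^\seqLength(\typicalSet) \leq 1$. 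This gives the term $\distributionIndexWeight(\complement{\distributionIndexDomain_\seqLength}\setminus\{\distributionIndex_0\})/\distributionIndexWeight(\distributionIndex_0)$.
\item For $\distributionIndex \in \distributionIndexDomain_\seqLength$ (which excludes $\distributionIndex_0$, since its variational distance is $0$), the aim is $\jointDist_{\noisySignal|\distributionIndex}^\seqLength(\typicalSet) \leq \exp(-2\seqLength^{1/2})$. Summing with weights $\distributionIndexWeight(\distributionIndex) \leq 1$ then gives the first term.
\end{itemize}

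The main step is this last probability bound. I will reduce it to a one-dimensional Hoeffding bound as follows. Let $\tvDistMaximizer \subseteq \noisySignalAlphabet$ attain $\tvdist{\jointDist_{\noisySignal}-\jointDist_{\noisySignal|\distributionIndex}} = \jointDist_{\noisySignal}(\tvDistMaximizer) - \jointDist_{\noisySignal|\distributionIndex}(\tvDistMaximizer)$. For $\noisySignalValue^\seqLength \in \typicalSet$, summing the letter constraints over $\tvDistMaximizer$ gives
\[
\frac{1}{\seqLength}\sum_{\seqIndex}\indicator{\tvDistMaximizer}(\noisySignalValue_\seqIndex) \geq \jointDist_{\noisySignal}(\tvDistMaximizer) - \cardinality{\noisySignalAlphabet}\typicalityParameter .
\]
Under $\jointDist_{\noisySignal|\distributionIndex}^\seqLength$, the variables $\indicator{\tvDistMaximizer}(\noisySignal_\seqIndex)$ are i.i.d.\ with mean $\jointDist_{\noisySignal|\distributionIndex}(\tvDistMaximizer)$. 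Hence the event $\typicalSet$ forces an upward deviation of the empirical mean of at least $\tvdist{\jointDist_{\noisySignal}-\jointDist_{\noisySignal|\distributionIndex}} - \cardinality{\noisySignalAlphabet}\typicalityParameter$, which is $\geq \seqLength^{-1/4}$ by the definition \eqref{eq:vanishing-parameter-set} of $\distributionIndexDomain_\seqLength$. One-sided Hoeffding then gives $\exp(-2\seqLength\cdot\seqLength^{-1/2}) = \exp(-2\seqLength^{1/2})$.

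Adding the typical and atypical contributions gives the stated inequality. The care needed is mainly in the measure-zero and support issues, namely restricting to the support of $\jointDist_{\noisySignal}^\seqLength$ in the atypical bound.
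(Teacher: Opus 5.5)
Your proposal is correct and follows essentially the same route as the paper. It uses the same typical/atypical split, $\log(1+z)\le z$ with a change of measure on the typical part, the split of $\distributionIndexDomain\setminus\{\distributionIndex_0\}$ according to $\distributionIndexDomain_\seqLength$, and the reduction via a TV-maximizing set $\tvDistMaximizer$ to a one-sided Hoeffding bound on a binomial count; the per-letter Hoeffding plus union bound on the atypical part is exactly the refinement of Csisz\'ar--K\"orner's Lemma~2.12 that the paper cites. Your direct derivation of the deviation (summing the letter constraints over $\tvDistMaximizer$) is a cleaner version of the paper's chain of inequalities. Using $\le \typicalityParameter$ instead of $<\typicalityParameter$ in the typical set changes nothing.
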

\begin{proof}
We define a typical set
\begin{equation}
\label{eq:typical-set}
\typicalSet
:=
\left\{
  \noisySignalValue^\seqLength \in \noiselessSignalAlphabet^\seqLength
  ~:~
  \forall \noisySignalValue \in \noisySignalAlphabet
  \absoluteValue{
    \frac{1}{\seqLength}
    \countingFunction(\noisySignalValue^\seqLength, \noisySignalValue) - \jointDist_{\noisySignal}(\noisySignalValue)
  }
  <
  \typicalityParameter
\right\}
\end{equation}
and bound
\begin{align}
\log\frac{1}{\distributionIndexWeight(\distributionIndex_0)}
-
\kldiv{\jointDist_{\noisySignal}^\seqLength}{\universalDist_{\noisySignal^\seqLength}}
&=
\Expectation_{\jointDist_{\noisySignal}^\seqLength}\left(
  \log\frac{
             \universalDist_{\noisySignal^\seqLength}(\noisySignal^\seqLength)
           }{
             \distributionIndexWeight(\distributionIndex_0)
             \jointDist_{\noisySignal}^\seqLength(\noisySignal^\seqLength)
           }
\right)
\\
\label{eq:typicality-split}
&=
\Expectation_{\jointDist_{\noisySignal}^\seqLength}\left(
  \indicator{\noisySignal^\seqLength \in \typicalSet}
  \log\frac{
             \universalDist_{\noisySignal^\seqLength}(\noisySignal^\seqLength)
           }{
             \distributionIndexWeight(\distributionIndex_0)
             \jointDist_{\noisySignal}^\seqLength(\noisySignal^\seqLength)
           }
\right)
+
\Expectation_{\jointDist_{\noisySignal}^\seqLength}\left(
  \indicator{\noisySignal^\seqLength \in \complement{\typicalSet}}
  \log\frac{
             \universalDist_{\noisySignal^\seqLength}(\noisySignal^\seqLength)
           }{
             \distributionIndexWeight(\distributionIndex_0)
             \jointDist_{\noisySignal}^\seqLength(\noisySignal^\seqLength)
           }
\right).
\end{align}

We next bound the summands that appear in \eqref{eq:typicality-split} separately. For the second summand in \eqref{eq:typicality-split}, we have
\begin{align}
\nonumber
\Expectation_{\jointDist_{\noisySignal}^\seqLength}\left(
  \indicator{\noisySignal^\seqLength \in \complement{\typicalSet}}
  \log\frac{
             \universalDist_{\noisySignal^\seqLength}(\noisySignal^\seqLength)
           }{
             \distributionIndexWeight(\distributionIndex_0)
             \jointDist_{\noisySignal}^\seqLength(\noisySignal^\seqLength)
           }
\right)
\overset{(a)}&{\leq}
\Expectation_{\jointDist_{\noisySignal}^\seqLength}\left(
  \indicator{\noisySignal^\seqLength \in \complement{\typicalSet}}
  \log\frac{
             1
           }{
             \distributionIndexWeight(\distributionIndex_0)
             \jointDistMin{\distributionIndex_0}^\seqLength
           }
\right)
\\
\nonumber
&=
\jointDist_{\noisySignal}^\seqLength(\complement{\typicalSet})
\log\frac{
            1
          }{
            \distributionIndexWeight(\distributionIndex_0)
            \jointDistMin{\distributionIndex_0}^\seqLength
          }
\\
\label{eq:atypical-term}
\overset{(b)}&{\leq}
2
\cardinality{\noisySignalAlphabet}
\left(
  \log\frac{1}{\distributionIndexWeight(\distributionIndex_0)}
  +
  \seqLength
  \log\frac{1}{\jointDistMin{\distributionIndex_0}}
\right)
\exp(-2\seqLength\typicalityParameter^2)
\end{align}
where in step (a) we have used $\universalDist_{\noisySignal^\seqLength}(\noisySignal^\seqLength) \leq 1$ and \eqref{eq:min-prob}, and in step (b) we have used \cite[Lemma 2.12]{csiszar2011information} along with the refinement pointed out in the unnumbered remark following the lemma.

For the first summand in \eqref{eq:typicality-split}, we have
\begin{align}
\nonumber
\Expectation_{\jointDist_{\noisySignal}^\seqLength}\left(
  \indicator{\noisySignal^\seqLength \in \typicalSet}
  \log\frac{
             \universalDist_{\noisySignal^\seqLength}(\noisySignal^\seqLength)
           }{
             \distributionIndexWeight(\distributionIndex_0)
             \jointDist_{\noisySignal}^\seqLength(\noisySignal^\seqLength)
           }
\right)
\overset{\eqref{eq:universal-dist}}&{=}
\Expectation_{\jointDist_{\noisySignal}^\seqLength}\left(
  \indicator{\noisySignal^\seqLength \in \typicalSet}
  \log\frac{
             \sum_{\distributionIndex \in \distributionIndexDomain}
               \distributionIndexWeight(\distributionIndex)
               \jointDist_{\noisySignal | \distributionIndex}^\seqLength(\noisySignal^\seqLength)
           }{
             \distributionIndexWeight(\distributionIndex_0)
             \jointDist_{\noisySignal}^\seqLength(\noisySignal^\seqLength)
           }
\right)
\\
\nonumber
&=
\Expectation_{\jointDist_{\noisySignal}^\seqLength}\left(
  \indicator{\noisySignal^\seqLength \in \typicalSet}
  \log\left(
    1
    +
    \frac{
           \sum_{\substack{\distributionIndex \in \distributionIndexDomain \\ \distributionIndex \neq \distributionIndex_0}}
             \distributionIndexWeight(\distributionIndex)
             \jointDist_{\noisySignal | \distributionIndex}^\seqLength(\noisySignal^\seqLength)
         }{
           \distributionIndexWeight(\distributionIndex_0)
           \jointDist_{\noisySignal}^\seqLength(\noisySignal^\seqLength)
         }
  \right)
\right)
\\
\nonumber
\overset{(a)}&{\leq}
\Expectation_{\jointDist_{\noisySignal}^\seqLength}\left(
  \indicator{\noisySignal^\seqLength \in \typicalSet}
  \frac{
          \sum_{\substack{\distributionIndex \in \distributionIndexDomain \\ \distributionIndex \neq \distributionIndex_0}}
            \distributionIndexWeight(\distributionIndex)
            \jointDist_{\noisySignal | \distributionIndex}^\seqLength(\noisySignal^\seqLength)
        }{
          \distributionIndexWeight(\distributionIndex_0)
          \jointDist_{\noisySignal}^\seqLength(\noisySignal^\seqLength)
        }
\right)
\\
\nonumber
\overset{(b)}&{=}
\sum_{\substack{\distributionIndex \in \distributionIndexDomain \\ \distributionIndex \neq \distributionIndex_0}}
  \Expectation_{\jointDist_{\noisySignal}^\seqLength}\left(
    \indicator{\noisySignal^\seqLength \in \typicalSet}
    \frac{
              \distributionIndexWeight(\distributionIndex)
              \jointDist_{\noisySignal | \distributionIndex}^\seqLength(\noisySignal^\seqLength)
          }{
            \distributionIndexWeight(\distributionIndex_0)
            \jointDist_{\noisySignal}^\seqLength(\noisySignal^\seqLength)
          }
  \right)
\\
\nonumber
\overset{(c)}&{=}
\sum_{\substack{\distributionIndex \in \distributionIndexDomain \\ \distributionIndex \neq \distributionIndex_0}}
  \Expectation_{\jointDist_{\noisySignal | \distributionIndex}^\seqLength}\left(
    \indicator{\noisySignal^\seqLength \in \typicalSet}
    \frac{
              \distributionIndexWeight(\distributionIndex)
          }{
            \distributionIndexWeight(\distributionIndex_0)
          }
  \right)
\\
\nonumber
&=
\frac{1}{\distributionIndexWeight(\distributionIndex_0)}
\sum_{\substack{\distributionIndex \in \distributionIndexDomain \\ \distributionIndex \neq \distributionIndex_0}}
  \distributionIndexWeight(\distributionIndex)
  \jointDist_{\noisySignal | \distributionIndex}^\seqLength(\typicalSet)
\\
\label{eq:typical-term-1}
&\leq
\frac{1}{\distributionIndexWeight(\distributionIndex_0)}
\left(
  \sum_{\substack{\distributionIndex \in \distributionIndexDomain_\seqLength}}
    \distributionIndexWeight(\distributionIndex)
    \jointDist_{\noisySignal | \distributionIndex}^\seqLength(\typicalSet)
  +
  \distributionIndexWeight\left(\complement{\distributionIndexDomain_\seqLength} \setminus \{\distributionIndex_0\}\right)
\right)
\end{align}
where in step (a) we have used the inequality $\log(1+\generalReal) \leq \generalReal$ for all $\generalReal \in (-1,\infty)$, in (b) we have used linearity of expectation, and in (c) we have used change of measure.

For $\distributionIndex \in \distributionIndexDomain_\seqLength$, we have
\begin{align}
\nonumber
&\hphantom{{}={}}
\jointDist_{\noisySignal | \distributionIndex}^\seqLength(\typicalSet)
\\
\nonumber
\overset{\eqref{eq:typical-set}}&{=}
\jointDist_{\noisySignal | \distributionIndex}^\seqLength\left(
  \left\{
    \noisySignalValue^\seqLength \in \noisySignalAlphabet^\seqLength
    ~:~
    \forall \noisySignalValue \in \noisySignalAlphabet
    \absoluteValue{
      \frac{1}{\seqLength}
      \countingFunction(\noisySignalValue^\seqLength, \noisySignalValue) - \jointDist_{\noisySignal}(\noisySignalValue)
    }
    <
    \typicalityParameter_\seqLength
  \right\}
\right)
\\
\nonumber
\overset{(a)}&{\leq}
\jointDist_{\noisySignal | \distributionIndex}^\seqLength\left(
  \left\{
    \noisySignalValue^\seqLength \in \noisySignalAlphabet^\seqLength
    ~:~
    \forall \noisySignalValue \in \tvDistMaximizer
    \absoluteValue{
      \frac{1}{\seqLength}
      \countingFunction(\noisySignalValue^\seqLength, \noisySignalValue) - \jointDist_{\noisySignal}(\noisySignalValue)
    }
    <
    \typicalityParameter_\seqLength
  \right\}
\right)
\\
\nonumber
&\leq
\jointDist_{\noisySignal | \distributionIndex}^\seqLength\left(
  \left\{
    \noisySignalValue^\seqLength \in \noisySignalAlphabet^\seqLength
    ~:~
    \sum_{\noisySignalValue \in \tvDistMaximizer}
      \absoluteValue{
        \frac{1}{\seqLength}
        \countingFunction(\noisySignalValue^\seqLength, \noisySignalValue) - \jointDist_{\noisySignal}(\noisySignalValue)
      }
    <
    \cardinality{\tvDistMaximizer}
    \typicalityParameter_\seqLength
  \right\}
\right)
\\
\nonumber
&\leq
\jointDist_{\noisySignal | \distributionIndex}^\seqLength\left(
  \left\{
    \noisySignalValue^\seqLength \in \noisySignalAlphabet^\seqLength
    ~:~
    \absoluteValue{
      \sum_{\noisySignalValue \in \tvDistMaximizer}\left(
        \frac{1}{\seqLength}
        \countingFunction(\noisySignalValue^\seqLength, \noisySignalValue) - \jointDist_{\noisySignal}(\noisySignalValue)
      \right)
    }
    <
    \cardinality{\tvDistMaximizer}
    \typicalityParameter_\seqLength
  \right\}
\right)
\\
\nonumber
&=
\jointDist_{\noisySignal | \distributionIndex}^\seqLength\left(
  \left\{
    \noisySignalValue^\seqLength \in \noisySignalAlphabet^\seqLength
    ~:~
    \absoluteValue{
      \sum_{\noisySignalValue \in \tvDistMaximizer}\left(
        \frac{1}{\seqLength}
        \countingFunction(\noisySignalValue^\seqLength, \noisySignalValue) - \jointDist_{\noisySignal | \distributionIndex}(\noisySignalValue)
      \right)
      -
      \sum_{\noisySignalValue \in \tvDistMaximizer}\left(
        \jointDist_{\noisySignal}(\noisySignalValue)
        -
        \jointDist_{\noisySignal | \distributionIndex}(\noisySignalValue)
      \right)
    }
    <
    \cardinality{\tvDistMaximizer}
    \typicalityParameter_\seqLength
  \right\}
\right)
\\
\nonumber
&=
\jointDist_{\noisySignal | \distributionIndex}^\seqLength\left(
  \left\{
    \noisySignalValue^\seqLength \in \noisySignalAlphabet^\seqLength
    ~:~
    \absoluteValue{
      \sum_{\noisySignalValue \in \tvDistMaximizer}\left(
        \frac{1}{\seqLength}
        \countingFunction(\noisySignalValue^\seqLength, \noisySignalValue) - \jointDist_{\noisySignal | \distributionIndex}(\noisySignalValue)
      \right)
      -
      \tvdist{
        \jointDist_{\noisySignal}
        -
        \jointDist_{\noisySignal | \distributionIndex}
      }
    }
    <
    \cardinality{\tvDistMaximizer}
    \typicalityParameter_\seqLength
  \right\}
\right)
\\
\nonumber
&\leq
\jointDist_{\noisySignal | \distributionIndex}^\seqLength\left(
  \left\{
    \noisySignalValue^\seqLength \in \noisySignalAlphabet^\seqLength
    ~:~
    -
    \sum_{\noisySignalValue \in \tvDistMaximizer}\left(
      \frac{1}{\seqLength}
      \countingFunction(\noisySignalValue^\seqLength, \noisySignalValue) - \jointDist_{\noisySignal | \distributionIndex}(\noisySignalValue)
    \right)
    +
    \tvdist{
      \jointDist_{\noisySignal}
      -
      \jointDist_{\noisySignal | \distributionIndex}
    }
    <
    \cardinality{\tvDistMaximizer}
    \typicalityParameter_\seqLength
  \right\}
\right)
\\
\nonumber
&=
\jointDist_{\noisySignal | \distributionIndex}^\seqLength\left(
  \left\{
    \noisySignalValue^\seqLength \in \noisySignalAlphabet^\seqLength
    ~:~
    \frac{1}{\seqLength}
    \sum_{\noisySignalValue \in \tvDistMaximizer}
      \countingFunction(\noisySignalValue^\seqLength, \noisySignalValue)
    -
    \jointDist_{\noisySignal | \distributionIndex}(\tvDistMaximizer)
    >
    \tvdist{
      \jointDist_{\noisySignal}
      -
      \jointDist_{\noisySignal | \distributionIndex}
    }
    -
    \cardinality{\tvDistMaximizer}
    \typicalityParameter_\seqLength
  \right\}
\right)
\\
\nonumber
\overset{(b)}&{\leq}
\exp\left(
  -
  2
  \seqLength
  \left(
    \tvdist{
      \jointDist_{\noisySignal}
      -
      \jointDist_{\noisySignal | \distributionIndex}
    }
    -
    \cardinality{\tvDistMaximizer}
    \typicalityParameter_\seqLength
  \right)^2
\right)
\\
\label{eq:typical-term-2}
&\leq
\exp\left(
  -
  2
  \seqLength
  \left(
    \tvdist{
      \jointDist_{\noisySignal}
      -
      \jointDist_{\noisySignal | \distributionIndex}
    }
    -
    \cardinality{\noisySignalAlphabet}
    \typicalityParameter_\seqLength
  \right)^2
\right),
\end{align}
where in step (a) we introduce $\tvDistMaximizer \subset \noisySignalAlphabet$ with the property $\jointDist_{\noisySignal}(\tvDistMaximizer) - \jointDist_{\noisySignal | \distributionIndex}(\tvDistMaximizer) = \tvdist{\jointDist_{\noisySignal} - \jointDist_{\noisySignal | \distributionIndex}}$ (the existence of such a set is ensured by the definition of variational distance), and step (b) holds due to Hoeffding's inequality (e.g., \cite[Theorem 2.8]{boucheron2013concentration}) because
$
\sum_{\noisySignalValue \in \tvDistMaximizer}
  \countingFunction(\noisySignalValue^\seqLength, \noisySignalValue)
$
is a binomial random variable with $\seqLength$ trials and success probability $\jointDist_{\noisySignal | \distributionIndex}(\tvDistMaximizer)$.

We substitute \eqref{eq:typical-term-2} into \eqref{eq:typical-term-1} and obtain

\begin{align}
\nonumber
&\hphantom{{}={}}
\Expectation_{\jointDist_{\noisySignal}^\seqLength}\left(
  \indicator{\noisySignal^\seqLength \in \typicalSet}
  \log\frac{
             \universalDist_{\noisySignal^\seqLength}(\noisySignal^\seqLength)
           }{
             \distributionIndexWeight(\distributionIndex_0)
             \jointDist_{\noisySignal}^\seqLength(\noisySignal^\seqLength)
           }
\right)
\\
\nonumber
&\leq
\frac{1}{\distributionIndexWeight(\distributionIndex_0)}
\left(
  \sum_{\substack{\distributionIndex \in \distributionIndexDomain_\seqLength}}
    \distributionIndexWeight(\distributionIndex)
    \exp\left(
      -
      2
      \seqLength
      \left(
        \tvdist{
          \jointDist_{\noisySignal}
          -
          \jointDist_{\noisySignal | \distributionIndex}
        }
        -
        \cardinality{\noisySignalAlphabet}
        \typicalityParameter_\seqLength
      \right)^2
    \right)
  +
  \distributionIndexWeight(\complement{\distributionIndexDomain_\seqLength} \setminus \{\distributionIndex_0\})
\right)
\\
\nonumber
&\leq
\frac{1}{\distributionIndexWeight(\distributionIndex_0)}
\left(
  \sum_{\substack{\distributionIndex \in \distributionIndexDomain_\seqLength}}
    \distributionIndexWeight(\distributionIndex)
    \exp\left(
      -
      2
      \seqLength^\frac{1}{2}
    \right)
  +
  \distributionIndexWeight(\complement{\distributionIndexDomain_\seqLength} \setminus \{\distributionIndex_0\})
\right)
\\
\label{eq:typical-term-3}
&\leq
\frac{1}{\distributionIndexWeight(\distributionIndex_0)}
\left(
  \exp\left(
    -
    2
    \seqLength^\frac{1}{2}
  \right)
  +
  \distributionIndexWeight(\complement{\distributionIndexDomain_\seqLength} \setminus \{\distributionIndex_0\})
\right)
\end{align}

The lemma follows by substituting \eqref{eq:atypical-term} and \eqref{eq:typical-term-3} into \eqref{eq:typicality-split}.
\end{proof}

If $\distributionIndexDomain$ and $\noisySignalAlphabet$ are finite, we bound
\begin{align}
\nonumber
\kldiv{\jointDist_{\noiselessSignal, \noisySignal}^\seqLength}{\universalDist_{\noiselessSignal^\seqLength, \noisySignal^\seqLength}}
-
\kldiv{\jointDist_{\noisySignal}^\seqLength}{\universalDist_{\noisySignal^\seqLength}}
\overset{(a)}&{=}
\Expectation_{\jointDist_{\noiselessSignal, \noisySignal}^\seqLength}
  \log\frac{
             \jointDist_{\noiselessSignal, \noisySignal}^\seqLength(\noiselessSignal^\seqLength, \noisySignal^\seqLength)
           }{
             \universalDist_{\noiselessSignal^\seqLength, \noisySignal^\seqLength}(\noiselessSignal^\seqLength, \noisySignal^\seqLength)
           }
-
\kldiv{\jointDist_{\noisySignal}^\seqLength}{\universalDist_{\noisySignal^\seqLength}}
\\
\label{eq:kldiff-bound}
\overset{(b)}&{\leq}
\log\frac{1}{\distributionIndexWeight(\distributionIndex_0)}
-
\kldiv{\jointDist_{\noisySignal}^\seqLength}{\universalDist_{\noisySignal^\seqLength}}
\end{align}
where in step (a) we have substituted the definition of Kullback-Leibler divergence, for (b) we have used that due to \eqref{eq:universal-dist}, it holds that
\[
\universalDist_{\noiselessSignal^\seqLength, \noisySignal^\seqLength}(\noiselessSignal^\seqLength, \noisySignal^\seqLength)
\geq
\distributionIndexWeight(\distributionIndex_0)
\jointDist_{\noiselessSignal, \noisySignal}^\seqLength(\noiselessSignalValue^\seqLength, \noisySignalValue^\seqLength)
\]
pointwise. We bound this term further by invoking Lemma~\ref{lemma:kldiv-iid-detail} with $\typicalityParameter := \seqLength^{-\frac{1}{4}}$. Substituting this choice into the condition in \eqref{eq:vanishing-parameter-set} and rearranging terms, we get
\[
  \seqLength^{\frac{1}{4}}
  \geq
  \frac{1 + \cardinality{\noisySignalAlphabet}}{
    \tvdist{
      \jointDist_{\noisySignal}
      -
      \jointDist_{\noisySignal | \distributionIndex}
    }
  }.
\]
Hence $\distributionIndexDomain_\seqLength = \distributionIndexDomain \setminus \{\distributionIndex_0\}$ for all
\begin{equation}
\label{eq:seqlength-condition}
  \seqLength
  \geq
  \left(
    \frac{1 + \cardinality{\noisySignalAlphabet}}{
      \min_{\distributionIndex \in \distributionIndexDomain \setminus \{\distributionIndex_0\}}
      \tvdist{
        \jointDist_{\noisySignal}
        -
        \jointDist_{\noisySignal | \distributionIndex}
      }
    }
  \right)^4
\end{equation}
So for all $\seqLength$ that satisfy \eqref{eq:seqlength-condition}, we have
\begin{equation*}
\kldiv{\jointDist_{\noiselessSignal, \noisySignal}^\seqLength}{\universalDist_{\noiselessSignal^\seqLength, \noisySignal^\seqLength}}
-
\kldiv{\jointDist_{\noisySignal}^\seqLength}{\universalDist_{\noisySignal^\seqLength}}
\leq
\exp\left(
  -
  2
  \seqLength^{\frac{1}{2}}
\right)
\left(
  \frac{1}{\distributionIndexWeight(\distributionIndex_0)}
  +
  2
  \cardinality{\noisySignalAlphabet}
  \left(
    \log\frac{1}{\distributionIndexWeight(\distributionIndex_0)}
    +
    \seqLength
    \log\frac{1}{\jointDistMin{\distributionIndex_0}}
  \right)
\right).
\end{equation*}
For large enough $\seqLength$, this is upper bounded by $\exp(-\sqrt{T})$, proving the theorem for this case.

Next, we consider the case that $\distributionIndexDomain$ is countably infinite and $(\noisySignalAlphabet, \noisySignalSigmaAlgebra)$ is a general measurable space. We will first construct a sequence of finite alphabets $(\noisySignalAlphabet_\seqLength)_{\seqLength \in \naturals}$ and a corresponding sequence of quantization maps $\quantizer_\seqLength: \noisySignalAlphabet \rightarrow \noisySignalAlphabet_\seqLength$. For every $\distributionIndex \in \distributionIndexDomain \setminus \{\distributionIndex_0\}$, due to \eqref{eq:consistency-stronger}, we can choose some set $\quantizationSetIntermediate_\distributionIndex \in \noiselessSignalSigmaAlgebra$ with $\jointDist_{\noiselessSignal | \distributionIndex}(\quantizationSetIntermediate_\distributionIndex) \neq \jointDist_{\noiselessSignal}(\quantizationSetIntermediate_\distributionIndex)$. Then let $\quantizationSet_\distributionIndex := \quantizationSetIntermediate_\distributionIndex$ if $\jointDist_{\noiselessSignal}(\quantizationSetIntermediate_\distributionIndex) > 0$ and let $\quantizationSet_\distributionIndex := \noisySignalAlphabet \setminus \quantizationSetIntermediate_\distributionIndex$ otherwise. We have thus constructed a family $(\quantizationSet_\distributionIndex)_{\distributionIndex \in \distributionIndexDomain \setminus \{\distributionIndex_0\}}$ of measurable subsets of $\noisySignalAlphabet$ with the property
\begin{equation}
\label{eq:quantizer-property}
\forall \distributionIndex \in \distributionIndexDomain \setminus \{\distributionIndex_0\}:~
\jointDist_{\noisySignal}(\quantizationSet_\distributionIndex) > 0,~
\jointDist_{\noisySignal}(\quantizationSet_\distributionIndex) \neq \jointDist_{\noisySignal | \distributionIndex}(\quantizationSet_\distributionIndex).
\end{equation}
We next define a sequence $(\quantizerLength_\seqLength)_{\seqLength \in \naturals}$ via
\begin{equation}
\label{eq:quantizer-length}
\quantizerLength_\seqLength
:=
\max\left\{
  \quantizerLength
  \in
  \left\{
    1,
    \dots,
    \floor{\frac{1}{8} \log_2 \seqLength}
  \right\}
  :~
  \min_{(\quantizerBit_1, \dots, \quantizerBit_\quantizerLength) \in \minset_\quantizerLength}
    \jointDist_{\noisySignal}\left(
      \quantizationSet_{\distributionIndex_1}[\quantizerBit_1]
      \cap
      \cdots
      \cap
      \quantizationSet_{\distributionIndex_\quantizerLength}[\quantizerBit_\quantizerLength]
    \right)
  \geq
  \exp(-\seqLength)
\right\},
\end{equation}
where $(\distributionIndex_\distributionIndexIndex)_{\distributionIndexIndex \in \naturals}$ is an enumeration of $\distributionIndexDomain \setminus \{\distributionIndex_0\}$,
\[
\quantizationSet_{\distributionIndex}[\quantizerBit]
:=
\begin{cases}
  \quantizationSet_{\distributionIndex}, &\quantizerBit=1 \\
  \complement{\quantizationSet_{\distributionIndex}}, &\text{otherwise,}
\end{cases}
\]
and
\[
\minset_\quantizerLength
:=
\left\{
  (\quantizerBit_1, \dots, \quantizerBit_\quantizerLength)
  \in
  \{0,1\}^\quantizerLength
  :~
  \jointDist_{\noisySignal}\left(
    \quantizationSet_{\distributionIndex_1}[\quantizerBit_1]
    \cap
    \cdots
    \cap
    \quantizationSet_{\distributionIndex_\quantizerLength}[\quantizerBit_\quantizerLength]
  \right)
  >
  0
\right\}.
\]
\begin{lemma}
\label{lemma:minset-nonempty}
For every $\quantizerLength \in \naturals$, we have $\minset_\quantizerLength \neq \emptyset$.
\end{lemma}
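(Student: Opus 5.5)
The cells $\quantizationSet_{\distributionIndex_1}[\quantizerBit_1] \cap \cdots \cap \quantizationSet_{\distributionIndex_\quantizerLength}[\quantizerBit_\quantizerLength]$, for $(\quantizerBit_1,\dots,\quantizerBit_\quantizerLength)$ ranging over $\{0,1\}^\quantizerLength$, form a finite measurable partition of $\noisySignalAlphabet$. The plan is to use this together with $\jointDist_{\noisySignal}(\noisySignalAlphabet)=1$: if every cell had probability $0$, their total would be $0$, which is impossible, so some cell has positive probability and lies in $\minset_\quantizerLength$.

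First I would verify the partition property by induction on $\quantizerLength$. For $\quantizerLength=1$, the cells are $\quantizationSet_{\distributionIndex_1}$ and $\complement{\quantizationSet_{\distributionIndex_1}}$, which are disjoint, measurable, and cover $\noisySignalAlphabet$. For the inductive step, each cell at level $\quantizerLength$ is split into its intersections with $\quantizationSet_{\distributionIndex_{\quantizerLength+1}}$ and with $\complement{\quantizationSet_{\distributionIndex_{\quantizerLength+1}}}$. These two pieces are disjoint and their union is the original cell, so the refined family is again a partition. Measurability holds because each $\quantizationSet_\distributionIndex$ was chosen measurable, and the complements and finite intersections of measurable sets are measurable.

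Next, by finite additivity,
\[
\sum_{(\quantizerBit_1,\dots,\quantizerBit_\quantizerLength)\in\{0,1\}^\quantizerLength}
\jointDist_{\noisySignal}\left(
  \quantizationSet_{\distributionIndex_1}[\quantizerBit_1]
  \cap
  \cdots
  \cap
  \quantizationSet_{\distributionIndex_\quantizerLength}[\quantizerBit_\quantizerLength]
\right)
=
\jointDist_{\noisySignal}(\noisySignalAlphabet)
=
1 .
\]
This is a finite sum of $2^\quantizerLength$ nonnegative terms equal to $1$, so at least one term is strictly positive, in fact at least $2^{-\quantizerLength}$. The corresponding tuple $(\quantizerBit_1,\dots,\quantizerBit_\quantizerLength)$ therefore belongs to $\minset_\quantizerLength$, and so $\minset_\quantizerLength\neq\emptyset$.

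I do not expect any real obstacle here; the only care needed is in stating the partition property precisely. The lemma also does not need property \eqref{eq:quantizer-property}. That property matters only for the later use of $\minset_\quantizerLength$ in \eqref{eq:quantizer-length}, where the minimum over $\minset_\quantizerLength$ must be taken over a nonempty set.
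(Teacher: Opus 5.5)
Your proof is correct, and it takes a more direct route than the paper's. The paper argues by induction along a single path. It takes $(\quantizerBit_1,\dots,\quantizerBit_{\quantizerLength-1}) \in \minset_{\quantizerLength-1}$ and appends a bit. If the intersection with $\quantizationSet_{\distributionIndex_\quantizerLength}$ has positive $\jointDist_{\noisySignal}$-measure it appends $1$. Otherwise the intersection with the complement carries the full mass of the parent cell, and it appends $0$. The paper's base case $(1) \in \minset_1$ is the only place it invokes \eqref{eq:quantizer-property}.

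You instead apply finite additivity to the whole level-$\quantizerLength$ partition at once and finish with a pigeonhole step. This is shorter and gives the quantitative fact that some cell has mass at least $2^{-\quantizerLength}$. It also shows that \eqref{eq:quantizer-property} is not needed: the paper's induction could just as well start from $\jointDist_{\noisySignal}(\noisySignalAlphabet)=1$.

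The paper's inductive step records a bit more structure, namely that every element of $\minset_{\quantizerLength-1}$ extends to an element of $\minset_\quantizerLength$. That refinement is not used later, and it also follows at once from your additivity argument applied to a single positive-mass parent cell.
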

\begin{proof}
We prove the lemma by induction. If $\quantizerLength = 1$, then $1 \in \minset_\quantizerLength$ by property \eqref{eq:quantizer-property}. If $\quantizerLength > 1$, we invoke the induction hypothesis to obtain $(\quantizerBit_1, \dots, \quantizerBit_{\quantizerLength-1}) \in \minset_{\quantizerLength-1}$, i.e.,
$
\jointDist_{\noisySignal}\left(
  \quantizationSet_{\distributionIndex_1}[\quantizerBit_1]
  \cap
  \cdots
  \cap
  \quantizationSet_{\distributionIndex_{\quantizerLength-1}}[\quantizerBit_{\quantizerLength-1}]
\right)
>
0
$.
If
$
\jointDist_{\noisySignal}\left(
  \quantizationSet_{\distributionIndex_1}[\quantizerBit_1]
  \cap
  \cdots
  \cap
  \quantizationSet_{\distributionIndex_{\quantizerLength-1}}[\quantizerBit_{\quantizerLength-1}]
  \cap
  \quantizationSet_{\distributionIndex_{\quantizerLength}}
\right)
>
0
$,
then we have $(\quantizerBit_1, \dots, \quantizerBit_{\quantizerLength-1}, 1) \in \minset_\quantizerLength$. If, on the other hand,
$
\jointDist_{\noisySignal}\left(
  \quantizationSet_{\distributionIndex_1}[\quantizerBit_1]
  \cap
  \cdots
  \cap
  \quantizationSet_{\distributionIndex_{\quantizerLength-1}}[\quantizerBit_{\quantizerLength-1}]
  \cap
  \quantizationSet_{\distributionIndex_{\quantizerLength}}
\right)
=
0
$,
we have
\begin{align*}
&\hphantom{{}={}}
\jointDist_{\noisySignal}\left(
  \quantizationSet_{\distributionIndex_1}[\quantizerBit_1]
  \cap
  \cdots
  \cap
  \quantizationSet_{\distributionIndex_{\quantizerLength-1}}[\quantizerBit_{\quantizerLength-1}]
  \cap
  \complement{\quantizationSet_{\distributionIndex_{\quantizerLength}}}
\right)
\\
&=
\jointDist_{\noisySignal}\left(
  \left(
    \quantizationSet_{\distributionIndex_1}[\quantizerBit_1]
    \cap
    \cdots
    \cap
    \quantizationSet_{\distributionIndex_{\quantizerLength-1}}[\quantizerBit_{\quantizerLength-1}]
  \right)
  \setminus
  \left(
    \quantizationSet_{\distributionIndex_1}[\quantizerBit_1]
    \cap
    \cdots
    \cap
    \quantizationSet_{\distributionIndex_{\quantizerLength-1}}[\quantizerBit_{\quantizerLength-1}]
    \cap
    \quantizationSet_{\distributionIndex_{\quantizerLength}}
  \right)
\right)
\\
&=
\jointDist_{\noisySignal}\left(
  \quantizationSet_{\distributionIndex_1}[\quantizerBit_1]
  \cap
  \cdots
  \cap
  \quantizationSet_{\distributionIndex_{\quantizerLength-1}}[\quantizerBit_{\quantizerLength-1}]
\right)
-
\jointDist_{\noisySignal}\left(
  \quantizationSet_{\distributionIndex_1}[\quantizerBit_1]
  \cap
  \cdots
  \cap
  \quantizationSet_{\distributionIndex_{\quantizerLength-1}}[\quantizerBit_{\quantizerLength-1}]
  \cap
  \quantizationSet_{\distributionIndex_{\quantizerLength}}
\right)
\\
&=
\jointDist_{\noisySignal}\left(
  \quantizationSet_{\distributionIndex_1}[\quantizerBit_1]
  \cap
  \cdots
  \cap
  \quantizationSet_{\distributionIndex_{\quantizerLength-1}}[\quantizerBit_{\quantizerLength-1}]
\right)
\\
&>
0
\end{align*}
and hence $(\quantizerBit_1, \dots, \quantizerBit_{\quantizerLength-1}, 0) \in \minset_\quantizerLength$.
\end{proof}

\begin{lemma}
\label{lemma:seqlength}
$(\quantizerLength_\seqLength)_{\seqLength \in \naturals}$ is a nondecreasing sequence with $\lim_{\seqLength\rightarrow\infty} \quantizerLength_\seqLength = \infty$.
\end{lemma}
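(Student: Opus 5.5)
Both claims are read off the definition \eqref{eq:quantizer-length}. Write
\[
m_\quantizerLength := \min_{(\quantizerBit_1, \dots, \quantizerBit_\quantizerLength) \in \minset_\quantizerLength} \jointDist_{\noisySignal}\left(\quantizationSet_{\distributionIndex_1}[\quantizerBit_1] \cap \cdots \cap \quantizationSet_{\distributionIndex_\quantizerLength}[\quantizerBit_\quantizerLength]\right).
\]
This is a minimum over a finite set; it is nonempty by Lemma~\ref{lemma:minset-nonempty}, and every element of $\minset_\quantizerLength$ has positive measure by definition. Hence $m_\quantizerLength > 0$ for every $\quantizerLength \in \naturals$, and $m_\quantizerLength$ does not depend on $\seqLength$. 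The condition defining $\quantizerLength_\seqLength$ is
\[
\quantizerLength \leq \floor{\tfrac{1}{8}\log_2\seqLength} \quad\text{and}\quad m_\quantizerLength \geq \exp(-\seqLength).
\]
The feasible set is therefore $F_\seqLength := \{\quantizerLength \leq \floor{\frac{1}{8}\log_2\seqLength} : m_\quantizerLength \geq \exp(-\seqLength)\}$, and $\quantizerLength_\seqLength = \max F_\seqLength$. For small $\seqLength$ this set is empty, because $\floor{\frac18\log_2\seqLength} = 0$. So I will either adopt the convention that the maximum of an empty set is $1$ (or $0$), or restrict attention to $\seqLength \geq 2^8$. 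Either way monotonicity is unaffected, and I will state this explicitly.

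\emph{Monotonicity.} Both constraints only get weaker as $\seqLength$ grows: $\floor{\frac18\log_2\seqLength}$ is nondecreasing, and $\exp(-\seqLength)$ is decreasing. Hence $F_\seqLength \subseteq F_{\seqLength+1}$, and so $\quantizerLength_\seqLength \leq \quantizerLength_{\seqLength+1}$. In the empty case I use the convention value, which is no larger than any later value.

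\emph{Divergence.} Fix an arbitrary $L \in \naturals$. Set $c := \min_{\quantizerLength \leq L} m_\quantizerLength > 0$; this is a minimum of finitely many positive numbers. Choose $\seqLength_0$ large enough that both $\floor{\frac18\log_2\seqLength_0} \geq L$ and $\exp(-\seqLength_0) \leq c$ hold. Then $L \in F_\seqLength$ for all $\seqLength \geq \seqLength_0$, so $\quantizerLength_\seqLength \geq L$. Since $L$ was arbitrary, $\quantizerLength_\seqLength \to \infty$.

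The only step needing care is that $m_\quantizerLength$ is strictly positive. This is exactly where Lemma~\ref{lemma:minset-nonempty} and the finiteness of $\{0,1\}^\quantizerLength$ are used: without them the minimum could be over an empty set, or the infimum could be zero. Everything else is elementary.
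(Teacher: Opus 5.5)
Your proof is correct and follows essentially the same route as the paper: monotonicity comes from the nesting of the feasible sets, and divergence comes from the fact that the minimum over the nonempty finite set $B_\ell$ is a strictly positive constant independent of $T$ (the paper argues by contradiction from a bounded sequence, while you show directly that each fixed $L$ is eventually feasible). Your explicit convention for an empty feasible set covers an edge case the paper passes over, but emptiness can persist beyond $T \geq 2^8$ (whenever $m_1 < \exp(-T)$), so it is the convention, not the restriction to $T \geq 2^8$, that actually handles it.
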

\begin{proof}
It is clear from definition \eqref{eq:quantizer-length} that the set of which $\quantizerLength_{\seqLength+1}$ is the maximum is a superset of the set of which $\quantizerLength_\seqLength$ is the maximum and therefore $\quantizerLength_{\seqLength+1} \geq \quantizerLength_\seqLength$. Assume, towards a contradiction, that $\lim_{\seqLength\rightarrow\infty} \quantizerLength_\seqLength \neq \infty$. Since we have already shown that the sequence is nondecreasing, this implies that there is some $\quantizerLength_{\max} \in \naturals$ such that $\quantizerLength_\seqLength = \quantizerLength_{\max}$ for all sufficiently large $\seqLength \in \naturals$. Examining definition \eqref{eq:quantizer-length} once more, we can see that it implies
\begin{equation}
\label{eq:quantizer-length-implication}
  \min_{(\quantizerBit_1, \dots, \quantizerBit_{\quantizerLength_{\max}+1}) \in \minset_{\quantizerLength_{\max}+1}}
    \jointDist_{\noisySignal}\left(
      \quantizationSet_{\distributionIndex_1}[\quantizerBit_1])
      \cap
      \cdots
      \cap
      \quantizationSet_{\distributionIndex_\quantizerLength}[\quantizerBit_{\quantizerLength_{\max}+1}]
    \right)
  <
  \exp(-\seqLength)
\end{equation}
for all sufficiently large $\seqLength$. Since the left hand side is independent of $\seqLength$ and due to Lemma~\ref{lemma:minset-nonempty} $\minset_{\quantizerLength_{\max}+1} \neq \emptyset$, this means that there is $(\quantizerBit_1, \dots, \quantizerBit_{\quantizerLength_{\max}+1}) \in \{0,1\}^{\quantizerLength_{\max}+1}$ with
$
\jointDist_{\noisySignal}(
  \quantizationSet_{\distributionIndex_1}[\quantizerBit_1]
  \cap
  \cdots
  \cap
  \quantizationSet_{\distributionIndex_\quantizerLength}[\quantizerBit_{\quantizerLength_{\max}+1}]
)
>
0
$
but
$
\jointDist_{\noisySignal}(
  \quantizationSet_{\distributionIndex_1}[\quantizerBit_1]
  \cap
  \cdots
  \cap
  \jointDist_{\noisySignal}(\quantizationSet_{\distributionIndex_\quantizerLength}[\quantizerBit_{\quantizerLength_{\max}+1}]
)
<
\exp(-\seqLength)
$
for all sufficiently large $\seqLength \in \naturals$. This is the contradiction that was needed to prove the lemma.
\end{proof}
For every $\seqLength \in \naturals$, define $\noisySignalAlphabet_\seqLength := \{0,1\}^{\quantizerLength_\seqLength}$ and
\begin{equation}
\label{eq:quantizer}
\quantizer_\seqLength:~
\noisySignalAlphabet \rightarrow \noisySignalAlphabet_\seqLength,~
\noisySignalValue
\mapsto
\left(
  \indicator{\noisySignalValue \in \quantizationSet_{\distributionIndex_1}}, \dots, \indicator{\noisySignalValue \in \quantizationSet_{\distributionIndex_{\quantizerLength_\seqLength}}}
\right).
\end{equation}
We also define
\[
\jointDistMinLen{\seqLength}
:=
\min_{\substack{
  \noisySignalValue_\mathrm{quant} \in \noisySignalAlphabet_\seqLength\\
  \jointDist_{\noisySignal}(\quantizer_\seqLength(\noisySignal) = \noisySignalValue_\mathrm{quant}) > 0
}}
  \jointDist_{\noisySignal}(\quantizer_\seqLength(\noisySignal) = \noisySignalValue_\mathrm{quant})
\]
and note that \eqref{eq:quantizer-length} ensures
\begin{equation}
\label{eq:alphabet-bounds}
\jointDistMinLen{\seqLength} \geq \exp(-\seqLength) \text{ and } \cardinality{\noisySignalAlphabet_\seqLength} \leq 2^{\frac{1}{8} \log_2 \seqLength} = \seqLength^\frac{1}{8}.
\end{equation}
\begin{lemma}
\label{lemma:indexset-quantization}
Defining
\begin{align}
\label{eq:vanishing-parameter-set-infinite}
\distributionIndexDomain_\seqLength
&:=
\left\{
  \distributionIndex \in \distributionIndexDomain
  :~
  \tvdist{
    \jointDist_{\quantizer_\seqLength(\noisySignal)}
    -
    \jointDist_{\quantizer_\seqLength(\noisySignal) | \distributionIndex}
  }
  -
  \cardinality{\noisySignalAlphabet_\seqLength}
  \seqLength^{-\frac{1}{4}}
  \geq
  \seqLength^{-\frac{1}{4}}
\right\}
\\
\nonumber
\distributionIndexDomain_\seqLength'
&:=
\left\{
  \distributionIndex \in \distributionIndexDomain
  :~
  \tvdist{
    \jointDist_{\quantizer_\seqLength(\noisySignal)}
    -
    \jointDist_{\quantizer_\seqLength(\noisySignal) | \distributionIndex}
  }
  -
  \seqLength^{-\frac{1}{8}}
  \geq
  \seqLength^{-\frac{1}{4}}
\right\},
\end{align}
we have $\distributionIndexDomain_\seqLength' \subseteq \distributionIndexDomain_\seqLength$ and
\begin{equation}
\label{eq:indexset-limit}
\bigcup_{\seqLength \in \naturals} \distributionIndexDomain_\seqLength' = \distributionIndexDomain \setminus \{\distributionIndex_0\}.
\end{equation}
\end{lemma}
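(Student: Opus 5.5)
The statement has two parts, and I will prove them in turn. The inclusion $\distributionIndexDomain_\seqLength' \subseteq \distributionIndexDomain_\seqLength$ follows directly from \eqref{eq:alphabet-bounds}. Since $\cardinality{\noisySignalAlphabet_\seqLength} \leq \seqLength^{\frac{1}{8}}$, we get $\cardinality{\noisySignalAlphabet_\seqLength}\seqLength^{-\frac{1}{4}} \leq \seqLength^{-\frac{1}{8}}$. Hence
\[
\tvdist{\jointDist_{\quantizer_\seqLength(\noisySignal)} - \jointDist_{\quantizer_\seqLength(\noisySignal)|\distributionIndex}} - \seqLength^{-\frac{1}{8}} \geq \seqLength^{-\frac{1}{4}}
\]
implies the defining inequality of $\distributionIndexDomain_\seqLength$.

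For \eqref{eq:indexset-limit}, the inclusion ``$\subseteq$'' is immediate. Both $\jointDist_{\quantizer_\seqLength(\noisySignal)}$ and $\jointDist_{\quantizer_\seqLength(\noisySignal)|\distributionIndex_0}$ are the same measure, so for $\distributionIndex_0$ the variational distance is $0$. The defining inequality would then require $0 \geq \seqLength^{-\frac{1}{4}} + \seqLength^{-\frac{1}{8}} > 0$, which is impossible. Therefore $\distributionIndex_0 \notin \distributionIndexDomain_\seqLength'$ for every $\seqLength$.

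For ``$\supseteq$'', fix $\distributionIndex \in \distributionIndexDomain \setminus \{\distributionIndex_0\}$. By the enumeration, $\distributionIndex = \distributionIndex_\distributionIndexIndex$ for some $\distributionIndexIndex \in \naturals$.
\begin{enumerate}
\item By Lemma~\ref{lemma:seqlength}, there is $\seqLength_0$ such that $\quantizerLength_\seqLength \geq \distributionIndexIndex$ for all $\seqLength \geq \seqLength_0$.
\item For such $\seqLength$, the $\distributionIndexIndex$-th coordinate of $\quantizer_\seqLength$ is $\indicator{\noisySignalValue \in \quantizationSet_{\distributionIndex}}$. So the event $\{\noisySignalValue: \noisySignalValue \in \quantizationSet_\distributionIndex\}$ is the preimage under $\quantizer_\seqLength$ of the set of strings whose $\distributionIndexIndex$-th bit equals $1$.
\item By the definition of variational distance as a maximum over measurable sets (applying it to the signed measure and to its negative), this gives the lower bound
\[
\tvdist{\jointDist_{\quantizer_\seqLength(\noisySignal)} - \jointDist_{\quantizer_\seqLength(\noisySignal)|\distributionIndex}} \geq \absoluteValue{\jointDist_{\noisySignal}(\quantizationSet_\distributionIndex) - \jointDist_{\noisySignal|\distributionIndex}(\quantizationSet_\distributionIndex)} =: c_\distributionIndex,
\]
valid for every $\seqLength \geq \seqLength_0$.
\item By \eqref{eq:quantizer-property}, $c_\distributionIndex > 0$, and $c_\distributionIndex$ does not depend on $\seqLength$.
\item Since $\seqLength^{-\frac{1}{8}} + \seqLength^{-\frac{1}{4}} \to 0$, we can choose $\seqLength \geq \seqLength_0$ with $c_\distributionIndex \geq \seqLength^{-\frac{1}{8}} + \seqLength^{-\frac{1}{4}}$. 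For this $\seqLength$, $\distributionIndex \in \distributionIndexDomain_\seqLength'$.
\end{enumerate}

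The main point needing care is the claim that the variational distance of the quantized (pushforward) measures dominates the discrepancy on the single set $\quantizationSet_\distributionIndex$. This requires checking that $\quantizationSet_\distributionIndex$ lies in the $\sigma$-algebra generated by $\quantizer_\seqLength$ once $\quantizerLength_\seqLength \geq \distributionIndexIndex$. That check is a direct consequence of \eqref{eq:quantizer}. The sign issue (the paper's $\tvdist{\cdot}$ is a one-sided maximum) is harmless: for a difference of two probability measures, $\mu(S)-\nu(S) = \nu(S^c)-\mu(S^c)$, so the maximum over sets equals the maximum of the absolute discrepancy.
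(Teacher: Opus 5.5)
Your proof is correct. The first inclusion uses \eqref{eq:alphabet-bounds} exactly as the paper does, but you prove the covering part of \eqref{eq:indexset-limit} by a different, more direct route.

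\textbf{The paper's route.} It argues by contradiction. If some $\theta_\ell$ were never in $\Theta_T'$, then $a_T := \|P_{q_T(\tilde X)} - P_{q_T(\tilde X)|\theta_\ell}\|_{\mathrm{TV}}$ would tend to $0$. Since $q_T(\tilde X)$ is a function of $q_{T+1}(\tilde X)$, the data processing principle makes $(a_T)$ nondecreasing. A nonnegative nondecreasing sequence with limit $0$ is identically $0$, and $a_T = 0$ contradicts \eqref{eq:quantizer-property} once $\ell_T \ge \ell$.

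\textbf{Your route.} You observe that as soon as $\ell_T \ge n$, the set $A_{\theta_n}$ is a preimage under $q_T$. The variational distance is then bounded below by the $T$-independent constant $c_\theta > 0$, so you need neither monotonicity nor data processing.

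\textbf{What your version buys.} It is more elementary, and your argument actually gives more than your step 5 states: $\theta \in \Theta_T'$ for \emph{all} sufficiently large $T$, not just for some $T$. That is exactly what is needed afterwards to conclude $w\bigl((\Theta_T')^c \setminus \{\theta_0\}\bigr) \to 0$, by dominated convergence over the countable $\Theta$. In the paper, that step relies implicitly on the sets $\Theta_T'$ being nested increasing. Nestedness does follow from the monotonicity of $a_T$ (applied to every $\theta$), but the paper never states it. You also explicitly check that $\theta_0 \notin \Theta_T'$, i.e., the inclusion ``$\subseteq$'' in \eqref{eq:indexset-limit}, which the paper's proof leaves unaddressed.
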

\begin{proof}
$\distributionIndexDomain_\seqLength' \subseteq \distributionIndexDomain_\seqLength$ follows directly from $\cardinality{\noisySignalAlphabet_\seqLength} \leq \seqLength^\frac{1}{8}$. For \eqref{eq:indexset-limit}, we assume, towards a contradiction, that there is $\quantizerLength \in \naturals$ such that $\distributionIndex_\quantizerLength \notin \distributionIndexDomain_\seqLength'$ for all $\seqLength \in \naturals$. Defining the auxiliary sequence $(\auxSequence_\seqLength)_{\seqLength \in \naturals}$ via
\[
  \auxSequence_\seqLength
  :=
  \tvdist{
    \jointDist_{\quantizer_\seqLength(\noisySignal)}
    -
    \jointDist_{\quantizer_\seqLength(\noisySignal) | \distributionIndex_\quantizerLength}
  },
\]
this means that $\auxSequence_\seqLength < \seqLength^{-\frac{1}{8}} + \seqLength^{-\frac{1}{4}}$ for all $\seqLength \in \naturals$ and in particular implies
$
\lim_{\seqLength \rightarrow \infty}
\auxSequence_\seqLength
=
0
$.
It can be seen in \eqref{eq:quantizer} that $\quantizer_\seqLength(\noisySignal)$ is a function of $\quantizer_{\seqLength+1}(\noisySignal)$ (simply discard the last coordinate), so by the data processing principle, $(\auxSequence_\seqLength)_{\seqLength \in \naturals}$ is a nondecreasing sequence. However, it clearly also is a nonnegative sequence, so the limit statement actually implies $\auxSequence_\seqLength = 0$ for all $\seqLength \in \naturals$. Lemma~\ref{lemma:seqlength} allows us to fix $\seqLength$ with $\quantizerLength_\seqLength \geq \quantizerLength$. $\auxSequence_\seqLength = 0$ and \eqref{eq:quantizer} then imply
\[
\jointDist_{\noisySignal}(\quantizationSet_{\distributionIndex_\quantizerLength})
=
\jointDist_{\quantizer_\seqLength(\noisySignal)}\left(\{0,1\}^{\quantizerLength-1} \times \{1\} \times \{0,1\}^{\quantizerLength_\seqLength-\quantizerLength}\right)
=
\jointDist_{\quantizer_\seqLength(\noisySignal) | \distributionIndex_\quantizerLength}\left(\{0,1\}^{\quantizerLength-1} \times \{1\} \times \{0,1\}^{\quantizerLength_\seqLength-\quantizerLength}\right)
=
\jointDist_{\noisySignal | \distributionIndex_\quantizerLength}(\quantizationSet_{\distributionIndex_\quantizerLength}),
\]
contradicting \eqref{eq:quantizer-property}.
\end{proof}

We now have all the technical ingredient necessary to finish the proof of Lemma~\ref{lemma:kldiv-iid}. Specifically, we can bound
\begin{align}
\nonumber
&\hphantom{{}={}}
\kldiv{\jointDist_{\noiselessSignal, \noisySignal}^\seqLength}{\universalDist_{\noiselessSignal^\seqLength, \noisySignal^\seqLength}}
-
\kldiv{\jointDist_{\noisySignal}^\seqLength}{\universalDist_{\noisySignal^\seqLength}}
\\
\nonumber
\overset{(a)}&{=}
\Expectation_{\jointDist_{\noiselessSignal,\noisySignal}^\seqLength}
  \log\rnderiv{\jointDist_{\noiselessSignal, \noisySignal}^\seqLength}{\universalDist_{\noiselessSignal^\seqLength, \noisySignal^\seqLength}}
    (\noiselessSignal^\seqLength,\noisySignal^\seqLength)
-
\kldiv{\jointDist_{\noisySignal}^\seqLength}{\universalDist_{\noisySignal^\seqLength}}
\\
\nonumber
\overset{\eqref{eq:universal-dist}}&{=}
-
\Expectation_{\jointDist_{\noiselessSignal,\noisySignal}^\seqLength}
  \log\left(
    \sum_{\distributionIndex\in\distributionIndexDomain}\distributionIndexWeight(\distributionIndex)\rnderiv{\jointDist_{\noiselessSignal, \noisySignal | \distributionIndex}^\seqLength}{\jointDist_{\noiselessSignal, \noisySignal}^\seqLength}
      (\noiselessSignal^\seqLength,\noisySignal^\seqLength)
  \right)
-
\kldiv{\jointDist_{\noisySignal}^\seqLength}{\universalDist_{\noisySignal^\seqLength}}
\\
\nonumber
&\leq
-
\Expectation_{\jointDist_{\noiselessSignal,\noisySignal}^\seqLength}
  \log\left(
    \distributionIndexWeight(\distributionIndex_0)\rnderiv{\jointDist_{\noiselessSignal, \noisySignal | \distributionIndex_0}^\seqLength}{\jointDist_{\noiselessSignal, \noisySignal}^\seqLength}
      (\noiselessSignal^\seqLength,\noisySignal^\seqLength)
  \right)
-
\kldiv{\jointDist_{\noisySignal}^\seqLength}{\universalDist_{\noisySignal^\seqLength}}
\\
\nonumber
\overset{(b)}&{\leq}
\log\frac{1}{\distributionIndexWeight(\distributionIndex_0)}
-
\kldiv{\jointDist_{\noisySignal}^\seqLength}{\universalDist_{\noisySignal^\seqLength}}
\\
\nonumber
\overset{(c)}&{\leq}
\log\frac{1}{\distributionIndexWeight(\distributionIndex_0)}
-
\kldiv{\jointDist_{\quantizer_\seqLength(\noisySignal)}^\seqLength}{\universalDist_{\quantizer_\seqLength(\noisySignal^\seqLength)}}
\\
\nonumber
\overset{(d)}&{\leq}
\frac{1}{\distributionIndexWeight(\distributionIndex_0)}
\left(
  \exp\left(
    -
    2
    \seqLength^{\frac{1}{2}}
  \right)
  +
  \distributionIndexWeight(\complement{\distributionIndexDomain_\seqLength} \setminus \{\distributionIndex_0\})
\right)
+
2
\cardinality{\noisySignalAlphabet_\seqLength}
\left(
  \log\frac{1}{\distributionIndexWeight(\distributionIndex_0)}
  +
  \seqLength
  \log\frac{1}{\jointDistMinLen{\seqLength}}
\right)
\exp(-2\seqLength^{\frac{1}{2}})
\\
\label{eq:kldiff-bound-infinite}
\overset{(e)}&{\leq}
\frac{1}{\distributionIndexWeight(\distributionIndex_0)}
\left(
  \exp\left(
    -
    2
    \seqLength^{\frac{1}{2}}
  \right)
  +
  \distributionIndexWeight(\complement{\distributionIndexDomain_\seqLength'} \setminus \{\distributionIndex_0\})
\right)
+
2
\seqLength^{\frac{1}{8}}
\left(
  \log\frac{1}{\distributionIndexWeight(\distributionIndex_0)}
  +
  \seqLength^2
\right)
\exp(-2\seqLength^{\frac{1}{2}})
\end{align}
where step (a) uses the definition of Kullabck-Leibler divergence, step (b) uses $\jointDist_{\noiselessSignal, \noisySignal | \distributionIndex_0}=\jointDist_{\noiselessSignal, \noisySignal}$, (c) follows by the data processing principle (the notation $\quantizer_\seqLength(\noisySignal^\seqLength)$ is to be understood as a component-wise application of the map $\quantizer_\seqLength$), in (d) we have invoked Lemma~\ref{lemma:kldiv-iid-detail} with $\typicalityParameter := \seqLength^{-\frac{1}{4}}$ (note that the definitions \eqref{eq:vanishing-parameter-set} and \eqref{eq:vanishing-parameter-set-infinite} match in this case), and in (e) we have used Lemma~\ref{lemma:indexset-quantization} and \eqref{eq:alphabet-bounds}. It follows from Lemma~\ref{lemma:indexset-quantization} that
\[
\lim_{\seqLength \rightarrow \infty} \distributionIndexWeight(\complement{\distributionIndexDomain_\seqLength'} \setminus \{\distributionIndex_0\})
=
0,
\]
so \eqref{eq:kldiff-bound-infinite} vanishes as $\seqLength \rightarrow \infty$, concluding the proof of Lemma~\ref{lemma:kldiv-iid}.

\subsection{Proof of Theorem~\ref{theorem:iid-denoising-continuous}}
\label{appendix:iid-proof-continuous}

We use the main result of~\cite{clarke2002information} to bound the Kullback-Leibler divergences $\kldiv{\jointDist_{\noiselessSignal, \noisySignal}^\seqLength}{\universalDist_{\noiselessSignal^\seqLength, \noisySignal^\seqLength}}$ and $\kldiv{\jointDist_{\noisySignal}^\seqLength}{\universalDist_{\noisySignal^\seqLength}}$ so we can substitute them into Theorem~\ref{theorem:oneshot}. First, we note that condition \ref{item:cb-smoothness-density} in Theorem~\ref{theorem:iid-denoising-continuous} ensures that~\cite[Condition 1]{clarke2002information} is satisfied for both families $(\jointDist_{\noiselessSignal, \noisySignal | \distributionIndex})_{\distributionIndex \in \distributionIndexDomain}$ and $(\jointDist_{\noisySignal | \distributionIndex})_{\distributionIndex \in \distributionIndexDomain}$ while conditions \ref{item:cb-smoothness-kl} and \ref{item:cb-positive-density} ensure that~\cite[Condition 2]{clarke2002information} is also satisfied. Condition~\ref{item:cb-soundness} corresponds to the soundness condition in~\cite{clarke2002information} which via \cite[Theorem 2.2]{clarke2002information} ensures that also \cite[Condition 3]{clarke2002information} is satisfied. This allows us to invoke~\cite[Theorem 2.1]{clarke2002information} and via the remark after the theorem and condition~\ref{item:cb-fisher-regularity} of Theorem~\ref{theorem:iid-denoising-continuous}, we get the expression of \cite[eq. (1.4)]{clarke2002information}
\begin{align*}
\kldiv{\jointDist_{\noiselessSignal, \noisySignal}^\seqLength}{\universalDist_{\noiselessSignal^\seqLength, \noisySignal^\seqLength}}
&=
\frac{\cbdimension}{2}
\log\frac{\seqLength}{2\pi \eulersNumber}
+
\frac{1}{2} \log \det \fisherInformation_{\noiselessSignal, \noisySignal}(\distributionIndex_0)
+
\landauo(1) \\
\kldiv{\jointDist_{\noisySignal}^\seqLength}{\universalDist_{\noisySignal^\seqLength}}
&=
\frac{\cbdimension}{2}
\log\frac{\seqLength}{2\pi \eulersNumber}
+
\frac{1}{2} \log \det \fisherInformation_{\noisySignal}(\distributionIndex_0)
+
\landauo(1).
\end{align*}
Putting this together and substituting into Theorem~\ref{theorem:oneshot}, we obtain
\[
\regret
=
\kldiv{\jointDist_{\noiselessSignal, \noisySignal}^\seqLength}{\universalDist_{\noiselessSignal^\seqLength, \noisySignal^\seqLength}}
-
\kldiv{\jointDist_{\noisySignal}^\seqLength}{\universalDist_{\noisySignal^\seqLength}}
=
\frac{1}{2} \log \frac{\det \fisherInformation_{\noiselessSignal, \noisySignal}(\distributionIndex_0)}
                      {\det \fisherInformation_{\noisySignal}(\distributionIndex_0)}
+
\landauo(1),
\]
and the theorem follows by dividing both sides by $\seqLength$.

\bibliographystyle{plain}
\bibliography{denoising_references}

\end{document}